\newcommand{\eg}{e.g., }
\newcommand{\ie}{i.e., }
\newcommand{\SO}{\textcolor{red}{\textsf{so}}}
\newcommand{\VIS}{\textcolor{red}{\textsf{vis}}}
\newcommand{\CO}{\textcolor{red}{\textsf{co} }}
\newcommand{\axpre}{\mathsf{Prefix}}
\newcommand{\axconf}{\mathsf{Conflict}}
\newcommand{\axser}{\mathsf{Serializability}}
\newcommand{\set}[1]{{\{{#1}\}}}
\newcommand{\tup}[1]{{\left\langle{#1}\right\rangle}}
\renewcommand{\implies}{\Rightarrow}
\newcommand{\Var}{\mathsf{Var}}
\newcommand{\Val}{\mathsf{Val}}
\newcommand{\OId}{\mathsf{OpId}}
\newcommand{\Op}{\mathsf{Op}}
\newcommand{\xvar}{{x}}
\newcommand{\yvar}{{y}}
\newcommand{\val}{{v}}
\newcommand{\id}{{i}}
\newcommand{\rd}[3][]{\textsf{read}_{#1}({#2},{#3})}
\newcommand{\wrt}[3][]{\textsf{write}_{#1}({#2},{#3})}
\newcommand{\tr}{{t}}
\newcommand{\hist}{{h}}
\newcommand{\po}{\textcolor{red}{\mathsf{po}}}
\newcommand{\so}{\textcolor{red}{\mathsf{so}}}
\newcommand{\co}{\textcolor{red}{\mathsf{co}}}
\newcommand{\wro}[1][]{\textcolor{red}{\mathsf{wr}_{#1}}}
\newcommand{\rwo}{\mathsf{RW}}
\newcommand{\wrosi}{\mathsf{WR}}
\newcommand{\readOp}[1]{\mathsf{reads}({#1})}
\newcommand{\vars}[1]{\mathsf{vars}({#1})}
\newcommand{\writeOp}[1]{\mathsf{writes}({#1})}
\newcommand{\writeVar}[2]{{#1}\ \mathsf{writes}\ {#2}}
\renewcommand{\hist}{{h}}
\newcommand{\vis}{\textcolor{red}{\mathsf{vis}}}
\renewcommand{\circ}{\mathop{;}}
\begin{document}

\title{On the Complexity of Checking Transactional Consistency}         


\author{Ranadeep Biswas}
\affiliation{
  \position{PhD Student}
  \institution{Universite de Paris, IRIF, CNRS}            
  \city{Paris}
  \postcode{F-75013}
  \country{France}                    
}
\email{ranadeep@irif.fr}          

\author{Constantin Enea}
\affiliation{
  \position{Associate Professor}
  \institution{Universite de Paris, IRIF, CNRS}            
  \city{Paris}
  \postcode{F-75013}
  \country{France}                    
}
\email{cenea@irif.fr}

\begin{abstract}
Transactions simplify concurrent programming by enabling computations on shared data that are isolated from other concurrent computations and are resilient to failures. Modern databases provide different consistency models for transactions corresponding to different tradeoffs between consistency and availability. In this work, we investigate the problem of checking whether a given execution of a transactional database adheres to some consistency model. We show that consistency models like read committed, read atomic, and causal consistency are polynomial time checkable while prefix consistency and snapshot isolation are NP-complete in general. These results complement a previous NP-completeness result concerning serializability. Moreover, in the context of NP-complete consistency models, we devise algorithms that are polynomial time assuming that certain parameters in the input executions, e.g., the number of sessions, are fixed. We evaluate the scalability of these algorithms in the context of several production databases.
\end{abstract}

\begin{CCSXML}
<ccs2012>
<concept>
<concept_id>10011007.10011006.10011008</concept_id>
<concept_desc>Software and its engineering~General programming languages</concept_desc>
<concept_significance>500</concept_significance>
</concept>
<concept>
<concept_id>10003456.10003457.10003521.10003525</concept_id>
<concept_desc>Social and professional topics~History of programming languages</concept_desc>
<concept_significance>300</concept_significance>
</concept>
</ccs2012>
\end{CCSXML}

\ccsdesc[500]{Software and its engineering~General programming languages}
\ccsdesc[300]{Social and professional topics~History of programming languages}


\maketitle

\section{Introduction}

Transactions simplify concurrent programming by enabling computations on shared data that are isolated from other concurrent computations and resilient to failures. Modern databases provide transactions in various forms corresponding to different tradeoffs between consistency and availability. The strongest level of consistency is achieved with \emph{serializable} transactions~\cite{DBLP:journals/jacm/Papadimitriou79b} whose outcome in concurrent executions is the same as if the transactions were executed atomically in some order. Unfortunately, serializability carries a significant penalty on the availability of the system assuming, for instance, that the database is accessed over a network that can suffer from partitions or failures. For this reason, modern databases often provide weaker guarantees about transactions, formalized by weak consistency models, e.g., causal consistency~\cite{DBLP:journals/cacm/Lamport78} and snapshot isolation~\cite{DBLP:conf/sigmod/BerensonBGMOO95}.

Implementations of large-scale databases providing transactions are difficult to build and test. For instance, distributed (replicated) databases must account for partial failures, where some components or the network can fail and produce incomplete results. Ensuring fault-tolerance relies on intricate protocols that are difficult to design and reason about. The black-box testing framework Jepsen~\cite{jepsen} found a remarkably large number of subtle problems in many production distributed databases. 

Testing a transactional database raises two issues: (1) deriving a suitable set of testing scenarios, e.g., faults to inject into the system and the set of transactions to be executed, and (2) deriving efficient algorithms for checking whether a given execution satisfies the considered consistency model. The Jepsen framework aims to address the first issue by using randomization, 
e.g., introducing faults at random and choosing the operations in a transaction randomly. The effectiveness of this approach has been proved formally in recent work~\cite{DBLP:journals/pacmpl/OzkanMNBW18}. The second issue is, however, largely unexplored. Jepsen checks consistency in a rather ad-hoc way, focusing on specific classes of violations to a given consistency model, e.g., dirty reads (reading values from aborted transactions). This problem is challenging because the consistency specifications are non-trivial and they cannot be checked using, for instance, standard local assertions added to the client's code. 

Besides serializability, the complexity of checking correctness of an execution w.r.t. some consistency model is unknown. Checking serializability has been shown to be NP-complete~\cite{DBLP:journals/jacm/Papadimitriou79b}, and checking causal consistency in a \emph{non-transactional} context is known to be polynomial time~\cite{DBLP:conf/popl/BouajjaniEGH17}. In this work, we try to fill this gap by investigating the complexity of this problem w.r.t. several consistency models and, in case of NP-completeness, devising algorithms that are polynomial time assuming fixed bounds for certain parameters of the input executions, e.g., the number of sessions. 

%
%
%
%
We consider several consistency models that are the most prevalent in practice. The weakest of them, \emph{Read Committed} (RC)~\cite{DBLP:conf/sigmod/BerensonBGMOO95}, requires that every value read in a transaction is written by a committed transaction. \emph{Read Atomic} (RA)~\cite{DBLP:conf/concur/Cerone0G15} requires that successive reads of the same variable in a transaction return the same value (also known as Repeatable Reads~\cite{DBLP:conf/sigmod/BerensonBGMOO95}), and that a transaction ``sees'' the values written by previous transactions in the same session. In general, we assume that transactions are organized in \emph{sessions}~\cite{DBLP:conf/pdis/TerryDPSTW94}, an abstraction of the sequence of transactions performed during the execution of an application.
\emph{Causal Consistency} (CC)~\cite{DBLP:journals/cacm/Lamport78} requires that if a transaction $\tr_1$ ``affects'' another transaction $\tr_2$, e.g., $\tr_1$ is ordered before $\tr_2$ in the same session or $\tr_2$ reads a value written by $\tr_1$, then these two transactions are observed by any other transaction in this order. \emph{Prefix Consistency} (PC)~\cite{DBLP:conf/ecoop/BurckhardtLPF15} requires that there exists a total commit order between all the transactions such that each transaction observes a prefix of this sequence. \emph{Snapshot Isolation} (SI)~\cite{DBLP:conf/sigmod/BerensonBGMOO95} further requires that two different transactions observe different prefixes if they both write to a common variable.
Finally, we also provide new results concerning the problem of checking serializability (SER) that complement the known result about its NP-completeness. 

The algorithmic issues we explore in this paper have led to a new specification framework for these consistency models that relies on the fact that the \emph{write-read} relation in an execution (also known as \emph{read-from}), relating reads with the transactions that wrote their value, can be defined effectively. The write-read relation can be extracted easily from executions where each value is written at most once (a variable can be written an arbitrary number of times). This can be easily enforced by tagging values with unique identifiers (e.g., a local counter that is incremented with every new write coupled with a client/session identifier)\footnote{This is also used in Jepsen, e.g., checking dirty reads in Galera~\cite{jepsen-galera}.}. Since practical database implementations are data-independent~\cite{DBLP:conf/popl/Wolper86}, i.e., their behavior doesn't depend on the concrete values read or written in the transactions, any potential buggy behavior can be exposed in executions where each value is written at most once. Therefore, this assumption is without loss of generality.

Previous work~\cite{DBLP:conf/popl/BouajjaniEGH17,DBLP:conf/popl/BurckhardtGYZ14,DBLP:conf/concur/Cerone0G15} has formalized such consistency models using two auxiliary relations: a \emph{visibility} relation defining for each transaction the set of transactions it observes, and a \emph{commit order} defining the order in which transactions are committed to the ``global'' memory. An execution satisfying some consistency model is defined as the existence of a visibility relation and a commit order obeying certain axioms. In our case, the write-read relation derived from the execution plays the role of the visibility relation. This simplification allows us to state a series of axioms defining these consistency models, which have a common shape. Intuitively, they define lower bounds on the set of transactions $\tr_1$ that \emph{must} precede in commit order a transaction $\tr_2$ that is read in the execution. Besides shedding a new light on the differences between these consistency models, these axioms are essential for the algorithmic issues we investigate afterwards.

We establish that checking whether an execution satisfies RC, RA, or CC is polynomial time, while the same problem is NP-complete for PC and SI. Moreover, in the case of the NP-complete consistency models (PC, SI, SER), we show that their verification problem becomes polynomial time provided that, roughly speaking, the number of sessions in the input executions is considered to be fixed (i.e., not counted for in the input size). In more detail, we establish that checking SER reduces to a search problem in a space that has polynomial size when the number of sessions is fixed. (This algorithm applies to arbitrary executions, but its complexity would be exponential in the number of sessions in general.) Then, we show that checking PC or SI can be reduced in polynomial time to checking SER using a transformation of executions that, roughly speaking, splits each transaction in two parts: one part containing all the reads, and one part containing all the writes (SI further requires adding some additional variables in order to deal with transactions writing on a common variable).
We extend these results even further by relying on an abstraction of executions called \emph{communication graphs}~\cite{DBLP:journals/pacmpl/ChalupaCPSV18}. Roughly speaking, the vertices of a communication graph correspond to sessions, and the edges represent the fact that two sessions access (read or write) the same variable. We show that all these criteria are polynomial-time checkable provided that the \emph{biconnected} components of the communication graph are of fixed size.

We provide an experimental evaluation of our algorithms on executions of CockroachDB~\cite{cockroach}, which claims to implement serializability~\cite{cockroach-claim} acknowledging however the possibility of anomalies, Galera~\cite{galera}, whose documentation contains contradicting claims about whether it implements snapshot isolation~\cite{galera-claim,galera-notclaim}, and AntidoteDB~\cite{antidote}, which claims to implement causal consistency~\cite{antidote-claim}.
Our implementation reports violations of these criteria in all cases. 
The consistency violations we found for AntidoteDB are novel and have been confirmed by its developers. We show that our algorithms are efficient and scalable.
In particular, we show that, although the asymptotic complexity of our algorithms is exponential in general (w.r.t. the number of sessions), the worst-case behavior is not exercised in practice.

To summarize, the contributions of this work are fourfold:
\begin{itemize}

  \item We develop a new specification framework for describing common transactional-consistency criteria (§\ref{sec:def});

  \item We show that checking RC, RA, and CC is polynomial time while checking PC and SI is NP-complete (§\ref{sec:general});

  \item We show that PC, SI, and SER are polynomial-time checkable assuming that the communication graph of the input execution has fixed-size biconnected components (§\ref{sec:bounded_width} and §\ref{sec:communication});
  
  \item We perform an empirical evaluation of our algorithms on executions generated by production databases (§\ref{sec:exp});

\end{itemize}
Combined, these contributions form an effective algorithmic framework for the verification of transactional-consistency models. To the best of our knowledge, we are the first to investigate the asymptotic complexity for most of these consistency models, despite their prevalence in practice.

\section{Consistency Criteria}\label{sec:def}

\subsection{Histories}

We consider a transactional database storing a set of variables $\Var=\{\xvar,\yvar,\ldots\}$. Clients interact with the database by issuing transactions formed of $\textsf{read}$ and $\textsf{write}$ operations. Assuming an unspecified set of values $\Val$ and a set of operation identifiers $\OId$, we let 
\begin{align*}
 \Op=\set{\rd[\id]{\xvar}{\val},\wrt[\id]{\xvar}{\val}: \id\in\OId, \xvar\in\Var, \val\in \Val}
\end{align*} 
be the set of operations reading a value $\val$ or writing a value $\val$ to a variable $\xvar$. We omit operation identifiers when they are not important.

\begin{definition}
 A \emph{transaction} $\tup{O, \po}$ is a finite set of operations $O$ along with a strict total order $\po$ on $O$, called \emph{program order}.
\end{definition}

We use $\tr$, $\tr_1$, $\tr_2$, $\ldots$ to range over transactions. The set of read, resp., write, operations in a transaction $\tr$ is denoted by $\readOp{\tr}$, resp., $\writeOp{\tr}$. The extension to sets of transactions is defined as usual. Also, we say that a transaction $\tr$ \emph{writes} a variable $\xvar$, denoted by $\writeVar{\tr}{\xvar}$, when $\wrt[\id]{\xvar}{\val}\in \writeOp{\tr}$ for some $\id$ and $\val$. Similarly, a transaction $\tr$ \emph{reads} a variable $\xvar$ when $\rd[\id]{\xvar}{\val}\in \readOp{\tr}$ for some $\id$ and $\val$.

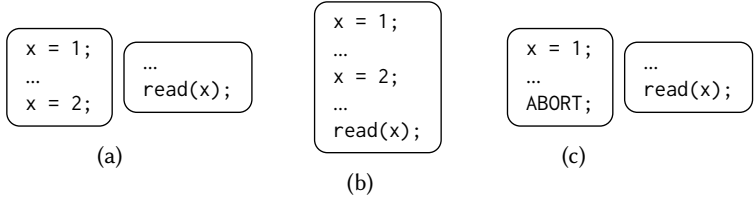
\begin{figure}[t]
 \centering
 \begin{subfigure}{.21\textwidth}
   \centering
  \resizebox{!}{1.3cm}{
   \begin{tikzpicture}[->,>=stealth',shorten >=1pt,auto,node distance=2cm,
     semithick, transform shape]
    \node[draw, rounded corners=2mm] (t1) {\begin{tabular}{l} \texttt{x = 1;} \\ ... \\ \texttt{x = 2;} \end{tabular}};
    \node[draw, rounded corners=2mm, right of = t1] (t2) {\begin{tabular}{l} ... \\ \texttt{read(x);} \end{tabular}};
   \end{tikzpicture}  
  }
\caption{}
  \label{rc_eg:1}
 \end{subfigure}
 \hspace{10mm}
 \begin{subfigure}{.1\textwidth}
   \centering
  \resizebox{!}{2cm}{
   \begin{tikzpicture}[->,>=stealth',shorten >=1pt,auto,node distance=2cm,
     semithick, transform shape]
    \node[draw, rounded corners=2mm] (t1) {\begin{tabular}{l} \texttt{x = 1;} \\ ... \\ \texttt{x = 2;} \\ ... \\ \texttt{read(x);}\end{tabular}};
   \end{tikzpicture}  
  }
\caption{}
  \label{rr_eg:1}
 \end{subfigure}
 \hspace{10mm}
 \begin{subfigure}{.14\textwidth}
   \centering
  \resizebox{!}{1.3cm}{
   \begin{tikzpicture}[->,>=stealth',shorten >=1pt,auto,node distance=2cm,
     semithick, transform shape]
    \node[draw, rounded corners=2mm] (t1) {\begin{tabular}{l} \texttt{x = 1;} \\ ... \\ \texttt{ABORT;} \end{tabular}};
    \node[draw, rounded corners=2mm, right of = t1] (t2) {\begin{tabular}{l} ... \\ \texttt{read(x);} \end{tabular}};
   \end{tikzpicture}  
  }
\caption{}
  \label{abort:1}
 \end{subfigure}
 \caption{Examples of transactions used to justify our simplifying assumptions (each box represents a different transaction): (a) only the last written value is observable in other transactions, (b) reads following writes to the same variable return the last written value in the same transaction, and (c) values written in aborted transactions are not observable.}
 \label{read_latest}
 \vspace{-3mm}
\end{figure}

To simplify the exposition, we assume that each transaction $\tr$ contains at most one write operation to each variable $\xvar$~\footnote{That is, for every transaction $\tr$, and every $\wrt{\xvar}{\val}, \wrt{\yvar}{\val'}\in \writeOp{\tr}$, we have that $\xvar\neq\yvar$.}, and that a read of a variable $\xvar$ cannot be preceded by a write to $\xvar$ in the same transaction\footnote{That is, for every transaction $\tr=\tup{O, \po}$, if $\wrt{\xvar}{\val}\in \writeOp{\tr}$ and there exists $\rd{\xvar}{\val}\in \readOp{\tr}$, then we have that $\tup{\rd{\xvar}{\val},\wrt{\xvar}{\val}}\in \po$}. If a transaction would contain multiple writes to the same variable, then only the last one should be visible to other transactions (w.r.t. any consistency criterion considered in practice). For instance, the \texttt{read(x)} in Figure~\ref{rc_eg:1} should not return 1 because this is not the last value written to {\tt x} by the other transaction. It can return the initial value or 2.
Also, if a read would be preceded by a write to the same variable in the same transaction, then it should return a value written in the same transaction (i.e., the value written by the latest write to $\xvar$ in that transaction). 
For instance, the \texttt{read(x)} in Figure~\ref{rr_eg:1} can only return 2 (assuming that there are no other writes on {\tt x} in the same transaction).
These two properties can be verified easily (in a syntactic manner) on a given execution. Beyond these two properties, the various consistency criteria used in practice constrain only the last writes to each variable in each transaction and the reads that are not preceded by writes to the same variable in the same transaction.

Consistency criteria are formalized on an abstract view of an execution called~\emph{history}. A history includes only successful or committed transactions. In the context of databases, it is always assumed that the effect of aborted transactions should not be visible to other transactions, and therefore, they can be ignored. For instance, the \texttt{read(x)} in Figure~\ref{abort:1} should not return the value 1 written by the aborted transaction. The transactions are ordered according to a (partial) \emph{session order} $\so$ which represents ordering constraints imposed by the applications using the database. Most often, $\so$ is a union of sequences, each sequence being called a \emph{session}. We assume that the history includes a \emph{write-read} relation that identifies the transaction writing the value returned by each read in the execution. As mentioned before, such a relation can be extracted easily from executions where each value is written at most once. Since in practice, databases are data-independent~\cite{DBLP:conf/popl/Wolper86}, i.e., their behavior does not depend on the concrete values read or written in the transactions, any potential buggy behavior can be exposed in such executions.

\begin{definition}
 A \emph{history} $\tup{T, \so, \wro}$ is a set of transactions $T$ along with a strict partial order $\so$ called \emph{session order}, and a 
 relation $\wro\subseteq T\times \readOp{T}$ called \emph{write-read} relation, s.t. 
 \begin{itemize}
  \item the inverse of $\wro$ is a total function, and if $(\tr,\rd{\xvar}{\val})\in\wro$, then $\wrt{\xvar}{\val}\in\tr$, and
  \item $\so\cup\wro$ is acyclic.
 \end{itemize}
\end{definition}



To simplify the technical exposition, we assume that every history includes a distinguished transaction writing the initial values of all variables. This transaction precedes all the other transactions in $\so$. We use $\hist$, $\hist_1$, $\hist_2$, $\ldots$ to range over histories. 

We say that the read operation $\rd{\xvar}{\val}$ reads value $\val$ from variable $\xvar$ written by $\tr$ when $(\tr,\rd{\xvar}{\val})\in\wro$. 
For a given variable $\xvar$, $\wro[\xvar]$ denotes the restriction of $\wro$ to reads of variable $\xvar$, \ie, $\wro[\xvar]=\wro\cap (T\times \{\rd{\xvar}{\val}\mid \val\in \Val\})$. Moreover, we extend the relations $\wro$ and $\wro[\xvar]$ to pairs of transactions as follows: $\tup{\tr_1,\tr_2}\in \wro$, resp., $\tup{\tr_1,\tr_2}\in \wro[\xvar]$, iff there exists a read operation $\rd{\xvar}{\val}\in \readOp{\tr_2}$ such that $\tup{\tr_1,\rd{\xvar}{\val}}\in \wro$, resp., $\tup{\tr_1,\rd{\xvar}{\val}}\in \wro[\xvar]$. We say that the transaction $\tr_1$ is \emph{read} by the transaction $\tr_2$ when $\tup{\tr_1,\tr_2}\in \wro$, and that it is \emph{read} when it is read by some transaction $\tr_2$. 

\subsection{Axiomatic Framework}


\tikzset{transaction state/.style={draw=black!0}}

 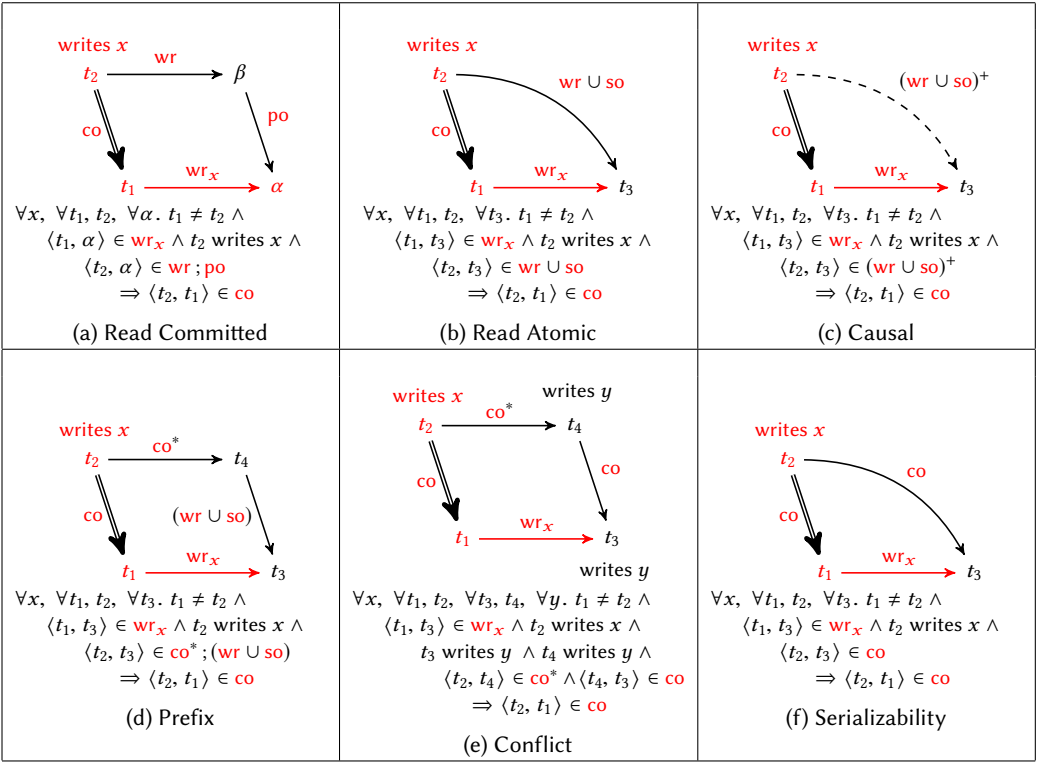
\begin{figure}
   \resizebox{\textwidth}{!}{
   \footnotesize
  \begin{tabular}{|c|c|c|}
   \hline &  & \\
   \begin{subfigure}[t]{.3\textwidth}
    \centering
    \begin{tikzpicture}[->,>=stealth',shorten >=1pt,auto,node distance=1cm,
      semithick, transform shape]
     \node[transaction state, text=red] at (0,0)       (t_1)           {$\tr_1$};
     \node[transaction state, text=red, label={above:\textcolor{red}{$\writeVar{ }{\xvar}$}}] at (-0.5,1.5) (t_2) {$\tr_2$};
     \node[transaction state, text=red] at (2,0)       (o_1)           {$\alpha$};
     \node[transaction state] at (1.5,1.5) (o_2) {$\beta$};
     \path (t_1) edge[red] node {$\wro[\xvar]$} (o_1);
     \path (t_2) edge node {$\wro$} (o_2);
     \path (o_2) edge node {$\po$} (o_1);
     \path (t_2) edge[left,double] node {$\co$} (t_1);
    \end{tikzpicture}
    \parbox{\textwidth}{
     $\forall \xvar,\ \forall \tr_1, \tr_2,\ \forall \alpha.\ \tr_1\neq \tr_2\ \land$
     
     \hspace{4mm}$\tup{\tr_1,\alpha}\in \wro[\xvar] \land \writeVar{\tr_2}{\xvar}\ \land$ 
     
     \hspace{9mm}$\tup{\tr_2,\alpha}\in\wro\circ\po$
     
     \hspace{14mm}$\implies \tup{\tr_2,\tr_1}\in\co$
    }
    
    \caption{$\mathsf{Read\ Committed}$}
    \label{lock_rc_def}
   \end{subfigure}
   
          &     
   
   \begin{subfigure}[t]{.3\textwidth}
    \centering
    \begin{tikzpicture}[->,>=stealth',shorten >=1pt,auto,node distance=4cm,
      semithick, transform shape]
     \node[transaction state, text=red] at (0,0)       (t_1)           {$\tr_1$};
     \node[transaction state] at (2,0)       (t_3)           {$\tr_3$};
     \node[transaction state, text=red,label={above:\textcolor{red}{$\writeVar{ }{\xvar}$}}] at (-.5,1.5) (t_2) {$\tr_2$};
     \path (t_1) edge[red] node {$\wro[\xvar]$} (t_3);
     \path (t_2) edge[bend left] node {$\wro \cup \so$} (t_3);
     \path (t_2) edge[left,double] node {$\co$} (t_1);
    \end{tikzpicture}
    \parbox{\textwidth}{
     $\forall \xvar,\ \forall \tr_1, \tr_2,\ \forall \tr_3.\ \tr_1\neq \tr_2\ \land$
     
     \hspace{4mm}$\tup{\tr_1,\tr_3}\in \wro[\xvar] \land \writeVar{\tr_2}{\xvar}\ \land$ 
     
     \hspace{9mm}$\tup{\tr_2,\tr_3}\in\wro\cup\so$
     
     \hspace{14mm}$\implies \tup{\tr_2,\tr_1}\in\co$
    }
    
    \caption{$\mathsf{Read\ Atomic}$}
    \label{ra_def}
   \end{subfigure}
   
   &
   
   \begin{subfigure}[t]{.3\textwidth}
    \centering
    \begin{tikzpicture}[->,>=stealth',shorten >=1pt,auto,node distance=4cm,
      semithick, transform shape]
     \node[transaction state, text=red] at (0,0)       (t_1)           {$\tr_1$};
     \node[transaction state] at (2,0)       (t_3)           {$\tr_3$};
     \node[transaction state, text=red,label={above:\textcolor{red}{$\writeVar{ }{\xvar}$}}] at (-.5,1.5) (t_2) {$\tr_2$};
     \path (t_1) edge[red] node {$\wro[\xvar]$} (t_3);
     \path (t_2) edge[dashed, bend left] node {$(\wro \cup \so)^+$} (t_3);
     \path (t_2) edge[left,double] node {$\co$} (t_1);
    \end{tikzpicture}
    \parbox{\textwidth}{
     $\forall \xvar,\ \forall \tr_1, \tr_2,\ \forall \tr_3.\ \tr_1\neq \tr_2\ \land$
     
     \hspace{4mm}$\tup{\tr_1,\tr_3}\in \wro[\xvar] \land \writeVar{\tr_2}{\xvar}\ \land$ 
     
     \hspace{9mm}$\tup{\tr_2,\tr_3}\in(\wro\cup\so)^+$
     
     \hspace{14mm}$\implies \tup{\tr_2,\tr_1}\in\co$
    }
    
    \caption{$\mathsf{Causal}$}
    \label{cc_def}
   \end{subfigure}

   \\ \hline & & \\
    
   \begin{subfigure}[t]{.3\textwidth}
    \centering
    \begin{tikzpicture}[->,>=stealth',shorten >=1pt,auto,node distance=4cm,
      semithick, transform shape]
     \node[transaction state, text=red] at (0,0)       (t_1)           {$\tr_1$};
     \node[transaction state] at (2,0)       (t_3)           {$\tr_3$};
     \node[transaction state, text=red,label={above:\textcolor{red}{$\writeVar{ }{\xvar}$}}] at (-0.5,1.5) (t_2) {$\tr_2$};
     \node[transaction state] at (1.5,1.5) (t_4) {$\tr_4$};
     \path (t_1) edge[red] node {$\wro[\xvar]$} (t_3);
     \path (t_2) edge node {$\co^*$} (t_4);
     \path (t_4) edge[left] node {$(\wro \cup \so)$} (t_3);
     \path (t_2) edge[left,double] node {$\co$} (t_1);
    \end{tikzpicture}
    \parbox{\textwidth}{
     $\forall \xvar,\ \forall \tr_1, \tr_2,\ \forall \tr_3.\ \tr_1\neq \tr_2\ \land$
     
     \hspace{4mm}$\tup{\tr_1,\tr_3}\in \wro[\xvar] \land \writeVar{\tr_2}{\xvar}\ \land$ 
     
     \hspace{9mm}$\tup{\tr_2,\tr_3}\in\co^*\circ\,(\wro\cup\so)$
     
     \hspace{14mm}$\implies \tup{\tr_2,\tr_1}\in\co$
    }
    
    \caption{$\mathsf{Prefix}$}
    \label{pre_def}
   \end{subfigure}

   &
   \begin{subfigure}[t]{.32\textwidth}
    \centering
    \begin{tikzpicture}[->,>=stealth',shorten >=1pt,auto,node distance=4cm,
      semithick, transform shape]
     \node[transaction state, text=red] at (0,0)       (t_1)           {$\tr_1$};
     \node[transaction state, label={below:$\writeVar{ }{\yvar}$}] at (2,0)       (t_3)           {$\tr_3$};
     \node[transaction state, text=red,label={above:\textcolor{red}{$\writeVar{ }{\xvar}$}}] at (-.5,1.5) (t_2) {$\tr_2$};
     \node[transaction state, label={above:{$\writeVar{}{\yvar}$}}] at (1.5,1.5) (t_4) {$\tr_4$};
     \path (t_1) edge[red] node {$\wro[\xvar]$} (t_3);
     \path (t_2) edge node {$\co^*$} (t_4);
     \path (t_4) edge node {$\co$} (t_3);
     \path (t_2) edge[left,double] node {$\co$} (t_1);
    \end{tikzpicture}
    \parbox{\textwidth}{
     $\forall \xvar,\ \forall \tr_1, \tr_2,\ \forall \tr_3, \tr_4,\ \forall \yvar.\ \tr_1\neq \tr_2\ \land$
     
     \hspace{4mm}$\tup{\tr_1,\tr_3}\in \wro[\xvar] \land \writeVar{\tr_2}{\xvar}\ \land$ 
     
     \hspace{9mm}$\writeVar{\tr_3}{\yvar}\ \land \writeVar{\tr_4}{\yvar}\ \land$ 
     
     \hspace{12mm}$\tup{\tr_2,\tr_4}\in\co^*\ \land \tup{\tr_4,\tr_3}\in\co$
     
     \hspace{16mm}$\implies \tup{\tr_2,\tr_1}\in\co$
    }
    
    \caption{$\mathsf{Conflict}$}
    \label{confl_def}
   \end{subfigure}
          &     
   \begin{subfigure}[t]{.3\textwidth}
    \centering
    \begin{tikzpicture}[->,>=stealth',shorten >=1pt,auto,node distance=4cm,
      semithick, transform shape]
     \node[transaction state, text=red] at (0,0)       (t_1)           {$\tr_1$};
     \node[transaction state] at (2,0)       (t_3)           {$\tr_3$};
     \node[transaction state, text=red, label={above:\textcolor{red}{$\writeVar{ }{\xvar}$}}] at (-.5,1.5) (t_2) {$\tr_2$};
     \path (t_1) edge[red] node {$\wro[\xvar]$} (t_3);
     \path (t_2) edge[bend left] node {$\CO$} (t_3);
     \path (t_2) edge[left,double] node {$\co$} (t_1);
    \end{tikzpicture}
    \parbox{\textwidth}{
     $\forall \xvar,\ \forall \tr_1, \tr_2,\ \forall \tr_3.\ \tr_1\neq \tr_2\ \land$
     
     \hspace{4mm}$\tup{\tr_1,\tr_3}\in \wro[\xvar] \land \writeVar{\tr_2}{\xvar}\ \land$ 
     
     \hspace{9mm}$\tup{\tr_2,\tr_3}\in\co$
     
     \hspace{14mm}$\implies \tup{\tr_2,\tr_1}\in\co$
    }
    
    \caption{$\mathsf{Serializability}$}
    \label{ser_def}
   \end{subfigure}
   \\ \hline
  \end{tabular}
  }
  \caption{Definitions of consistency axioms. The reflexive and transitive, resp., transitive, closure of a relation $rel$ is denoted by $rel^*$, resp., $rel^+$. Also, $\circ$ denotes the composition of two relations, i.e., $rel_1 \circ rel_2 = \{\tup{a, b} | \exists c. \tup{a, c} \in rel_1 \land \tup{c, b} \in rel_2\}$.}
  \label{consistency_defs}
 \end{figure}


We describe an axiomatic framework to characterize the set of histories satisfying a certain consistency criterion. The overarching principle is to say that a history satisfies a certain criterion if there exists a strict total order on its transactions, called \emph{commit order} and denoted by $\co$, which extends the write-read relation and the session order, and which satisfies certain properties. These properties are expressed by a set of axioms that relate the commit order with the session-order and the write-read relation in the history.

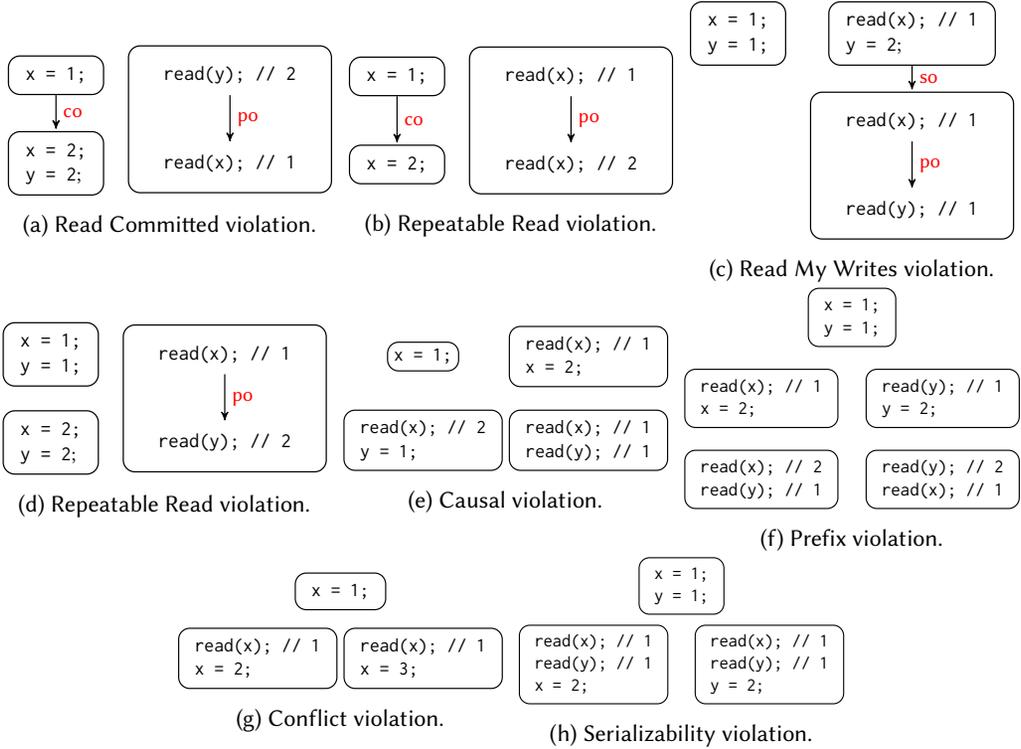
\begin{figure}
  
   \centering
   \begin{subfigure}{.32\textwidth}
  \resizebox{\textwidth}{!}{
\begin{tikzpicture}[->,>=stealth',shorten >=1pt,auto,node distance=3cm,
    semithick, transform shape]
    \node[draw, rounded corners=2mm] (t1) at (0, 0) {\begin{tabular}{l} \texttt{x = 1;} \end{tabular}};
   \node[draw, rounded corners=2mm] (t2) at (0, -1.5) {\begin{tabular}{l} \texttt{x = 2;} \\ \texttt{y = 2};\end{tabular}};
   \node[draw, rounded corners=2mm, minimum width=3.5cm, minimum height=2.5cm] (t3) at (3, -0.75) {};
   \node (t3_1) at (3, 0) {\begin{tabular}{l} \texttt{read(y); // 2} \end{tabular}};
   \node (t3_2) at (3, -1.5) {\begin{tabular}{l} \texttt{read(x); // 1} \end{tabular}};
   \path (t1) edge node {$\co$} (t2);
   \path (t3_1) edge node {$\po$} (t3_2);
  \end{tikzpicture}  
    }
    \caption{$\mathsf{Read\ Committed}$ violation.}
    \label{rc_example:1}
    
\end{subfigure}
\begin{subfigure}{.32\textwidth}
\resizebox{\textwidth}{!}{
\begin{tikzpicture}[->,>=stealth',shorten >=1pt,auto,node distance=3cm,
 semithick, transform shape]
 \node[draw, rounded corners=2mm] (t1) at (0, 0) {\begin{tabular}{l} \texttt{x = 1;} \end{tabular}};
\node[draw, rounded corners=2mm] (t2) at (0, -1.5) {\begin{tabular}{l} \texttt{x = 2;} \end{tabular}};
\node[draw, rounded corners=2mm, minimum width=3.5cm, minimum height=2.5cm] (t3) at (3, -0.75) {};
\node (t3_1) at (3, -0) {\begin{tabular}{l} \texttt{read(x); // 1} \end{tabular}};
\node (t3_2) at (3, -1.5) {\begin{tabular}{l} \texttt{read(x); // 2} \end{tabular}};
\path (t1) edge node {$\co$} (t2);
\path (t3_1) edge node {$\po$} (t3_2);
\end{tikzpicture}  
}
 \caption{Repeatable Read violation.}
 \label{rr_example:1}
\end{subfigure}
\begin{subfigure}{.32\textwidth}
\resizebox{\textwidth}{!}{
\begin{tikzpicture}[->,>=stealth',shorten >=1pt,auto,node distance=3cm,
 semithick, transform shape]
 \node[draw, rounded corners=2mm] (t1) at (0, 1.5) {\begin{tabular}{l} \texttt{x = 1;} \\ \texttt{y = 1;}\end{tabular}};
\node[draw, rounded corners=2mm] (t2) at (3, 1.5) {\begin{tabular}{l} \texttt{read(x); // 1} \\ \texttt{y = 2};\end{tabular}};
\node[draw, rounded corners=2mm, minimum width=3.5cm, minimum height=2.5cm] (t3) at (3, -0.75) {};
\node (t3_1) at (3, 0) {\begin{tabular}{l} \texttt{read(x); // 1} \end{tabular}};
\node (t3_2) at (3, -1.5) {\begin{tabular}{l} \texttt{read(y); // 1} \end{tabular}};
\path (t2) edge node {$\so$} (t3);
\path (t3_1) edge node {$\po$} (t3_2);
\end{tikzpicture}  
}
 \caption{Read My Writes violation.}
 \label{rmw_example:1}
\end{subfigure}
\begin{subfigure}{.32\textwidth}
\resizebox{\textwidth}{!}{
\begin{tikzpicture}[->,>=stealth',shorten >=1pt,auto,node distance=3cm,
 semithick, transform shape]
 \node[draw, rounded corners=2mm] (t1) at (0, 0) {\begin{tabular}{l} \texttt{x = 1;} \\ \texttt{y = 1;} \end{tabular}};
\node[draw, rounded corners=2mm] (t2) at (0, -1.5) {\begin{tabular}{l} \texttt{x = 2;} \\ \texttt{y = 2};\end{tabular}};
\node[draw, rounded corners=2mm, minimum width=3.5cm, minimum height=2.5cm] (t3) at (3, -0.75) {};
\node (t3_2) at (3, -1.5) {\begin{tabular}{l} \texttt{read(y); // 2} \end{tabular}};
\node (t3_1) at (3, -0) {\begin{tabular}{l} \texttt{read(x); // 1} \end{tabular}};
\path (t3_1) edge node {$\po$} (t3_2);
\end{tikzpicture}  
}
 \caption{Repeatable Read violation.}
 \label{ra_example:1}
\end{subfigure}
\begin{subfigure}{.32\textwidth}
\resizebox{\textwidth}{!}{
\begin{tikzpicture}[->,>=stealth',shorten >=1pt,auto,node distance=3cm,
 semithick, transform shape]
 \node[draw, rounded corners=2mm] (t1) at (0, 1.5) {\texttt{x = 1;}};
\node[draw, rounded corners=2mm] (t2) at (3, 1.5) {\begin{tabular}{l} \texttt{read(x); // 1} \\ \texttt{x = 2;} \end{tabular}};
\node[draw, rounded corners=2mm] (t3) at (3, 0) {\begin{tabular}{l} \texttt{read(x); // 1} \\ \texttt{read(y); // 1} \end{tabular}};
\node[draw, rounded corners=2mm] (t4) at (0, 0) {\begin{tabular}{l} \texttt{read(x); // 2} \\ \texttt{y = 1;} \end{tabular}};
\end{tikzpicture}  
}
 \caption{$\mathsf{Causal}$ violation.}
 \label{cc_example:1}
\end{subfigure}
\begin{subfigure}{.33\textwidth}
\resizebox{\textwidth}{!}{
\begin{tikzpicture}[->,>=stealth',shorten >=1pt,auto,node distance=3cm,
 semithick, transform shape]
 \node[draw, rounded corners=2mm] (t1) at (0, 0) {\begin{tabular}{l} \texttt{x = 1;} \\ \texttt{y = 1;}\end{tabular}};
 \node[draw, rounded corners=2mm] (t2) at (-1.7, -1.5) {\begin{tabular}{l} \texttt{read(x); // 1} \\ \texttt{x = 2;} \end{tabular}};
\node[draw, rounded corners=2mm] (t3) at (1.7, -1.5) {\begin{tabular}{l} \texttt{read(y); // 1} \\ \texttt{y = 2;} \end{tabular}};
\node[draw, rounded corners=2mm] (t4) at (-1.7, -3) {\begin{tabular}{l} \texttt{read(x); // 2} \\ \texttt{read(y); // 1} \end{tabular}};
\node[draw, rounded corners=2mm] (t5) at (1.7, -3) {\begin{tabular}{l} \texttt{read(y); // 2} \\ \texttt{read(x); // 1} \end{tabular}};
\end{tikzpicture}  
}
 \caption{$\mathsf{Prefix}$ violation.}
 \label{pre_example:1}
\end{subfigure}
\begin{subfigure}{.32\textwidth}
\resizebox{\textwidth}{!}{
\begin{tikzpicture}[->,>=stealth',shorten >=1pt,auto,node distance=3cm,
 semithick, transform shape]
 \node[draw, rounded corners=2mm] (t1) at (0, 0) {\begin{tabular}{l} \texttt{x = 1;} \end{tabular}};
 \node[draw, rounded corners=2mm] (t2) at (-1.5, -1.2) {\begin{tabular}{l} \texttt{read(x); // 1} \\ \texttt{x = 2;} \end{tabular}};
 \node[draw, rounded corners=2mm] (t3) at (1.5, -1.2) {\begin{tabular}{l} \texttt{read(x); // 1} \\ \texttt{x = 3;} \end{tabular}};
\end{tikzpicture} 
}
 \caption{$\mathsf{Conflict}$ violation.}
 \label{conf_example:1}
\end{subfigure}
\begin{subfigure}{.32\textwidth}
\resizebox{\textwidth}{!}{
\begin{tikzpicture}[->,>=stealth',shorten >=1pt,auto,node distance=3cm,
 semithick, transform shape]
 \node[draw, rounded corners=2mm] (t1) at (0, 0) {\begin{tabular}{l} \texttt{x = 1;} \\ \texttt{y = 1;}\end{tabular}};
 \node[draw, rounded corners=2mm] (t2) at (-1.7, -1.5) {\begin{tabular}{l} \texttt{read(x); // 1} \\ \texttt{read(y); // 1} \\ \texttt{x = 2;} \end{tabular}};
 \node[draw, rounded corners=2mm] (t3) at (1.7, -1.5) {\begin{tabular}{l} \texttt{read(x); // 1} \\ \texttt{read(y); // 1} \\ \texttt{y = 2;} \end{tabular}};
\end{tikzpicture}  
}
 \caption{$\mathsf{Serializability}$ violation.}
 \label{ser_example:1}
\end{subfigure}

  \caption{Examples of histories used to explain the axioms in Figure~\ref{consistency_defs}. For readability, the $\wro$ relation is defined by the values written in comments with each {\tt read}.}
  \label{counter_example:1}
\vspace{-3mm}
\end{figure}

The axioms we use have a uniform shape: they define mandatory $\co$ predecessors $\tr_2$ of a transaction $\tr_1$ that is read in the history. For instance, the criterion called \textsc{Read Committed} (RC)~\cite{DBLP:conf/sigmod/BerensonBGMOO95} requires that every value read in the history was written by a committed transaction, and also, that the reads in the same transaction are ``monotonic'' in the sense that they do not return values that are older, w.r.t. the commit order, than other values read in the past\footnote{This monotonicity property corresponds to the fact that in the original formulation of \textsc{Read Committed}~\cite{DBLP:conf/sigmod/BerensonBGMOO95}, every write is guarded by the acquisition of a lock on the written variable, that is held until the end of the transaction.}. While the first condition holds for any history (because of the surjectivity of $\wro$), the second condition is expressed by the axiom $\mathsf{Read\ Committed}$ in Figure~\ref{lock_rc_def}. This axiom states that for any transaction $\tr_1$ writing a variable $\xvar$ that is read in a transaction $\tr$, the set of transactions $\tr_2$ writing $\xvar$ and read previously in the same transaction must precede $\tr_1$ in commit order. For instance, Figure~\ref{rc_example:1} shows a history and a (partial) commit order that does not satisfy this axiom because ${\tt read(x)}$ returns the value written in a transaction ``older'' than the transaction read in the previous ${\tt read(y)}$. An example of a history and commit order satisfying this axiom is given in Figure~\ref{rr_example:1}.

More precisely, the axioms are first-order formulas\footnote{These formulas are interpreted on tuples $\tup{\hist,\co}$ of a history $\hist$ and a commit order $\co$ on the transactions in $\hist$ as usual.} of the following form:
\begin{align*}
  & \forall \xvar,\ \forall \tr_1,\tr_2,\ \forall \alpha.\ \tr_1\neq \tr_2\land \tup{\tr_1,\alpha}\in \wro[\xvar] \land \writeVar{\tr_2}{\xvar} \land \phi(\tr_2,\alpha) \implies \tup{\tr_2,\tr_1}\in\co 
\end{align*}
where $\phi$ is a property relating $\tr_2$ and $\alpha$ (i.e., the read or the transaction reading from $\tr_1$) that varies from one axiom to another. Intuitively, this axiom schema states the following: in order for $\alpha$ to read specifically $t_1$'s write on $x$, it must be the case that every $t_2$ that also writes $x$ and satisfies $\phi(t_2,\alpha)$ was committed before $t_1$. Note that in all cases we consider, $\phi(t_2,\alpha)$ already ensures that $t_2$ is committed before the read $\alpha$, so this axiom schema ensures that $t_2$ is furthermore committed before $t_1$'s write.

The axioms used throughout the paper are given in Figure~\ref{consistency_defs}. The property $\phi$ relates $\tr_2$ and $\alpha$ using the write-read relation and the session order in the history, and the commit order. 

In the following, we explain the rest of the consistency criteria we consider and the axioms defining them. \textsc{Read Atomic} (RA)~\cite{DBLP:conf/concur/Cerone0G15} is a strengthening of \textsc{Read Committed} defined by the axiom $\mathsf{Read\ Atomic}$, which states that for any transaction $\tr_1$ writing a variable $\xvar$ that is read in a transaction $\tr_3$, the set of $\wro$ or $\so$ predecessors of $\tr_3$ writing $\xvar$ must precede $\tr_1$ in commit order. The case of $\wro$ predecessors corresponds to the Repeatable Read criterion in~\cite{DBLP:conf/sigmod/BerensonBGMOO95} which requires that successive reads of the same variable in the same transaction return the same value, Figure~\ref{rr_example:1} showing a violation, and also that every read of a variable $\xvar$ in a transaction $\tr$ returns the value written by the maximal transaction $\tr'$ (w.r.t. the commit order) that is read by $\tr$, Figure~\ref{ra_example:1} showing a violation (for any commit order between the transactions on the left, either ${\tt read(x)}$ or ${\tt read(y)}$ will return a value not written by the maximal transaction). The case of $\so$ predecessors corresponds to the  ``read-my-writes'' guarantee~\cite{DBLP:conf/pdis/TerryDPSTW94} concerning sessions, which states that a transaction $\tr$ must observe previous writes in the same session. For instance, {\tt read(y)} returning 1 in Figure~\ref{rmw_example:1} shows that the last transaction on the right does not satisfy this guarantee: the transaction writing 1 to {\tt y} was already visible to that session before it wrote 2 to {\tt y}, and therefore the value 2 should have been read. $\mathsf{Read\ Atomic}$ requires that the $\so$ predecessor of the transaction reading {\tt y} be ordered in $\co$ before the transaction writing 1 to {\tt y}, which makes the union $\co\cup\wro$ cyclic.

The following lemma shows that for histories satisfying $\mathsf{Read\ Atomic}$, the inverse of $\wro[\xvar]$ extended to transactions is a total function (see Appendix~\ref{app:definitions} for the proof).

\begin{lemma}
 Let $\hist=\tup{T, \so, \wro}$ be a history. 
 If $\tup{\hist,\co}$ satisfies $\mathsf{Read\ Atomic}$, then 
 for every transaction $\tr$ and two reads $\rd[\id_1]{\xvar}{\val_1},\rd[\id_2]{\xvar}{\val_2}\in \readOp{\tr}$, $\wro^{-1}(\rd[\id_1]{\xvar}{\val_1})=\wro^{-1}(\rd[\id_2]{\xvar}{\val_2})$ and $\val_1 = \val_2$.
\end{lemma}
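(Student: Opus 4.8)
The plan is to argue by contradiction and to apply the $\mathsf{Read\ Atomic}$ axiom symmetrically to the two reads. Write $\tr_1 = \wro^{-1}(\rd[\id_1]{\xvar}{\val_1})$ and $\tr_1' = \wro^{-1}(\rd[\id_2]{\xvar}{\val_2})$ for the two transactions supplying the values read; since the inverse of $\wro$ is a total function (part of the definition of a history), these are well defined, and by the defining property of $\wro$ both write $\xvar$, i.e., $\writeVar{\tr_1}{\xvar}$ and $\writeVar{\tr_1'}{\xvar}$. Assume toward a contradiction that $\tr_1 \neq \tr_1'$.

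First I would record the two write-read facts lifted to the level of transactions: $\tup{\tr_1,\tr}\in\wro[\xvar]$ and $\tup{\tr_1',\tr}\in\wro[\xvar]$, each witnessed by the corresponding read in $\tr$. In particular both pairs lie in $\wro$, hence in $\wro\cup\so$, which is exactly the shape required by the premise $\phi(\tr_2,\tr_3)=\tup{\tr_2,\tr_3}\in\wro\cup\so$ of the $\mathsf{Read\ Atomic}$ axiom.

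Next I would instantiate the axiom twice, in both cases with the reading transaction $\tr$ playing the role of $\tr_3$. Mapping the axiom's $(\tr_1,\tr_2,\tr_3)$ to $(\tr_1,\tr_1',\tr)$, the premises $\tr_1\neq\tr_1'$, $\tup{\tr_1,\tr}\in\wro[\xvar]$, $\writeVar{\tr_1'}{\xvar}$, and $\tup{\tr_1',\tr}\in\wro\cup\so$ all hold, so the axiom yields $\tup{\tr_1',\tr_1}\in\co$. Swapping the two writers, i.e., mapping $(\tr_1,\tr_2,\tr_3)$ to $(\tr_1',\tr_1,\tr)$, yields symmetrically $\tup{\tr_1,\tr_1'}\in\co$. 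Since $\co$ is a strict total order, hence irreflexive and antisymmetric, the pairs $\tup{\tr_1',\tr_1}\in\co$ and $\tup{\tr_1,\tr_1'}\in\co$ cannot both hold; this contradiction forces $\tr_1=\tr_1'$, establishing $\wro^{-1}(\rd[\id_1]{\xvar}{\val_1})=\wro^{-1}(\rd[\id_2]{\xvar}{\val_2})$.

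Finally, for the equality of values I would invoke the standing syntactic convention that each transaction contains at most one write per variable. As $\tr_1=\tr_1'$ is read from by both reads, the definition of $\wro$ gives $\wrt{\xvar}{\val_1}\in\writeOp{\tr_1}$ and $\wrt{\xvar}{\val_2}\in\writeOp{\tr_1}$; since $\tr_1$ writes $\xvar$ only once, these two operations coincide, so $\val_1=\val_2$. I expect the only delicate point to be the bookkeeping in the double instantiation of the axiom: making sure the $\wro[\xvar]$ premise is fed one writer while the $\wro\cup\so$ premise is fed the other, and that both instantiations reuse the same $\tr_3=\tr$. Everything else follows immediately from the order axioms on $\co$ and the single-write-per-variable assumption.
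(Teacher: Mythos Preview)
Your proof is correct and follows the natural approach: the symmetric double instantiation of the $\mathsf{Read\ Atomic}$ axiom to derive both $\tup{\tr_1',\tr_1}\in\co$ and $\tup{\tr_1,\tr_1'}\in\co$, contradicting the strictness of $\co$, is exactly the intended argument, and the appeal to the single-write-per-variable convention for the value equality is the right finishing step. This matches the paper's approach.
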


\begin{table}[t]
 \centering
 \begin{tabular}{|l|l|}
  \hline
  \shortstack{Consistency model}   & Axioms                                   \\
  \hline
  \textsc{Read Committed} (RC)     & $\mathsf{Read\ Committed}$               \\
  \hline
  \textsc{Read Atomic} (RA)        & $\mathsf{Read\ Atomic}$                  \\
  \hline
  \textsc{Causal consistency} (CC) & $\mathsf{Causal}$                        \\
  \hline
  \textsc{Prefix consistency} (PC) & $\mathsf{Prefix}$                        \\
  \hline
  \textsc{Snapshot isolation} (SI) & $\mathsf{Prefix}\land \mathsf{Conflict}$ \\
  \hline
  \textsc{Serializability} (SER)   & $\mathsf{Serializability}$               \\
  \hline
 \end{tabular}
 \caption{Consistency model definitions}
 \label{weakconsistency:2}
 \vspace{-3mm}
\end{table}

\textsc{Causal Consistency} (CC)~\cite{DBLP:journals/cacm/Lamport78} is defined by the axiom $\mathsf{Causal}$, which states that for any transaction $\tr_1$ writing a variable $\xvar$ that is read in a transaction $\tr_3$, the set of $(\wro\cup \so)^+$ predecessors of $\tr_3$ writing $\xvar$ must precede $\tr_1$ in commit order ($(\wro\cup \so)^+$ is usually called the \emph{causal} order). A violation of this axiom can be found in Figure~\ref{cc_example:1}: the transaction $\tr_2$ writing 2 to {\tt x} is a $(\wro\cup \so)^+$ predecessor of the transaction $\tr_3$ reading 1 from {\tt x} because the transaction $\tr_4$, writing 1 to {\tt y}, reads {\tt x} from $\tr_2$ and $\tr_3$ reads {\tt y} from $\tr_4$. This implies that $\tr_2$ should precede in commit order the transaction $\tr_1$ writing 1 to {\tt x}, which again, is inconsistent with the write-read relation ($\tr_2$ reads from $\tr_1$).

\textsc{Prefix consistency} (PC)~\cite{DBLP:conf/ecoop/BurckhardtLPF15} is a strengthening of CC, which requires that every transaction observes a prefix of a commit order between all the transactions. With the intuition that the observed transactions are $\wro\cup\so$ predecessors, the axiom $\mathsf{Prefix}$ defining PC, states that for any transaction $\tr_1$ writing a variable $\xvar$ that is read in a transaction $\tr_3$, the set of $\co^*$ predecessors of transactions observed by $\tr_3$ writing $\xvar$ must precede $\tr_1$ in commit order (we use $\co^*$ to say that even the transactions observed by $\tr_3$ must precede $\tr_1$). This ensures the prefix property stated above. An example of a PC violation can be found in Figure~\ref{pre_example:1}: the two transactions on the bottom read from the three transactions on the top, but any serialization of those three transactions will imply that one of the combinations {\tt x=1}, {\tt y=2} or {\tt x=2}, {\tt y=1} cannot be produced at the end of a prefix in this serialization.

\textsc{Snapshot Isolation} (SI)~\cite{DBLP:conf/sigmod/BerensonBGMOO95} is a strengthening of PC that disallows two transactions to observe the same prefix of a commit order if they \emph{conflict}, i.e., write to a common variable. It is defined by the conjunction of $\mathsf{Prefix}$ and another axiom called $\mathsf{Conflict}$, which requires that for any transaction $\tr_1$ writing a variable $\xvar$ that is read in a transaction $\tr_3$, the set of $\co^*$ predecessors writing $\xvar$ of transactions conflicting with $\tr_3$ and before $\tr_3$ in commit order, must precede $\tr_1$ in commit order. Figure~\ref{conf_example:1} shows a $\mathsf{Conflict}$ violation.

Finally, \textsc{Serializability} (SER)~\cite{DBLP:journals/jacm/Papadimitriou79b} is defined by the axiom with the same name, which requires that for any transaction $\tr_1$ writing to a variable $\xvar$ that is read in a transaction $\tr_3$, the set of $\co$ predecessors of $\tr_3$ writing $\xvar$ must precede $\tr_1$ in commit order. This ensures that each transaction observes the effects of all the $\co$ predecessors. Figure~\ref{ser_example:1} shows a $\mathsf{Serializability}$ violation.

\begin{lemma}
 The following entailments hold:
 \begin{align*}
   & \mathsf{Causal} \implies \mathsf{Read\ Atomic}\implies \mathsf{Read\ Committed} \\
   & \mathsf{Prefix} \implies \mathsf{Causal}                                        \\
   & \mathsf{Serializability} \implies \mathsf{Prefix}\land \mathsf{Conflict}        
 \end{align*} 
 \label{axioms-rel}
\end{lemma}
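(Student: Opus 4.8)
The plan is to exploit the uniform shape of the axioms in Figure~\ref{consistency_defs}. Every axiom except $\mathsf{Conflict}$ is an instance of the schema
\[
\forall \xvar,\tr_1,\tr_2,\alpha.\ \tr_1\neq\tr_2\land \tup{\tr_1,\alpha}\in\wro[\xvar]\land\writeVar{\tr_2}{\xvar}\land\phi(\tr_2,\alpha)\implies\tup{\tr_2,\tr_1}\in\co,
\]
differing only in the predicate $\phi$. Hence, to prove an entailment $A\implies B$ between two such axioms it suffices to fix a tuple $\tup{\hist,\co}$ that satisfies $A$, take an arbitrary instance of the quantified variables making the premise of $B$ true (in particular $\phi_B(\tr_2,\alpha)$), and show that $\phi_A(\tr_2,\alpha)$ also holds; the conclusion $\tup{\tr_2,\tr_1}\in\co$ then follows from $A$. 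The whole lemma thus reduces to a handful of inclusions between the predicates $\phi$, which I would establish using two facts guaranteed by the framework: $\co$ is a strict total order, hence transitive, and it contains $\wro\cup\so$ (the commit order extends the write-read relation and the session order).

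With these two facts the three ``same-schema'' chains are immediate relational computations. For $\mathsf{Causal}\implies\mathsf{Read\ Atomic}$ I use $\wro\cup\so\subseteq(\wro\cup\so)^+$, so $\phi_{RA}$ entails $\phi_{CC}$. For $\mathsf{Prefix}\implies\mathsf{Causal}$, from $\wro\cup\so\subseteq\co$ and transitivity I get $(\wro\cup\so)^+=(\wro\cup\so)^*\circ(\wro\cup\so)\subseteq\co^*\circ(\wro\cup\so)$, so $\phi_{CC}$ entails $\phi_{PC}$. For $\mathsf{Serializability}\implies\mathsf{Prefix}$, the same two facts give $\co^*\circ(\wro\cup\so)\subseteq\co^*\circ\co\subseteq\co$, so $\phi_{PC}$ entails $\phi_{Ser}$. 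The remaining entailment $\mathsf{Serializability}\implies\mathsf{Conflict}$ is handled separately because the premise of $\mathsf{Conflict}$ has a different shape: given $\tup{\tr_2,\tr_4}\in\co^*$ and $\tup{\tr_4,\tr_3}\in\co$, transitivity yields $\tup{\tr_2,\tr_3}\in\co$, which is exactly $\phi_{Ser}(\tr_2,\tr_3)$; applying $\mathsf{Serializability}$ to $\tr_1,\tr_2,\tr_3$ (the shared premises $\writeVar{\tr_2}{\xvar}$ and $\tup{\tr_1,\tr_3}\in\wro[\xvar]$ being available) gives $\tup{\tr_2,\tr_1}\in\co$.

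The one step that needs slightly more care, and which I expect to be the main obstacle, is $\mathsf{Read\ Atomic}\implies\mathsf{Read\ Committed}$, since here the two axioms are not literally the same schema: in $\mathsf{Read\ Committed}$ the third quantified object $\alpha$ is a read \emph{operation} and the predicate is $\tup{\tr_2,\alpha}\in\wro\circ\po$, whereas $\mathsf{Read\ Atomic}$ quantifies over a \emph{transaction} $\tr_3$ with predicate $\tup{\tr_2,\tr_3}\in\wro\cup\so$. To bridge the levels I would argue as follows: $\tup{\tr_2,\alpha}\in\wro\circ\po$ means there is an operation $\beta$ with $\tup{\tr_2,\beta}\in\wro$ and $\tup{\beta,\alpha}\in\po$; since $\po$ relates operations of a single transaction, $\beta$ and $\alpha$ belong to the same transaction $\tr_3$, and therefore $\tup{\tr_2,\tr_3}\in\wro$ by the extension of $\wro$ to transactions, while $\tup{\tr_1,\alpha}\in\wro[\xvar]$ likewise lifts to $\tup{\tr_1,\tr_3}\in\wro[\xvar]$. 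Thus the $\mathsf{Read\ Atomic}$ premise holds for $\tr_1,\tr_2,\tr_3$ (with $\tup{\tr_2,\tr_3}\in\wro\subseteq\wro\cup\so$), and its conclusion $\tup{\tr_2,\tr_1}\in\co$ is precisely what $\mathsf{Read\ Committed}$ requires. Once this level-lifting observation is in place, the entire lemma is a routine chase through relation inclusions.
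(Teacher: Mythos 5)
Your proposal is correct and follows essentially the same route the paper takes: each entailment is reduced to showing that the $\phi$-premise of the weaker axiom is contained in the $\phi$-premise of the stronger one, using only $\wro\cup\so\subseteq\co$ and transitivity of the total order $\co$ (so $(\wro\cup\so)^+\subseteq\co^*\circ(\wro\cup\so)\subseteq\co$), together with the operation-to-transaction lifting needed for $\mathsf{Read\ Atomic}\implies\mathsf{Read\ Committed}$ and the direct transitivity step for $\mathsf{Serializability}\implies\mathsf{Conflict}$. No gaps; nothing further is needed.
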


\begin{definition}
 Given a set of axioms $X$ defining a criterion $C$ like in Table~\ref{weakconsistency:2}, a history $\hist=\tup{T, \so, \wro}$ \emph{satisfies} $C$ iff there exists a strict total order $\co$ such that $\wro\cup\so\subseteq \co$ and $\tup{h,\co}$ satisfies $X$.
 \label{axiom-criterion}
\end{definition}

Definition~\ref{axiom-criterion} and Lemma~\ref{axioms-rel} imply that each consistency criterion in Table~\ref{weakconsistency:2} is stronger than its predecessors (reading them from top to bottom), e.g., CC is stronger than RA and RC. This relation is strict, e.g., RA is not stronger than CC. 
These definitions are equivalent with previous formalizations by~\citet{DBLP:conf/concur/Cerone0G15} (see Appendix~\ref{app:gotsman}).

\section{Checking Consistency Criteria}\label{sec:general}


This section establishes the complexity of checking the different consistency criteria in Table~\ref{weakconsistency:2} for a given history. More precisely, we show that \textsc{Read Committed}, \textsc{Read Atomic}, and \textsc{Causal Consistency} can be checked in polynomial time while the problem of checking the rest of the criteria is NP-complete. 

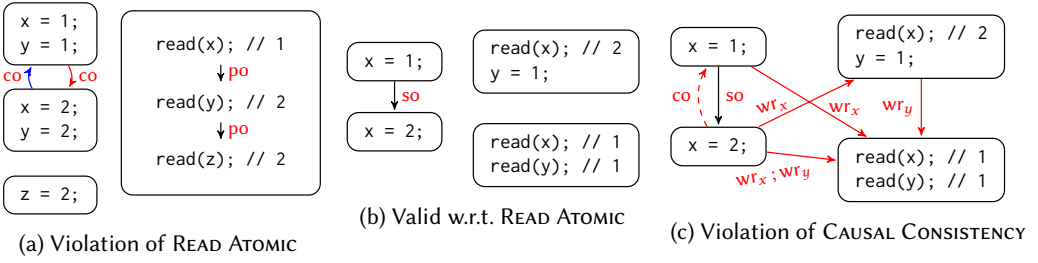
\begin{figure}
 \begin{subfigure}{.32\textwidth}
  \resizebox{\textwidth}{!}{
   \begin{tikzpicture}[->,>=stealth',shorten >=1pt,auto,node distance=3cm,
     semithick, transform shape]
    \node[draw, rounded corners=2mm] (t1) at (0, .2) {\begin{tabular}{l} \texttt{x = 1;} \\ \texttt{y = 1;} \end{tabular}};
    \node[draw, rounded corners=2mm] (t2) at (0, -1.3) {\begin{tabular}{l} \texttt{x = 2;} \\ \texttt{y = 2;} \end{tabular}};
    \node[draw, rounded corners=2mm] (t3) at (0, -2.6) {\begin{tabular}{l} \texttt{z = 2;} \end{tabular}};
    \node[draw, rounded corners=2mm, minimum width=3.5cm, minimum height=3.2cm] (t4) at (3, -1) {};
    \node (t4_1) at (3, .0) {\begin{tabular}{l} \texttt{read(x); // 1} \end{tabular}};
    \node (t4_2) at (3, -1) {\begin{tabular}{l} \texttt{read(y); // 2} \end{tabular}};
    \node (t4_3) at (3, -2) {\begin{tabular}{l} \texttt{read(z); // 2} \end{tabular}};
    \path (t4_1) edge node {$\po$} (t4_2);
    \path (t1) edge[red, bend left] node {$\co$} (t2);
    \path (t4_2) edge node {$\po$} (t4_3);
    \path (t2) edge[blue, bend left] node {$\co$} (t1);
   \end{tikzpicture}  
  }
  \caption{Violation of \textsc{Read Atomic}}
  \label{ra_algo_counter_example:1}
 \end{subfigure}
  \hspace{.05cm}
 \begin{subfigure}{.29\textwidth}
  \resizebox{\textwidth}{!}{
   \begin{tikzpicture}[->,>=stealth',shorten >=1pt,auto,node distance=3cm,
     semithick, transform shape]
    \node[draw, rounded corners=2mm] (s11) at (0, 0) {\begin{tabular}{l} \texttt{x = 1;} \end{tabular}};
    \node[draw, rounded corners=2mm] (s12) at (0, -1.2) {\begin{tabular}{l} \texttt{x = 2;} \end{tabular}};
    \node[draw, rounded corners=2mm] (t1) at (2.8, 0) {\begin{tabular}{l} \texttt{read(x); // 2} \\ \texttt{y = 1;} \end{tabular}};
    \node[draw, rounded corners=2mm] (t2) at (2.8, -1.6) {\begin{tabular}{l} \texttt{read(x); // 1} \\ \texttt{read(y); // 1} \end{tabular}};
    \path (s11) edge node {$\so$} (s12);
   \end{tikzpicture}  
  }
  \caption{Valid w.r.t. \textsc{Read Atomic}}
  \label{ra_algo_example:1}
 \end{subfigure}
 \hspace{.05cm}
 \begin{subfigure}{.36\textwidth}
  \resizebox{.93\textwidth}{!}{
   \begin{tikzpicture}[->,>=stealth',shorten >=1pt,auto,node distance=4cm,
     semithick, transform shape]
    \node[draw, rounded corners=2mm] (s11) at (0, 0) {\begin{tabular}{l} \texttt{x = 1;} \end{tabular}};
    \node[draw, rounded corners=2mm] (s12) at (0, -1.7) {\begin{tabular}{l} \texttt{x = 2;} \end{tabular}};
    \node[draw, rounded corners=2mm] (t1) at (3.5, 0) {\begin{tabular}{l} \texttt{read(x); // 2} \\ \texttt{y = 1;} \end{tabular}};
    \node[draw, rounded corners=2mm] (t2) at (3.5, -2.1) {\begin{tabular}{l} \texttt{read(x); // 1} \\ \texttt{read(y); // 1} \end{tabular}};
    \path (s11) edge node {$\so$} (s12);
    \path (s11) edge[red, right] node[pos=0.6] {$\wro[\xvar]$} (t2);
    \path (s12) edge[dashed, red, bend left] node {$\co$} (s11);
    \path (s12) edge[red, left,rotate=10] node[pos=.7,yshift=-6] {$\wro[\xvar] \circ \wro[\yvar]$} (t2);
    \path (s12) edge[red, left,rotate=17] node[pos=0.4,yshift=4] {$\wro[\xvar]$} (t1);
    \path (t1) edge[red, left] node {$\wro[\yvar]$} (t2);
   \end{tikzpicture}  
  }
  \caption{Violation of \textsc{Causal Consistency}}
  \label{cc_algo_counter_example:1}
 \end{subfigure}
 \caption{Applying the RA and CC checking algorithms.}
 \vspace{-3mm}
 \label{ptime_algo_examples}
\end{figure}

Intuitively, the polynomial time results are based on the fact that the axioms defining those consistency criteria do not contain the commit order ($\co$) on the left-hand side of the entailment. Therefore, proving the existence of a commit order satisfying those axioms can be done using a saturation procedure that builds a ``partial'' commit order based on instantiating the axioms on the write-read relation and the session order in the given history. Since the commit order must be an extension of the write-read relation and the session order, it contains those two relations from the beginning. 
This saturation procedure stops when the order constraints derived this way become cyclic. For instance, let us consider applying such a procedure corresponding to RA on the histories in Figure~\ref{ra_algo_counter_example:1} and Figure~\ref{ra_algo_example:1}. Applying the axiom in Figure~\ref{ra_def} on the first history, since the transaction on the right reads 2 from $\yvar$, we get that its $\wro[\xvar]$ predecessor (i.e., the first transaction on the left) must precede the transaction writing 2 to $\yvar$ in commit order (the red edge). This holds because the $\wro[\xvar]$ predecessor writes on $\yvar$. Similarly, since the same transaction reads 1 from $\xvar$, we get that its $\wro[\yvar]$ predecessor must precede the transaction writing 1 to $\xvar$ in commit order (the blue edge). This already implies a cyclic commit order, and therefore, this history does not satisfy RA. On the other hand, for the history in Figure~\ref{ra_algo_example:1}, all the axiom instantiations are vacuous, i.e., the left part of the entailment is false, and therefore, it satisfies RA. Checking CC on the history in Figure~\ref{cc_algo_counter_example:1} requires a single saturation step: since the transaction on the bottom right reads 1 from $\xvar$, its $\wro[\xvar]\circ\wro[\yvar]$ predecessor that writes on $\xvar$ (the transaction on the bottom left) must precede in commit order the transaction writing 1 to $\xvar$. Since this is already inconsistent with the session order, we get that this history violates CC.

\begin{algorithm}[t]
{\footnotesize
 \SetKwInOut{KwInput}{Input}
 \SetKwInOut{KwOutput}{Output}
 \KwIn{A history $\hist = \tup{T, \so, \wro}$}
 \KwOut{$\mathit{true}$ iff $\hist$ satisfies \textsc{Causal consistency}}
 \BlankLine
 \If{$\so\cup\wro$ is cyclic} {
  \Return{false}\;
 }
 $\co \leftarrow \so\cup\wro$\;
 \ForEach{$\xvar \in \vars{\hist}$}{
  \ForEach{$\tr_1 \neq \tr_2 \in T$ s.t. $\tr_1$ and $\tr_2$ write $\xvar$}{
   \If{$\exists \tr_3.\ \tup{\tr_1,\tr_3}\in \wro[\xvar]\land \tup{\tr_2,\tr_3}\in (\so\cup\wro)^+$} { 
    $\co \leftarrow \co \cup \{\tup{\tr_2, \tr_1}\}$\;
   }
  }
 }
 \eIf{$\co$ is cyclic}{
  \Return{false}\;
 }{
  \Return{true}\;
 }}
 \caption{Checking \textsc{Causal consistency}}
 \label{ccalgo:1}
\end{algorithm}

Algorithm~\ref{ccalgo:1} lists our procedure for checking CC. As explained above, $\CO$ is initially set to $\so\cup \wro$, and then, it is saturated with other ordering constraints implied by non-vacuous instantiations of the axiom $\mathsf{Causal}$ (where the left-hand side of the implication evaluates to true). The algorithms concerning RC and RA are defined in a similar way by essentially changing the test at line 6 so that it corresponds to the left-hand side of the implication in the corresponding axiom. Algorithm~\ref{ccalgo:1} can be rewritten as a Datalog program containing straightforward Datalog rules for computing transitive closures and relation composition, and a rule of the form\footnote{We write Datalog rules using a standard notation $\mathit{head}\text{ :- }\mathit{body}$ where $\mathit{head}$ is a relational atom (written as $\tup{a,b}\in R$ where $a$, $b$ are elements and $R$ a binary relation) and $\mathit{body}$ is a list of relational atoms.}
\begin{align*}
\tup{\tr_2, \tr_1} \in \CO \text{ :- } \tr_1\neq\tr_2, \tup{\tr_1,\tr_3}\in \wro[\xvar], \tup{\tr_2,\tr_3}\in (\so\cup\wro)^+
\end{align*}
to represent the $\mathsf{Causal}$ axiom.
The following is a consequence of the fact that these algorithms run in polynomial time (or equivalently, the Datalog programs can be evaluated in polynomial time over a database that contains the $\wro$ and $\so$ relations in a given history).

\begin{theorem}
For any criterion $C \in \{\emph{\textsc{Read Committed}}, \emph{\textsc{Read Atomic}}, \emph{\textsc{Causal consistency}} \}$, 
the problem of checking whether a given history satisfies $C$ 
is polynomial time.
\end{theorem}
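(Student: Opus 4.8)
The plan is to show that for each of the three criteria the saturation procedure of Algorithm~\ref{ccalgo:1}, adapted to the relevant axiom, decides membership and runs in polynomial time. The structural feature I would exploit is that, for RC, RA, and CC, the premise (left-hand side) of the defining axiom mentions only the fixed relations of the history---$\wro$, $\so$, $\po$---and never the commit order $\co$ itself. Hence the commit-order edges that the axiom forces can be computed once and for all, directly from the input, without knowing $\co$.

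Concretely, I would define a relation $R$ on $T$ by $R = (\so\cup\wro)\cup F$, where $F$ is the set of pairs $\tup{\tr_2,\tr_1}$ produced by every non-vacuous instantiation of the axiom, \ie every instantiation whose premise holds. For CC this $R$ is exactly the relation $\co$ computed by Algorithm~\ref{ccalgo:1}; for RA and RC one replaces the premise test at line~6 by $\tup{\tr_2,\tr_3}\in\wro\cup\so$, resp.\ $\tup{\tr_2,\alpha}\in\wro\circ\po$. The key claim to prove is then: the history satisfies $C$ iff $R$ is acyclic.

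For the forward direction, suppose a witnessing commit order $\co$ exists. By Definition~\ref{axiom-criterion} we have $\so\cup\wro\subseteq\co$, and since the premise of each edge in $F$ holds in the history while $\tup{\hist,\co}$ satisfies the axiom, every pair in $F$ lies in $\co$; thus $R\subseteq\co$, and as $\co$ is a strict total order (hence acyclic), $R$ is acyclic. For the converse, if $R$ is acyclic I would take any linear extension $\co$ of $R$ into a strict total order on $T$ (a topological sort). Then $\so\cup\wro\subseteq R\subseteq\co$, and every instantiation whose premise holds contributes a pair already in $F\subseteq\co$, so $\tup{\hist,\co}$ satisfies the axiom. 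The soundness of this last step rests precisely on the $\co$-independence of the premise: enlarging $R$ into a total order cannot make any new premise become true, so no further edges are required and the chosen $\co$ validates the axiom.

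The main obstacle I anticipate is making this ``any linear extension works'' argument airtight, \ie verifying that totalizing $R$ introduces no new obligations; this is exactly where the contrast with PC, SI, and SER appears, since there the premise refers to $\co$ (through $\co^*$ or $\co$) and extending the order can trigger fresh instantiations, which is why the monotone saturation fails and those problems are NP-complete. The remaining work is routine complexity bookkeeping: $F$ is computable by enumerating variables and transactions $\tr_1,\tr_2,\tr_3$ (for CC after precomputing the transitive closure $(\so\cup\wro)^+$, for RA and RC directly), which is polynomial in $|T|$ and $|\vars{\hist}|$, and acyclicity of $R$ is decidable in linear time by depth-first search. Hence the overall procedure is polynomial time, which yields the theorem.
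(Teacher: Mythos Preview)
Your proposal is correct and follows essentially the same approach as the paper: exploit that the premises of the RC, RA, and CC axioms are $\co$-free, compute the forced edges $F$ once from the history, and decide the criterion by acyclicity of $(\so\cup\wro)\cup F$. Your write-up is in fact more explicit than the paper's main text about the bidirectional correctness argument (in particular the ``any linear extension works'' step), but the underlying idea and algorithm are the same.
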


On the other hand, checking PC, SI, and SER is NP-complete in general. We show this using a reduction from boolean satisfiability (SAT) that covers uniformly all the three cases. In the case of SER, it provides a new proof of the NP-completeness result by \citet{DBLP:journals/jacm/Papadimitriou79b} which uses a reduction from the so-called \emph{non-circular} SAT and which cannot be extended to PC and SI.

\begin{theorem}
 \label{npcproof:0}
\hspace{-2mm}
For any criterion $C \hspace{-.7mm}\in\hspace{-.7mm} \{\emph{\textsc{Prefix Consistency}},\hspace{-.5mm}\emph{\textsc{Snapshot Isolation}},\hspace{-.5mm}\emph{\textsc{Serializability}} \}$
the problem of checking whether a given history satisfies $C$ 
is NP-complete.
\end{theorem}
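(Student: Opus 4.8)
The plan is to prove the two directions of NP-completeness separately: membership in NP, which is uniform and easy for all three criteria, and NP-hardness, via a single reduction from boolean satisfiability (SAT).

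For membership, recall that by Definition~\ref{axiom-criterion} a history $\hist = \tup{T, \so, \wro}$ satisfies $C$ iff there exists a strict total order $\co$ on $T$ with $\so \cup \wro \subseteq \co$ satisfying the corresponding axioms. A strict total order on $T$ has size polynomial in $|T|$, so I would guess it nondeterministically. Checking that it extends $\so \cup \wro$ is immediate, and each axiom in Figure~\ref{consistency_defs} is a first-order sentence with a fixed number of universally quantified transactions and variables (at most four transactions, two variables) whose body only tests membership in $\co$, $\wro$, $\so$, and their fixed compositions and closures; these are all computable in polynomial time from the guessed $\co$, so the sentence can be verified by iterating over all relevant tuples of transactions. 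Hence checking $C$ is in NP for each $C \in \{\text{PC}, \text{SI}, \text{SER}\}$.

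For hardness, I would reduce SAT to the checking problem by a construction that is uniform across the three criteria. Given a CNF formula $\phi$, the goal is to build in polynomial time a history $\hist_\phi$ with the property that $\hist_\phi$ satisfies PC iff it satisfies SI iff it satisfies SER iff $\phi$ is satisfiable. The leverage for uniformity is Lemma~\ref{axioms-rel}: at the level of a fixed commit order, $\mathsf{Serializability}$ entails $\mathsf{Prefix}\land\mathsf{Conflict}$, which entails $\mathsf{Prefix}$. So it suffices to design $\hist_\phi$ so that (i) whenever $\phi$ is satisfiable, some commit order satisfies the strongest axiom $\mathsf{Serializability}$, hence all three; and (ii) whenever $\phi$ is unsatisfiable, no commit order extending $\so\cup\wro$ satisfies even the weakest axiom $\mathsf{Prefix}$, hence none of the three. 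These two implications pin down $\hist_\phi$ for all criteria at once. Concretely, I would use the distinguished initial transaction to write all starting values and encode the formula with two kinds of gadgets: a variable gadget for each propositional variable $p$, consisting of transactions writing a dedicated data variable whose relative position in $\co$ is free, so the two admissible orderings encode the truth value of $p$; and a clause gadget for each clause, with reads and writes wired so that the axiom schema forces $\co$-edges between variable-gadget transactions. The wiring is chosen so that the edges forced when all literals of a clause are false close a cycle together with the $\so\cup\wro$ edges, whereas satisfying at least one literal leaves enough freedom to linearize.

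The main obstacle is getting a single history to work for all three axioms simultaneously. Because the premises differ, $\mathsf{Prefix}$ propagates along $\co^* \circ (\wro\cup\so)$ while $\mathsf{Serializability}$ reacts only to a direct $\co$ predecessor, so the set of forced edges, and hence the possible cycles, is not the same for the three criteria. The delicate part is therefore to verify both halves of the dichotomy at the two extremes of the hierarchy: that the order extracted from a satisfying assignment introduces no $\mathsf{Serializability}$ violation (the harder check, since it is the most constraining axiom, requiring in particular that no intervening same-variable write sits in $\co$ between a writer and its reader), and that unsatisfiability already defeats $\mathsf{Prefix}$ (the least constraining one). Establishing these two facts for the same $\hist_\phi$ is where the care in choosing the read/write wiring of the clause gadgets is concentrated.
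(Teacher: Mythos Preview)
Your proposal is correct and matches the paper's approach: a uniform reduction from SAT producing a single history $\hist_\phi$ with variable and clause gadgets, arguing serializability from a satisfying assignment and a $\mathsf{Prefix}$ violation from unsatisfiability, so that the entailments of Lemma~\ref{axioms-rel} handle all three criteria at once. The paper carries out precisely this plan, with the technical work concentrated---as you anticipate---in wiring the clause gadgets so that the $\co$-edges forced by the axiom schema close a cycle exactly when every literal of the clause is false.
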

\tikzset{transaction state /.style={draw=black!0}}
\begin{proof}
 Given a history, any of these three criteria 
 can be checked by guessing a total commit order on its transactions and verifying whether it satisfies the corresponding axioms. This shows that the problem is in NP.
 
 To show NP-hardness, we define a reduction from boolean satisfiability. Therefore, let $\varphi=D_1\land\ldots\land D_m$ be a \textsf{CNF} formula over the boolean variables $x_1, \ldots, x_n$ where each $D_i$ is a disjunctive clause with $m_i$ literals.  
 Let $\lambda_{ij}$ denote the $j$-th literal of $D_i$. 
 
We construct a history $h_\varphi$ such that $\varphi$ is satisfiable if and only if $h_\varphi$ satisfies PC, SI, or SER. Since $SER\implies SI\implies PC$, we show that (1) if $h_\varphi$ satisfies PC, then $\varphi$ is satisfiable, and (2) if $\varphi$ is satisfiable, then $h_\varphi$ satisfies SER.

The main idea of the construction is to represent truth values of each of the variables and literals in $\varphi$ with the polarity of the commit order between corresponding transaction pairs.
For each variable $x_k$, $h_\varphi$ contains a pair of transactions $a_k$ and $b_k$, and for each literal $\lambda_{ij}$, $h_\varphi$ contains a set of transactions $w_{ij}$, $y_{ij}$ and $z_{ij}$~\footnote{We assume that the transactions $a_k$ and $b_k$ associated to a variable $x_k$ are distinct and different from the transactions associated to another variable $x_{k'}\neq x_k$ or to a literal $\lambda_{ij}$. Similarly, for the transactions $w_{ij}$, $y_{ij}$ and $z_{ij}$ associated to a literal $\lambda_{ij}$.}. We want to have that $x_k$ is false if and only if $\tup{a_k, b_k} \in \CO$, and $\lambda_{ij}$ is false if and only if $\tup{y_{ij}, z_{ij}} \in \CO$ (the transaction $w_{ij}$ is used to "synchronize" the truth value of the literals with that of the variables, which is explained later).

The history $h_\varphi$ should ensure that the $\co$ ordering constraints corresponding to an assignment that falsifies the formula (\ie one of its clauses) form a cycle. 
To achieve that, we add all pairs $\tup{z_{ij}, y_{i,(j+1)\% m_i}}$ in the session order $\so$. An unsatisfied clause $D_i$, \ie every $\lambda_{ij}$ is false, leads to a cycle of the form $y_{i1} \xrightarrow{\CO} z_{i1} \xrightarrow{\so} y_{i2} \xrightarrow{\co} z_{i2} \cdots z_{i m_i} \xrightarrow{\so} y_{i1}$.

\begin{figure}[t]
 \resizebox{\textwidth}{!}{
  \begin{subfigure}{.55\textwidth}
   \centering
   \begin{tikzpicture}[->,>=stealth',shorten >=1pt,auto,node distance=4cm,
     semithick, transform shape]
    \node[transaction state] at (5, 2) (a_j)                                    {$a_k$};
    \node[transaction state] at (5,0)       (b_j)           {$b_k$};
    \node[transaction state] at (4,0)        (w_ik)                     {$w_{ij}$};
    \node[transaction state,label={$\writeVar{}{v_{ij}}$}] at (2,2) (y_ik) {$y_{ij}$};
    \node[transaction state] at (2,0) (z_ik) {$z_{ij}$};
    \node[transaction state] at (.5,2.5) (z_ik1) {$z_{i,j-1}$};
    \node[transaction state] at (.5, -.5) (y_ik1) {$y_{i,j+1}$};
    
    \path (a_j) edge[dotted, red, bend left=25] node {\CO} (b_j)
    (b_j) edge[dashed, red, bend left=25] node {\CO} (a_j);
    
    \path (y_ik) edge[dotted, blue, bend left=25] node {\CO} (z_ik)
    (z_ik) edge[dashed, blue, bend left=25] node {\CO} (y_ik)
    (z_ik) edge node {$\wro[v_{ij}]$} (w_ik);
    
    \path (y_ik) edge node {$\so$} (a_j)
    (b_j) edge node[above] {$\so$} (w_ik)
    (z_ik1) edge node[below] {$\so$} (y_ik)
    (z_ik) edge node {$\so$} (y_ik1);
   \end{tikzpicture}
   \caption{$\lambda_{ij} = x_k$}
   \label{fig:lambda_i_k_x_j_}
  \end{subfigure}
  \begin{subfigure}{.55\textwidth}
   \centering
   \begin{tikzpicture}[->,>=stealth',shorten >=1pt,auto,node distance=4cm,
     semithick, transform shape]
    \node[transaction state] at (5, 2) (b_j)                                    {$b_k$};
    \node[transaction state] at (5,0)       (a_j)           {$a_k$};
    \node[transaction state] at (4,0)        (w_ik)                     {$w_{ij}$};
    \node[transaction state,label={$\writeVar{}{v_{ij}}$}] at (2,2) (y_ik) {$y_{ij}$};
    \node[transaction state] at (2,0) (z_ik) {$z_{ij}$};
    \node[transaction state] at (.5,2.5) (z_ik1) {$z_{i,j-1}$};
    \node[transaction state] at (.5, -.5) (y_ik1) {$y_{i,j+1}$};
    
    \path (a_j) edge[dashed, red, bend left=25] node {\CO} (b_j)
    (b_j) edge[dotted, red, bend left=25] node {\CO} (a_j);
    
    \path (y_ik) edge[dotted, blue, bend left=25] node {\CO} (z_ik)
    (z_ik) edge[dashed, blue, bend left=25] node {\CO} (y_ik)
    (z_ik) edge node {$\wro[v_{ij}]$} (w_ik);
    
    \path (y_ik) edge node {$\so$} (b_j)
    (a_j) edge node[above] {$\so$} (w_ik)
    (z_ik1) edge node[below] {$\so$} (y_ik)
    (z_ik) edge node {$\so$} (y_ik1);
   \end{tikzpicture}
   \caption{$\lambda_{ij} = \neg x_k$}
   \label{fig:lambda_i_k_n_x_j_}
  \end{subfigure}
 }
  \vspace{-2mm}
 \caption{Sub-histories included in $h_\varphi$ for each literal $\lambda_{ij}$ and variable $x_k$.}
 \vspace{-3mm}
\end{figure}
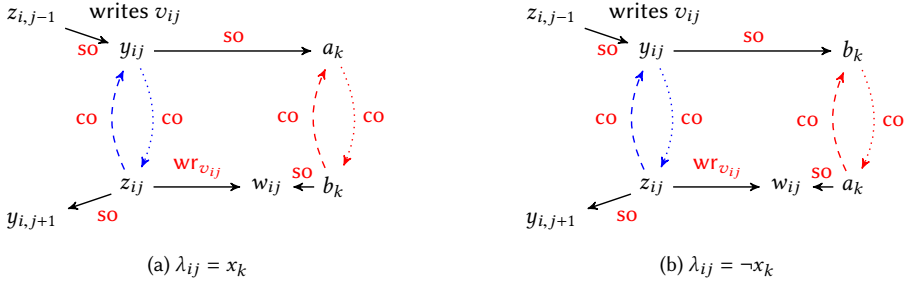

The most complicated part of the construction is to ensure the consistency between the truth value of the literals and the truth value of the variables, e.g., $\lambda_{ij} = x_k$ is false iff $x_k$ is false. We use special sub-histories to enforce that if history $h_\varphi$ satisfies PC (i.e., the axiom $\mathsf{Prefix}$), 
then there exists a commit order $\CO$ such that $\tup{h_\varphi,\co}$ satisfies $\mathsf{Prefix}$ (Figure~\ref{pre_def}) and:
\begin{align}
\mbox{$\tup{a_k, b_k} \in \CO$ iff $\tup{y_{ij}, z_{ij}} \in \CO$ when $\lambda_{ij} = x_k$, and }\label{eq:iffs}\\
\mbox{$\tup{a_k, b_k} \in \CO$ iff $\tup{z_{ij}, y_{ij}} \in \CO$ when $\lambda_{ij} = \neg x_k$}.\nonumber
\end{align}
%
%
%
%
Figure~\ref{fig:lambda_i_k_x_j_} shows the sub-history associated to a positive literal $\lambda_{ij} = x_k$ while Figure~\ref{fig:lambda_i_k_n_x_j_} shows the case of a negative literal $\lambda_{ij} = \neg x_k$.

For a positive literal $\lambda_{ij} = x_k$ (Figure~\ref{fig:lambda_i_k_x_j_}), (1) we enrich session order with the pairs $\tup{y_{ij}, a_k}$ and $\tup{b_k, w_{ij}}$, (2) we include writes to a variable $v_{ij}$ in the transactions $y_{ij}$ and $z_{ij}$, and (3) we make $w_{ij}$ read from $z_{ij}$, $\ie$ $\tup{z_{ij}, w_{ij}}\in \wro_{v_{ij}}$. The case of a negative literal is similar, switching the roles of $a_k$ and $b_k$.

This construction ensures that if the $\co$ goes downwards on the right-hand side ($\tup{a_k, b_k} \in \co$ in the case of a positive literal, and $\tup{b_k, a_k} \in \co$ in the case of a negative literal), then it must also go downwards on the left-hand side ($\tup{y_{ij}, z_{ij}} \in \co$) to satisfy $\mathsf{Prefix}$. For instance, in the case of a positive literal, note that if $\tup{a_k, b_k} \in \co$, then $\tup{y_{ij}, w_{ij}} \in \so \circ \co \circ \so$. Therefore, for every commit order $\CO$ such that $\tup{h_\varphi,\co}$ satisfies $\mathsf{Prefix}$, $\tup{a_k, b_k} \in \CO$ implies $\tup{y_{ij}, z_{ij}} \in \co$. Indeed, if $\tup{a_k, b_k} \in \CO$, instantiating the $\mathsf{Prefix}$ axiom  where $y_{ij}$ plays the role of $t_2$, $z_{ij}$ plays the role of $t_1$, and $w_{ij}$ plays the role of $t_3$, we obtain that $\tup{y_{ij}, z_{ij}} \in \co$. 

In contrast, when the $\co$ goes upwards on the right-hand side ($\tup{b_k, a_k} \in \co$ in the case of a positive literal, and $\tup{a_k, b_k} \in \co$ in the case of a negative literal) then it imposes no constraint on the direction of $\co$ on the left-hand side. Therefore, any commit order $\co$ satisfying $\mathsf{Prefix}$ that goes upwards on the right-hand side (e.g., $\tup{b_k, a_k} \in \co$ in the case of a positive literal) and downwards on the left-hand side ($\tup{y_{ij}, z_{ij}} \in \co$) in some sub-history (associated to some literal), thereby contradicting Property (\ref{eq:iffs}), can be modified into another commit order satisfying $\mathsf{Prefix}$ that goes upwards on the left-hand side as well. Formally, let $\CO$ be a commit order such that $\tup{h_\varphi,\co}$ satisfies $\mathsf{Prefix}$ and 
\begin{align*}
\tup{b_k, a_k} \in \CO \land \tup{y_{ij}, z_{ij}} \in \CO
\end{align*}
for some literal $\lambda_{ij} = x_k$ (the case of negative literals can be handled in a similar manner). Let $\CO_1$ be the restriction of $\CO$ on the set of tuples 
\begin{align*}
\{\tup{a_{k'}, b_{k'}}, \tup{b_{k'}, a_{k'}} | 1\leq k'\leq n\} \cup \{\tup{y_{i'j'}, z_{i'j'}}, \tup{z_{i'j'}, y_{i'j'}} | \text{for each }i', j'\} \cup \so \cup \wro. 
\end{align*}
Since $\CO_1 \subseteq \CO$, we have that $\CO_1$ is acyclic. 
Let $\CO_2$ be a relation obtained from $\CO_1$ by flipping the order between $y_{ij}$ and $z_{ij}$ (\ie $\CO_2 = \CO_1 \setminus \{ \tup{y_{ij}, z_{ij}} \} \cup \{ \tup{z_{ij}, y_{ij}} \}$). This flipping does not introduce any cycle because $\CO_2$ contains no path ending in $z_{ij}$ (see Fig~\ref{fig:lambda_i_k_x_j_}). Also, $\CO_2$ still satisfies the $\mathsf{Prefix}$ axiom (since $\tup{b_k, a_k} \in \CO_2$ there is no path from $y_{ij}$ to $w_{ij}$ satisfying the constraints in the $\mathsf{Prefix}$ axiom). Since $\CO_2$ is acyclic, it can be extended to a total commit order $\CO_3$ that satisfies $\mathsf{Prefix}$. This is a consequence of the following lemma whose proof follows easily from definitions (the part of this lemma concerning $\mathsf{Serializability}$ will be used later).

\begin{lemma}\label{lem:extensions}
Let $\co$ be an acyclic relation that includes $\so\cup\wro$, $\tup{a_{k}, b_{k}}$ or $\tup{b_{k}, a_{k}}$, for each $k$, and $\tup{y_{ij},z_{ij}}$ or $\tup{z_{ij},y_{ij}}$, for each $i$, $j$. For each axiom $A\in \{\mathsf{Prefix}, \mathsf{Serializability}\}$, if $\tup{h_\varphi,\co}$ satisfies $A$, then there exists a total commit order $\co'$ such that $\co\subseteq \co'$ and $\tup{h_\varphi,\co'}$ satisfies $A$.
\end{lemma}

Therefore, $\tup{h_\varphi,\co_3}$ satisfies $\mathsf{Prefix}$, and $\tup{b_k, a_k} \in \CO_3 \land \tup{z_{ij}, y_{ij}} \in \CO_3$ ($\co_3$ goes upwards on both sides of a sub-history like in Figure~\ref{fig:lambda_i_k_x_j_}). This transformation can be applied iteratively until obtaining a commit order that satisfies both $\mathsf{Prefix}$ and Property (\ref{eq:iffs}).

Next, we complete the correctness proof of this reduction. 
For the ``if'' direction, if $h_\varphi$ satisfies PC, then there exists a total commit order $\co$ between the transactions described above, which together with $h_\varphi$ satisfies $\mathsf{Prefix}$. The assignment of the variables $x_k$ explained above (defined by the $\co$ order between $a_k$ and $b_k$, for each $k$) satisfies the formula $\varphi$ since there exists no cycle between the transactions $y_{ij}$ and $z_{ij}$, which implies that for each clause $D_i$, there exists a $j$ such that $\tup{y_{ij}, z_{ij}} \not \in \CO$ which means that $\lambda_{ij}$ is satisfied.
For the ``only-if'' direction, let $\gamma$ be a satisfying assignment for $\varphi$. Also, let $\CO'$ be a binary relation that includes $\so$ and $\wro$ such that if $\gamma(x_k)=\mathit{false}$, then $\tup{a_k, b_k} \in \CO'$, $\tup{y_{ij}, z_{ij}} \in \CO'$ for each $\lambda_{ij} = x_k$, and $\tup{z_{ij}, y_{ij}} \in \CO'$ for each $\lambda_{ij} = \neg x_k$, and if $\gamma(x_k)=\mathit{true}$, then $\tup{b_k, a_k} \in \CO'$, $\tup{z_{ij}, y_{ij}} \in \CO'$ for each $\lambda_{ij} = x_k$, and $\tup{y_{ij}, z_{ij}} \in \CO'$ for each $\lambda_{ij} = \neg x_k$. Note that $\CO'$ is acyclic: no cycle can contain $w_{ij}$ because $w_{ij}$ has no ``outgoing'' dependency (\ie $\CO'$ contains no pair with $w_{ij}$ as a first component), there is no cycle including some pair of transactions $a_k$, $b_k$ and some pair $y_{ij}$, $z_{ij}$ because there is no way to reach $y_{ij}$ or $z_{ij}$ from $a_k$ or  $b_k$, there is no cycle including only transactions $a_k$ and $b_k$ because $a_{k_1}$ and $b_{k_1}$ are not related to $a_{k_2}$ and $b_{k_2}$, for $k_1\neq k_2$, there is no cycle including transactions $y_{i_1,j_1}$, $z_{i_1,j_1}$ and $y_{i_2,j_2}$, $z_{i_2,j_2}$ for $i_1\neq i_2$ since these are disconnected as well, and finally, there is no cycle including only transactions $y_{ij}$ and $z_{ij}$, for a fixed $i$, because $\varphi$ is satisfiable. By Lemma~\ref{lem:extensions}, the acyclic relation $\co'$ can be extended to a total commit order $\co$ which together with $h_\varphi$ satisfies the $\mathsf{Serializability}$ axiom. Therefore, $h_\varphi$ satisfies SER.
%
%
%
%
\end{proof}

\section{Checking Consistency of Bounded-Width Histories}\label{sec:bounded_width}


In this section, we show that checking prefix consistency, snapshot isolation, and serializability becomes polynomial time under the assumption that the \emph{width} of the given history, i.e., the maximum number of mutually-unordered transactions w.r.t. the session order, is bounded by a fixed constant. If we consider the standard case where the session order is a union of transaction sequences (modulo the fictitious transaction writing the initial values), i.e., a set of sessions, then the width of the history is the number of sessions. We start by presenting an algorithm for checking serializability which is polynomial time when the width is bounded by a fixed constant. In general, the asymptotic complexity of this algorithm is exponential in the width of the history, but this worst-case behavior is not exercised in practice as shown in Section~\ref{sec:exp}. Then, we prove that checking prefix consistency and snapshot isolation can be reduced in polynomial time to the problem of checking serializability. 

\subsection{Checking Serializability}\label{ssec:ser_checking}

We present an algorithm for checking serializability of a given history which constructs a valid commit order (satisfying $\mathsf{Serialization}$), if any, by 
``linearizing'' transactions one by one in an order consistent with the session order. At any time, the set of already linearized transactions is uniquely determined by an antichain of the session order (i.e., a set of mutually-unordered transactions w.r.t. $\so$), and the next transaction to linearize is chosen among the immediate $\so$ successors of the transactions in this antichain. The crux of the algorithm is that the next transaction to linearize can be chosen such that it does not produce violations of $\mathsf{Serialization}$ in a way that  does not depend on the order between the already linearized transactions. Therefore, the algorithm can be seen as a search in the space of $\so$ antichains. If the width of the history is bounded (by a fixed constant), then the number of possible $\so$ antichains is polynomial in the size of the history, which implies that the search can be done in polynomial time.

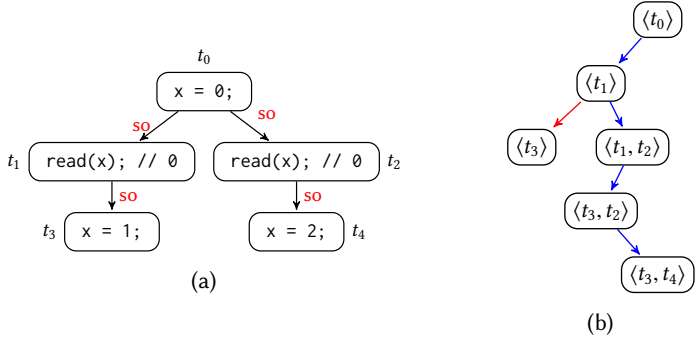
\begin{figure}
  
  \begin{subfigure}{.4\textwidth}
  \resizebox{\textwidth}{!}{
  \begin{tikzpicture}[->,>=stealth',shorten >=1pt,auto,node distance=3cm,
   semithick, transform shape]
   \node[draw, rounded corners=2mm,label={$\tr_0$}] (t1) at (0, 1.2) {\begin{tabular}{l} \texttt{x = 0;} \end{tabular}};
   \node[draw, rounded corners=2mm,label=left:{$\tr_1$}] (t2r) at (-1.6, 0) {\begin{tabular}{l} \texttt{read(x); // 0} \end{tabular}};
   \node[draw, rounded corners=2mm,label=left:{$\tr_3$}] (t2w) at (-1.6, -1.2) {\begin{tabular}{l}  \texttt{x = 1;} \end{tabular}};
   \node[draw, rounded corners=2mm,label=right:{$\tr_2$}] (t3r) at (1.6, 0) {\begin{tabular}{l} \texttt{read(x); // 0} \end{tabular}};
   \node[draw, rounded corners=2mm,label=right:{$\tr_4$}] (t3w) at (1.6, -1.2) {\begin{tabular}{l} \texttt{x = 2;} \end{tabular}};
  \path (t2r) edge node {$\so$} (t2w);
  \path (t3r) edge node {$\so$} (t3w);
  \path (t1) edge node[left] {$\so$} (t2r);
  \path (t1) edge node {$\so$} (t3r);
  \end{tikzpicture}  
  }
   \caption{ }
   \label{ser_algo_example:1}
  \end{subfigure}
  \hspace{1cm}
   \begin{subfigure}{.19\textwidth}
    \resizebox{\textwidth}{!}{
     \begin{tikzpicture}[->,>=stealth',shorten >=1pt,auto,node distance=3cm,
       semithick, transform shape]
      \node[draw, rounded corners=2mm] (v00) at (0, 0) {$\left\langle \tr_0 \right\rangle$};
      \node[draw, rounded corners=2mm] (v10) at (-.9, -1) {$\left\langle \tr_1  \right\rangle$};
      \node[draw, rounded corners=2mm] (v20) at (-2, -2) {$\left\langle \tr_3 \right\rangle$};
      \node[draw, rounded corners=2mm] (v11) at (-.4, -2) {$\left\langle \tr_1, \tr_2  \right\rangle$};
      \node[draw, rounded corners=2mm] (v21) at (-.9, -3) {$\left\langle \tr_3, \tr_2  \right\rangle$};
      \node[draw, rounded corners=2mm] (v22) at (0, -4) {$\left\langle \tr_3, \tr_4  \right\rangle$};
      \path (v00) edge[blue] node {} (v10); 
      \path (v10) edge[red] node {} (v20); 
      \path (v10) edge[blue] node {} (v11); 
      \path (v11) edge[blue] node {} (v21); 
      \path (v21) edge[blue] node {} (v22);
      
      
     \end{tikzpicture}  
    }
    \caption{ }
    \label{ser_algo_example:3}
   \end{subfigure}
   \vspace{-4mm}
  \caption{Applying \textsf{checkSER} on the serializable history on the left. The right part pictures a search for valid extensions of serializable prefixes, represented by their boundaries. The red arrow means that the search is blocked (the prefix at the target is not a valid extension), while blue arrows mean that the search continues.}
  \label{ser_algo_example}
  \vspace{-4mm}
\end{figure}

A \emph{prefix} of a history $\hist = \tup{T, \so, \wro}$ is a set of transactions $T'\subseteq T$ such that all the $\so$ predecessors of transactions in $T'$ are also in $T'$, i.e., $\forall \tr\in T.\ \so^{-1}(\tr)\in T$. A prefix $T'$ is uniquely determined by the set of transactions in $T'$ which are maximal w.r.t. $\so$. This set of transactions forms an \emph{antichain} of $\so$, i.e., any two elements in this set are incomparable w.r.t. $\so$. Given an antichain $\{\tr_1,\ldots,\tr_n\}$ of $\so$, we say that $\{\tr_1,\ldots,\tr_n\}$ is the \emph{boundary} of the prefix $T'=\{t:\exists i.\ \tup{t,t_i}\in \so\lor t = t_i\}$. For instance, given the history in Figure~\ref{ser_algo_example:1}, the set of transactions $\{\tr_0,\tr_1,\tr_2\}$ is a prefix with boundary $\{\tr_1,\tr_2\}$ (the latter is an antichain of the session order).


A prefix $T'$ of a history $\hist$ is called \emph{serializable} iff there exists a \emph{partial} commit order $\co$ on the transactions in $\hist$ such that the following hold:
\begin{itemize}
  \item $\co$ does not contradict the session order and the write-read relation in $\hist$, i.e., $ \wro \cup \so\cup \co$ is acyclic, 
 \item $\co$ is a total order on transactions in $T'$, 
 \item $\co$ orders transactions in $T'$ before transactions in $T\setminus T'$, i.e., $\tup{\tr_1,\tr_2}\in \co$ for every $\tr_1\in T'$ and $\tr_2\in T\setminus T'$,  
 \item $\co$ does not order any two transactions $\tr_1, \tr_2 \not\in T'$
 \item the history $\hist$ along with the commit order $\co$ satisfies the axiom defining serializability, i.e., $\tup{\hist, \co} \models \mathsf{Serialization}$.
\end{itemize}

For the history in Figure~\ref{ser_algo_example:1}, the prefix $\{\tr_0,\tr_1,\tr_2\}$ is serializable since there exists a partial commit order $\co$ which orders 
$\tr_0$, $\tr_1$, $\tr_2$ in this order, and both $\tr_1$ and $\tr_2$ before $\tr_3$ and $\tr_4$. The axiom $\mathsf{Serialization}$ is satisfied trivially, since the prefix contains a single transaction writing $\xvar$ and all the transactions outside of the prefix do not read $\xvar$.

A prefix $T'\uplus\{\tr\}$ of $\hist$ is called a \emph{valid extension} of a serializable prefix $T'$ of $\hist$~\footnote{We assume that $\tr\not\in T'$ which is implied by the use of the disjoint union $\uplus$.}, denoted by $T'\vartriangleright T'\uplus\{\tr\}$
if:
\begin{itemize}
 \item $\tr$ does not read from a transaction outside of $T'$, i.e., for every $\tr'\in T\setminus T'$, $\tup{\tr',\tr}\not\in\wro$, and
 \item for every variable $\xvar$ written by $\tr$, there exists no transaction $\tr_2\neq \tr$ outside of $T'$ which reads a value of $\xvar$ written by a transaction $\tr_1$ in $T'$, i.e., for every $\xvar$ written by $\tr$ and every $\tr_1\in T'$ and $\tr_2\in T\setminus (T'\uplus\{\tr\})$, $\tup{\tr_1,\tr_2}\not\in\wro$.
\end{itemize}

For the history in Figure~\ref{ser_algo_example:1}, we have $\{\tr_0,\tr_1\}\vartriangleright \{\tr_0,\tr_1\}\uplus\{\tr_2\}$ because $\tr_2$ reads from $\tr_0$ and it does not write any variable. On the other hand $\{\tr_0,\tr_1\}\not\vartriangleright \{\tr_0,\tr_1\}\uplus\{\tr_3\}$ because $\tr_3$ writes $\xvar$ and the transaction $\tr_2$, outside of this prefix, reads from the transaction $\tr_0$ included in the prefix.

Let $\vartriangleright^*$ denote the reflexive and transitive closure of $\vartriangleright$.

The following lemma is essential in proving that iterative valid extensions of the initial empty prefix can be used to show that a given history is serializable.

\begin{lemma}
 \label{serpref:1}
 For a serializable prefix $T'$ of a history $h$, a prefix $T'\uplus\{\tr\}$ is serializable if it is a valid extension of $T'$.
\end{lemma}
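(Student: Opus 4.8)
The plan is to reuse the partial commit order $\co$ witnessing serializability of $T'$ and insert $\tr$ as the new maximum of $T'\uplus\set{\tr}$, placing it before every remaining transaction. Concretely I would take
\[
\co' = \co \cup \set{\tup{\tr,\tr_2} : \tr_2\in T\setminus(T'\uplus\set{\tr})}.
\]
Since $\tr\in T\setminus T'$, the order $\co$ already puts all of $T'$ before $\tr$, so $\co'$ is $\co$ with $\tr$ appended right after $T'$ and declared smaller than all of $T\setminus(T'\uplus\set{\tr})$. I would then check the five defining conditions of a serializable prefix for the pair $\tup{h,\co'}$.

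The structural conditions are routine. Restricted to $T'\uplus\set{\tr}$, $\co'$ is the total order $\co$ on $T'$ with $\tr$ on top, hence total; it still orders $T'\uplus\set{\tr}$ before the rest; and it creates no order between two transactions both outside $T'\uplus\set{\tr}$, because every added edge leaves $\tr$. A short case analysis confirms $\co'$ is transitive (hence a strict partial order), the only nontrivial composition being $T'\to\tr\to(T\setminus T')$, which $\co$ already provides.

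The crux is acyclicity of $\wro\cup\so\cup\co'$. Let $R_0=\wro\cup\so\cup\co$, which is acyclic because $T'$ is serializable. Every edge of $\co'\setminus\co$ leaves $\tr$, so a simple cycle in $\wro\cup\so\cup\co'$ would use exactly one such edge $\tup{\tr,\tr_2}$ (all added edges share the source $\tr$, which occurs once in a simple cycle) and otherwise be an $R_0$-path from $\tr_2\in T\setminus(T'\uplus\set{\tr})$ back to $\tr$. I would rule this out with two observations. First, every $R_0$-edge \emph{into} $\tr$ starts in $T'$: for $\co$ because $T'$ precedes $T\setminus T'$, for $\so$ because $T'\uplus\set{\tr}$ is $\so$-downward-closed, and for $\wro$ because the first valid-extension clause forbids $\tr$ from reading outside $T'$. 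Second, there is no $R_0$-edge \emph{from} $T\setminus T'$ \emph{into} $T'$: none in $\so$ (prefix closure), none in $\co$ ($T'$ precedes $T\setminus T'$), and none in $\wro$, since such a back-edge combined with the $\co$-edge $T'\to(T\setminus T')$ would already close a cycle in $R_0$. By the first observation the returning path reaches $\tr$ from some $\tr_a\in T'$, so, being an $R_0$-path from $T\setminus T'$ into $T'$, it must contain an edge crossing from $T\setminus T'$ to $T'$, contradicting the second observation.

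Finally I would discharge $\tup{h,\co'}\models\mathsf{Serialization}$. Fix $\xvar,\tr_1,\tr_2,\tr_3$ satisfying the premises under $\co'$; only the conjunct $\tup{\tr_2,\tr_3}\in\co'$ can reference a new edge. If $\tup{\tr_2,\tr_3}\in\co$, the premises also hold for $\co$, so serializability of $T'$ gives $\tup{\tr_2,\tr_1}\in\co\subseteq\co'$. Otherwise $\tup{\tr_2,\tr_3}$ is a new edge, forcing $\tr_2=\tr$ and $\tr_3\in T\setminus(T'\uplus\set{\tr})$; then $\tr_3$ reads $\xvar$ from $\tr_1$ while $\tr$ writes $\xvar$, so the second valid-extension clause forbids $\tr_1\in T'$, and with $\tr_1\neq\tr$ this puts $\tr_1$ in $T\setminus(T'\uplus\set{\tr})$, whence $\tup{\tr_2,\tr_1}=\tup{\tr,\tr_1}$ is itself a new edge of $\co'$. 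Both cases satisfy the axiom. The one place where serializability of $T'$ is genuinely used (beyond mere acyclicity of the history) is the $\wro$ part of the second observation in the acyclicity step, which I expect to be the main obstacle.
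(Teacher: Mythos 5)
Your proof is correct and takes essentially the same route as the paper: you construct the same extended commit order (appending $\tr$ immediately after $T'$ and before all remaining transactions) and discharge the $\mathsf{Serialization}$ axiom by the same case analysis, with the critical case $\tr_2=\tr$ resolved via the second valid-extension clause, just phrased directly rather than by contradiction. The only difference is that you prove acyclicity of $\wro\cup\so\cup\co'$ in full detail, whereas the paper dismisses it (together with the other structural conditions) as trivially satisfied.
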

\begin{proof}
\renewcommand{\qedsymbol}{}
Let $\co'$ be the partial commit order for $T'$ which satisfies the serializable prefix conditions. We extend $\co'$ to a partial order $\co = \co' \cup \{ \tup{\tr,\tr'} | \tr' \not\in T'\uplus\{\tr'\} \}$. We show that $\tup{\hist, \co} \models \mathsf{Serialization}$. The other conditions for $T'\uplus\{t\}$ being a serializable prefix are satisfied trivially by $\co$. 

Assume by contradiction that $\tup{\hist, \co}$ does not satisfy the axiom $\mathsf{Serialization}$. Then, there exists $\tr_1, \tr_2, \tr_3$, $\xvar \in \vars{\hist}$ s.t. $\tup{\tr_1, \tr_3} \in \wro[\xvar]$ and $\tr_2$ writes on $\xvar$ and $\tup{\tr_1, \tr_2}, \tup{\tr_2, \tr_3} \in \co$. Since $\tup{\hist,\co'}$ satisfies this axiom, at least one of these two $\co$ ordering constraints are of the form $\tup{\tr, \tr'}$ where $\tr' \not\in T' \uplus \{\tr\}$:
       \begin{itemize}
        \item the case $\tr_1 = \tr$ and $\tr_2 \not\in T' \uplus \{\tr\}$ is not possible because $\co'$ contains no pair of the form $\tup{\tr', \_} \in \co'$ with $\tr' \not\in T'$ (recall that $\tup{\tr_2, \tr_3}$ should be also included in $\co$). 
        \item If $\tr_2 = \tr$ then, $\tup{\tr_1, \tr_2} \in \co'$ and $\tup{\tr_2, \tr_3}$ for some $\tr_3 \not\in T' \uplus \{\tr\}$. But, by the definition of valid extension, for all variables $\xvar$ written by $\tr$, there exists no transaction $\tr_3 \not\in T' \uplus \{\tr\}$ such that it reads $\xvar$ from $\tr_1 \in T'$. Therefore, this is also a contradiction.\hfill $\Box$
       \end{itemize}
 \vspace{-3mm}
\end{proof}

\begin{algorithm}[t]
{\small
 \SetKwInOut{KwInput}{Input}
 \SetKwInOut{KwOutput}{Output}
 \KwIn{A history $\hist = (T, \so, \wro)$, a serializable prefix $T'$ of $\hist$, \\ A set, in \emph{global scope}, $\mathit{seen}$ of prefixes of $\hist$ which are not serializable}
 \KwOut{$\mathit{true}$ iff $T'\vartriangleright^* h$}
 \BlankLine
 \If{$T'$ = $T$}{
  \Return{true}\;
 }
  \ForEach{$\tr \not\in T'$ s.t. $\forall \tr' \not\in T'.\ \tup{\tr', \tr} \not\in \wro \cup \so$}{
   \If{$T'\not\vartriangleright T'\uplus\{\tr\}$}{
    continue\;
   }
   \If{$T'\uplus\{\tr\} \not\in\mathit{seen} \land \mathsf{checkSER}(\hist,T'\uplus\{\tr\}, \mathit{seen})$}{
    \Return{true}\;
   }
   $\mathit{seen}\leftarrow\mathit{seen}\cup\{(T'\uplus\{\tr\})\}$\;
  }
  \Return{false}\;
 }
 \caption{The algorithm $\mathsf{checkSER}$ for checking serializabilty}
 \label{seralgo:2}
\end{algorithm}

Algorithm~\ref{seralgo:2} lists our algorithm for checking serializability. It is defined as a recursive procedure that searches for a sequence of valid extensions of a given prefix (initially, this prefix is empty) until covering the whole history. Figure~\ref{ser_algo_example:3} pictures this search on the history in Figure~\ref{ser_algo_example:1}. The right branch (containing blue edges) contains only valid extensions and it reaches a prefix that includes all the transactions in the history.


\begin{theorem}
 A history $\hist$ is serializable iff $\mathsf{checkSER}(\hist,\emptyset,\emptyset)$ returns true.
\end{theorem}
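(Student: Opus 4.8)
The plan is to prove the two directions of the biconditional separately, with the main work being the forward (soundness) direction, \ie, showing that if $\mathsf{checkSER}(\hist,\emptyset,\emptyset)$ returns true then $\hist$ is serializable, and the completeness direction, \ie, if $\hist$ is serializable then the procedure returns true. The key object connecting the recursive search to the semantic notion of serializability is the relation $\vartriangleright^*$: I would first observe that $\mathsf{checkSER}(\hist,T',\cdot)$ returns true exactly when $T'\vartriangleright^* T$ (ignoring for the moment the pruning via $\mathit{seen}$). Given this, the theorem reduces to showing that the empty prefix satisfies $\emptyset\vartriangleright^* T$ if and only if $\hist$ is serializable.

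For the direction $\emptyset\vartriangleright^* T \implies \hist$ serializable, I would argue by induction on the length of the chain of valid extensions, using Lemma~\ref{serpref:1} as the inductive step. The empty prefix is trivially serializable (the empty commit order witnesses all the conditions), and Lemma~\ref{serpref:1} shows that each valid extension preserves serializability; hence the final prefix $T$ is serializable, and since $T'=T$ the partial commit order is in fact total on all of $\hist$ and satisfies $\mathsf{Serialization}$, which is exactly the condition of Definition~\ref{axiom-criterion} for SER. Conversely, for $\hist$ serializable $\implies \emptyset\vartriangleright^* T$, I would take a witnessing total commit order $\co$ and enumerate the transactions $\tr^{(1)},\ldots,\tr^{(n)}$ in $\co$ order; I would show that each prefix $T_k=\set{\tr^{(1)},\ldots,\tr^{(k)}}$ is a genuine $\so$-prefix (because $\co$ extends $\so$) and that $T_{k}\vartriangleright T_{k+1}$. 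The first valid-extension condition (that $\tr^{(k+1)}$ reads only from within $T_k$) follows because $\co$ extends $\wro$; the second condition (no transaction outside $T_{k+1}$ reads a $\xvar$-write of $T_k$ when $\tr^{(k+1)}$ also writes $\xvar$) is where the $\mathsf{Serialization}$ axiom is used: if such a $\tr_2$ existed reading $\tr_1\in T_k$ on $\xvar$, then since $\tr^{(k+1)}$ writes $\xvar$ and is $\co$-between $\tr_1$ and $\tr_2$ in the wrong position, the axiom would force a $\co$-contradiction. This chain gives $\emptyset\vartriangleright^* T$.

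The remaining obstacle, and the step I expect to require the most care, is justifying that the memoization set $\mathit{seen}$ does not compromise correctness: the algorithm prunes any prefix already recorded as non-serializable, so I must argue that whether $T'\vartriangleright^* T$ holds depends only on the prefix $T'$ and not on the particular search path that reached it. This is precisely because $\vartriangleright$ is defined purely in terms of the set $T'$ (and the fixed history), so reachability of $T$ from $T'$ is a well-defined property of $T'$ alone; thus once a prefix is shown to have no successful continuation it can safely be marked in $\mathit{seen}$ and skipped, and memoization only affects efficiency, not the returned boolean. I would make this precise by showing that $\mathsf{checkSER}(\hist,T',\mathit{seen})$ returns true iff $T'\vartriangleright^* T$ for any value of $\mathit{seen}$ consistent with the invariant that every prefix in $\mathit{seen}$ fails to reach $T$. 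Combining the path-independence of $\vartriangleright^*$ with the two semantic directions above then yields the theorem.
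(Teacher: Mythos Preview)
Your proposal is correct and follows essentially the same approach as the paper: both directions rely on the chain of valid extensions, with Lemma~\ref{serpref:1} giving soundness and the $\co$-prefixes $\co_i$ giving completeness via the same application of the $\mathsf{Serializability}$ axiom. Your explicit treatment of the $\mathit{seen}$ set is additional care that the paper omits (it simply asserts that the chain $\co_i\vartriangleright\co_{i+1}$ implies $\mathsf{checkSER}$ returns true), but it does not constitute a different route.
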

\begin{proof}
The ``if'' direction is a direct consequence of Lemma~\ref{serpref:1}. 
 %
%
For the reverse, assume that $\hist=\tup{T,\so,\wro}$ is serializable with a (total) commit order $\co$. Let $\co_i$ be the set of transactions in the prefix of $\co$ of length $i$. 
Since $\co$ is consistent with $\so$, we have that $\co_i$ is a prefix of $\hist$, for any $i$.
We show by induction that $\co_{i+1}$ is a valid extension of $\co_i$. The base case is trivial. For the induction step, let $\tr$ be the last transaction in the prefix of $\co$ of length $i+1$. Then,
%
%
\begin{itemize}
%
%
\item $\tr$ cannot read from a transaction outside of $\co_i$ because $\co$ is consistent with the write-read relation $\wro$, 
%
\item  also, for every variable $\xvar$ written by $\tr$, there exists no transaction $\tr_2 \neq \tr$ outside of $\co_i$ which reads a value of $\xvar$ written by a transaction $\tr_1 \in \co_i$. Otherwise, $\tup{\tr_1,\tr_2}\in\wro[\xvar]$, $\tup{\tr,\tr_2}\in \co$, and $\tup{\tr_1,\tr}\in\co$ which implies that $\tup{\hist,\co}$ does not satisfy $\mathsf{Serializability}$.
\end{itemize} 
This implies that $\mathsf{checkSER}(\hist,\emptyset,\emptyset)$ returns true.
\end{proof}

By definition, the size of each antichain of a history $\hist$ is smaller than the width of $\hist$. Therefore, the number of possible antichains of a history $\hist$ is $O(\mathsf{size}(h)^{\mathsf{width}(h)})$ where $\mathsf{size}(h)$, resp., $\mathsf{width}(h)$, is the number of transactions, resp., the width, of $\hist$. Since the valid extension property can be checked in quadratic time, the asymptotic time complexity of the algorithm defined by $\mathsf{checkSER}$ is upper bounded by $O(\mathsf{size}(h)^{\mathsf{width}(h)}\cdot \mathsf{size}(h)^3)$.
The following corollary is a direct consequence of this observation.

\begin{corollary}\label{cor:ser}

For an arbitrary but fixed constant $k\in\mathbb{N}$, the problem of checking serializability for histories of width at most $k$ is polynomial time.
\end{corollary}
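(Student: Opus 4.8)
The plan is to prove Corollary~\ref{cor:ser} by combining the correctness of \textsf{checkSER} (the preceding theorem) with a direct complexity analysis of the search it performs. The corollary follows once I establish two facts: that for a fixed width bound $k$ the search space explored by \textsf{checkSER} has polynomial size, and that each individual step of the search costs polynomial time.

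\textbf{Bounding the search space.} First I would observe that the recursion in \textsf{checkSER} only ever operates on \emph{prefixes} of $\hist$, and that each prefix is uniquely determined by its boundary, which is an antichain of $\so$. Since the width of $\hist$ is at most $k$, every antichain has at most $k$ transactions, so the number of distinct antichains---and hence the number of distinct prefixes---is bounded by $\sum_{j=0}^{k}\binom{n}{j} = O(n^{k})$, where $n=\mathsf{size}(\hist)$ is the number of transactions. The $\mathit{seen}$ set ensures each non-serializable prefix is processed at most once, so the total number of recursive invocations is $O(n^{k})$. This is polynomial in $n$ precisely because $k$ is a fixed constant rather than part of the input.

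\textbf{Bounding the cost per step.} Next I would account for the work done at each node of the search. The inner loop ranges over candidate transactions $\tr\notin T'$ that have all their $\wro\cup\so$ predecessors already in $T'$ (at most $n$ choices), and for each it tests the valid-extension property $T'\vartriangleright T'\uplus\{\tr\}$. As noted in the text, this property can be checked in quadratic time, i.e.\ $O(n^{2})$, by scanning the $\wro$ relation. Membership tests against $\mathit{seen}$ can be done in time polynomial in $n$ (e.g.\ using a suitable representation of prefixes by their boundaries). Thus each invocation does $O(n^{3})$ work beyond its recursive calls, giving the overall bound $O(n^{k}\cdot n^{3})$ stated just before the corollary, which is polynomial for fixed $k$.

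\textbf{Assembling the result.} Finally I would invoke the preceding theorem, which guarantees that $\mathsf{checkSER}(\hist,\emptyset,\emptyset)$ returns $\mathit{true}$ iff $\hist$ is serializable, so the algorithm is a correct decision procedure; together with the polynomial running time just derived, this proves that checking serializability for histories of width at most $k$ is in polynomial time. I anticipate the only delicate point is being careful about what ``fixed constant'' buys us: the exponent $k$ sits in the bound $O(n^{k+3})$, so the degree of the polynomial grows with $k$ and the procedure is only polynomial \emph{for each fixed} $k$ (this is fixed-parameter-style, not uniform, polynomiality). Making this dependence explicit, rather than any intricate calculation, is the main thing to get right; the rest is a straightforward counting argument already set up by the definitions of prefix, boundary, and valid extension.
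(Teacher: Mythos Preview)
Your proposal is correct and follows essentially the same approach as the paper: bound the number of distinct prefixes by the number of $\so$-antichains (which is $O(n^k)$ for width $\le k$), observe that the $\mathit{seen}$ set prevents revisiting, and multiply by the $O(n^3)$ per-invocation cost to obtain the paper's $O(n^{k+3})$ bound. Your write-up is in fact more explicit than the paper's one-paragraph justification, particularly regarding the role of $\mathit{seen}$ and the fixed-parameter caveat.
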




\subsection{Reducing Prefix Consistency to Serializability}\label{ssec:pc}

We describe a polynomial time reduction of checking prefix consistency of bounded-width histories to the analogous problem for serializability. Intuitively, as opposed to serializability, prefix consistency allows that two transactions read the same snapshot of the database and commit together even if they write on the same variable. Based on this observation, given a history $\hist$ for which we want to check prefix consistency, we define a new history $\hist_{R|W}$ where each transaction $\tr$ is split into a transaction performing all the reads in $\tr$ and another transaction performing all the writes in $\tr$ (the history $\hist_{R|W}$ retains all the session order and write-read dependencies of $\hist$). We show that if the set of read and write transactions obtained this way can be shown to be serializable, then the original history satisfies prefix consistency, and vice-versa. 
For instance, Figure~\ref{pre_red_example} shows this transformation on the two histories in Figure~\ref{pre_red_example:1} and Figure~\ref{pre_red_example:3}, which represent typical anomalies known as ``long fork'' and ``lost update'', respectively. The former is not admitted by PC while the latter is admitted. It can be easily seen that the transformed history corresponding to the ``long fork'' anomaly is not serializable while the one corresponding to ``lost update'' is serializable.
We show that this transformation leads to a history of the same width, which 
by Corollary~\ref{cor:ser}, implies that checking prefix consistency of bounded-width histories is polynomial time.


Thus, given a history $\hist = \tup{T, \wro, \so}$, we define the history $\hist_{R|W} = \tup{T', \wro', \so'}$ as follows:
\begin{itemize}
 \item $T'$ contains a transaction $R_\tr$, called a \emph{read} transaction, and a transaction $W_\tr$, called a \emph{write} transaction, for each transaction $\tr$ in the original history, i.e., $T' = \{R_\tr | \tr \in T\} \cup \{W_\tr | \tr \in T\}$
 \item the write transaction $W_{\tr}$ writes exactly the same set of variables as $\tr$, i.e., for each variable $\xvar$, $W_{\tr}$ writes to $\xvar$ iff $\tr$ writes to $\xvar$.
 \item the read transaction $R_{\tr}$ reads exactly the same values and the same variables as $\tr$, i.e., for each variable $\xvar$,
 $\wro[\xvar]' = \{\tup{W_{\tr_1}, R_{\tr_2}} | \tup{\tr_1, \tr_2} \in \wro[\xvar]\}$
 \item the session order between the read and the write transactions corresponds to that of the original transactions and read transactions precede their write counterparts, i.e.,
 \begin{align*}
 \so' = \{\tup{R_\tr, W_\tr} | \tr \in T\} \cup \{\tup{R_{\tr_1}, R_{\tr_2}}, \tup{R_{\tr_1}, W_{\tr_2}}, \tup{W_{\tr_1}, R_{\tr_2}}, \tup{W_{\tr_1}, W_{\tr_2}} | \tup{\tr_1,\tr_2} \in \so \}
 \end{align*}
\end{itemize}

\begin{figure}
  \centering
  \begin{subfigure}{.49\textwidth}
  \resizebox{\textwidth}{!}{
  \begin{tikzpicture}[->,>=stealth',shorten >=1pt,auto,node distance=3cm,
   semithick, transform shape]
   \node[draw, rounded corners=2mm] (t2) at (-1.7, -1.5) {\begin{tabular}{l} \texttt{read(x); // 0} \\ \texttt{x = 1;} \end{tabular}};
  \node[draw, rounded corners=2mm] (t3) at (1.4, -1.5) {\begin{tabular}{l} \texttt{read(y); // 0} \\ \texttt{y = 1;} \end{tabular}};
  \node[draw, rounded corners=2mm] (t4) at (4.5, -1.5) {\begin{tabular}{l} \texttt{read(x); // 1} \\ \texttt{read(y); // 0} \end{tabular}};
  \node[draw, rounded corners=2mm] (t5) at (7.6, -1.5) {\begin{tabular}{l} \texttt{read(x); // 0} \\ \texttt{read(y); // 1} \end{tabular}};
  \end{tikzpicture}  
  }
   \caption{Long fork}
   \label{pre_red_example:1}
  \end{subfigure}
  \begin{subfigure}{.49\textwidth}
  \resizebox{\textwidth}{!}{
  \begin{tikzpicture}[->,>=stealth',shorten >=1pt,auto,node distance=3cm,
   semithick, transform shape]
   \node[draw, rounded corners=2mm] (t2r) at (-1.7, -1.5) {\begin{tabular}{l} \texttt{read(x); // 0} \end{tabular}};
   \node[draw, rounded corners=2mm] (t2w) at (-1.7, -3.2) {\begin{tabular}{l} \texttt{x = 1;} \end{tabular}};
   \node[draw, rounded corners=2mm] (t3r) at (1.4, -1.5) {\begin{tabular}{l} \texttt{read(y); // 0} \end{tabular}};
  \node[draw, rounded corners=2mm] (t3w) at (1.4, -3.2) {\begin{tabular}{l} \texttt{y = 1;} \end{tabular}};
  \node[draw, rounded corners=2mm] (t4r) at (4.5, -1.5) {\begin{tabular}{l} \texttt{read(x); // 1} \\ \texttt{read(y); // 0} \end{tabular}};
  \node[draw, rounded corners=2mm] (t5r) at (7.6, -1.5) {\begin{tabular}{l} \texttt{read(y); // 1} \\ \texttt{read(x); // 0} \end{tabular}}; 
  \node[draw, rounded corners=2mm] (t4w) at (4.5, -3.2) {\begin{tabular}{l} \texttt{// empty} \end{tabular}};
   \node[draw, rounded corners=2mm] (t5w) at (7.6, -3.2) {\begin{tabular}{l} \texttt{// empty} \end{tabular}};
  \path (t2r) edge node {$\so$} (t2w);
  \path (t3r) edge node {$\so$} (t3w);
  \path (t4r) edge node {$\so$} (t4w);
  \path (t5r) edge node {$\so$} (t5w);
  \end{tikzpicture}  
  }
   \caption{Long fork (transformed)}
   \label{pre_red_example:2}
  \end{subfigure}

\vspace{3mm}
  \begin{subfigure}{.25\textwidth}
  \resizebox{\textwidth}{!}{
  \begin{tikzpicture}[->,>=stealth',shorten >=1pt,auto,node distance=3cm,
   semithick, transform shape]
   \node[draw, rounded corners=2mm] (t2) at (-1.6, 0) {\begin{tabular}{l} \texttt{read(x); // 0} \\ \texttt{x = 1;} \end{tabular}};
   \node[draw, rounded corners=2mm] (t3) at (1.6, 0) {\begin{tabular}{l} \texttt{read(x); // 0} \\ \texttt{x = 2;} \end{tabular}};
  \end{tikzpicture}  
  }
   \caption{Lost update}
   \label{pre_red_example:3}
  \end{subfigure}
  \hspace{2cm}
  \begin{subfigure}{.32\textwidth}
  \resizebox{.75\textwidth}{!}{
  \begin{tikzpicture}[->,>=stealth',shorten >=1pt,auto,node distance=3cm,
   semithick, transform shape]
   \node[draw, rounded corners=2mm] (t2r) at (0, 0) {\begin{tabular}{l} \texttt{read(x); // 0} \end{tabular}};
   \node[draw, rounded corners=2mm] (t2w) at (0, -1.5) {\begin{tabular}{l}  \texttt{x = 1;} \end{tabular}};
   \node[draw, rounded corners=2mm] (t3r) at (3.2, 0) {\begin{tabular}{l} \texttt{read(x); // 0} \end{tabular}};
   \node[draw, rounded corners=2mm] (t3w) at (3.2, -1.5) {\begin{tabular}{l} \texttt{x = 2;} \end{tabular}};
  \path (t2r) edge node {$\so$} (t2w);
  \path (t3r) edge node {$\so$} (t3w);
  \end{tikzpicture}  
  }
   \caption{Lost update (transformed)}
   \label{pre_red_example:4}
  \end{subfigure}
  \vspace{-3mm}
  \caption{Reducing PC to SER. Initially, the value of every variable is 0.}
  \label{pre_red_example}
  \vspace{-3mm}
\end{figure}
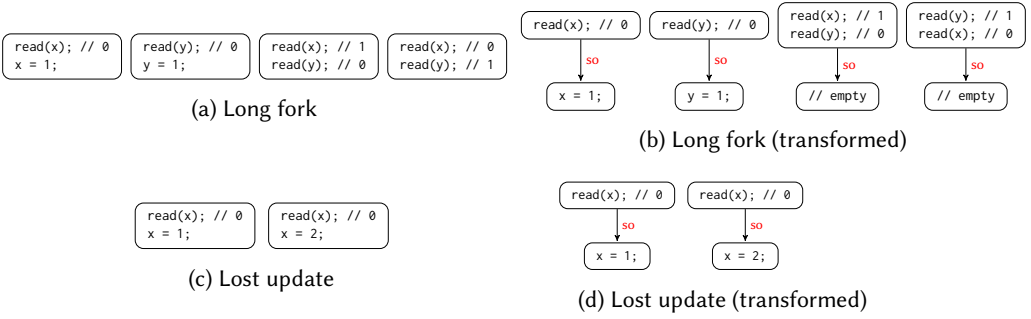

The following lemma is a straightforward consequence of the definitions (see Appendix~\ref{app:pc_red}).

\begin{lemma}\label{lem:pc_width}
The histories $\hist$ and $\hist_{R|W}$ have the same width.
\end{lemma}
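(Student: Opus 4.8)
The plan is to prove the two inequalities $\mathsf{width}(\hist_{R|W}) \geq \mathsf{width}(\hist)$ and $\mathsf{width}(\hist_{R|W}) \leq \mathsf{width}(\hist)$ separately, both by relating antichains of $\so$ to antichains of $\so'$. Since the width is the maximum size of an antichain (set of mutually $\so$-incomparable transactions), the whole argument hinges on first pinning down exactly when two transactions of $\hist_{R|W}$ are $\so'$-comparable. So the first step I would carry out is to establish the following characterization: for $P,Q \in \{R,W\}$, the transactions $P_{\tr}$ and $Q_{\tr'}$ are $\so'$-comparable if and only if either $\tr = \tr'$, or $\tr$ and $\tr'$ are $\so$-comparable. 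Equivalently, $P_\tr$ and $Q_{\tr'}$ are $\so'$-\emph{incomparable} iff $\tr \neq \tr'$ and $\tr,\tr'$ are $\so$-incomparable.

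To justify this characterization I would argue that the explicitly listed set defining $\so'$ is \emph{already transitive}, hence equals $\so'$ with no additional ``spurious'' orderings introduced by closure. This is a short case check: any composition of two listed pairs either stays within one original transaction (via $\tup{R_\tr, W_\tr}$) or steps along $\so$, and in each case transitivity of $\so$ yields a pair that is again in the listed set. Once $\so'$ is known to be exactly the listed pairs, the characterization follows immediately: the same-transaction pairs contribute only $R_\tr \so' W_\tr$, and every cross-transaction pair $P_\tr \so' Q_{\tr'}$ with $\tr \neq \tr'$ must come from the second group, which requires $\tr \so \tr'$.

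With the characterization in hand, the two inequalities are routine. For $\mathsf{width}(\hist_{R|W}) \geq \mathsf{width}(\hist)$, I would take a maximum antichain $\{\tr_1,\ldots,\tr_w\}$ of $\so$ and observe that $\{R_{\tr_1},\ldots,R_{\tr_w}\}$ is an antichain of $\so'$ of the same size, since distinct $\so$-incomparable $\tr_i,\tr_j$ give $\so'$-incomparable $R_{\tr_i},R_{\tr_j}$. For the reverse, I would take a maximum antichain $A'$ of $\so'$ and consider the projection $\pi(P_\tr) = \tr$. The key point is that $\pi$ is injective on $A'$: if both $R_\tr$ and $W_\tr$ lay in $A'$ they would be $\so'$-comparable, contradicting antichain-ness. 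Then the characterization shows $\pi(A')$ is an antichain of $\so$ of the same cardinality as $A'$, giving $\mathsf{width}(\hist) \geq \mathsf{width}(\hist_{R|W})$.

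The main obstacle is the comparability characterization, and within it the transitivity verification ensuring no unexpected orderings appear in $\so'$; everything downstream is a direct consequence. The only other subtlety to handle carefully is the same-transaction case, i.e.\ that $R_\tr$ and $W_\tr$ are always $\so'$-ordered, which is precisely what forces $\pi$ to be injective and prevents the ``splitting'' of transactions from inflating the width.
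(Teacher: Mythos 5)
Your proof is correct and takes essentially the same route the paper intends (the paper treats this lemma as a straightforward consequence of the definitions): since $\so'$ consists exactly of the listed pairs, two transactions $P_{\tr}$ and $Q_{\tr'}$ of $\hist_{R|W}$ are $\so'$-comparable iff $\tr=\tr'$ or $\tr,\tr'$ are $\so$-comparable, and antichains then transfer in both directions via $\{\tr_i\}\mapsto\{R_{\tr_i}\}$ and the injective projection $P_\tr\mapsto\tr$. Your explicit check that the defining set of pairs is already transitive (so no extra orderings arise) and your observation that $R_\tr$, $W_\tr$ can never both lie in an antichain are exactly the two points that make the "straightforward" argument rigorous.
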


Next, we show that $\hist_{R|W}$ is serializable if $\hist$ is prefix consistent. Formally, we show that
 \begin{align*}
  \forall \co.\ \exists \co'.\ \tup{\hist, \co} \models \axpre \Rightarrow \tup{\hist_{R|W}, \co'} \models \axser 
 \end{align*}
%
%
%
%
%
Thus, let $\co$ be a commit (total) order on transactions of $\hist$ which together with $\hist$ satisfies the prefix consistency axiom. We define two \emph{partial} commit orders $\co'_1$ and $\co'_2$, $\co'_2$ a strengthening of $\co'_1$, which we prove that they are acyclic and that any linearization $\co'$ of $\co'_2$ is a valid witness for $\hist_{R|W}$ satisfying serializability.

Thus, let $\co'_1$ be a \emph{partial} commit order on transactions of $\hist_{R|W}$ defined as follows:
 \begin{align*}
  \co'_1 = \{\tup{R_{\tr}, W_{\tr}} | \tr \in T\} \cup \{\tup{W_{\tr_1}, W_{\tr_2}} | \tup{\tr_1, \tr_2} \in \co\}\ \cup \{\tup{W_{\tr_1},R_{\tr_2}} | \tup{\tr_1, \tr_2} \in \wro \cup \so\} 
 \end{align*}
 
We show that if $\co'_1$ were to be cyclic, then it contains a minimal cycle with one read transaction, and at least one but at most two write transactions. Then, we show that such cycles cannot exist. 

 \begin{lemma}\label{lem:co1}
The relation $\co'_1$ is acyclic.
\end{lemma}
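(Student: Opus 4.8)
The plan is to exhibit a simple rank/potential function on the vertices of $\co'_1$ that is weakly compatible with the commit order $\co$, and to show that the only edges failing to \emph{strictly} increase it are the ``intra-transaction'' edges $R_\tr \to W_\tr$; since a cycle built solely from those edges is impossible, no cycle can exist at all. Concretely, I define $\pi(R_\tr) = \pi(W_\tr) = \tr$, i.e., I map every read and every write vertex of $\hist_{R|W}$ back to the transaction of $\hist$ it was split from, and I measure progress along $\co'_1$ by the position of $\pi(\cdot)$ in the strict total order $\co$.

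First I would classify the three kinds of edges of $\co'_1$ by their effect on $\pi$. For a $W\to W$ edge $\tup{W_{\tr_1},W_{\tr_2}}$ we have $\tup{\tr_1,\tr_2}\in\co$ by definition, so $\pi$ strictly increases. For a $W\to R$ edge $\tup{W_{\tr_1},R_{\tr_2}}$ we have $\tup{\tr_1,\tr_2}\in\wro\cup\so\subseteq\co$, because a commit order always extends $\wro\cup\so$; hence $\pi$ strictly increases here too, and in particular $\tr_1\neq\tr_2$ since $\co$ is irreflexive. The only remaining edges are the intra-transaction edges $\tup{R_\tr,W_\tr}$, along which $\pi$ stays fixed. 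Thus every edge of $\co'_1$ either strictly increases $\pi$ w.r.t.\ $\co$ or is one of the edges $R_\tr\to W_\tr$.

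Next I would assume for contradiction that $\co'_1$ contains a cycle $u_0\to u_1\to\cdots\to u_{m-1}\to u_0$. Reading the classification around the cycle, each step (indices modulo $m$) satisfies either $\pi(u_i)=\pi(u_{i+1})$ (for an $R_\tr\to W_\tr$ edge) or $\tup{\pi(u_i),\pi(u_{i+1})}\in\co$ (for the other two kinds). If even one step were of the strictly increasing kind, then transitivity of $\co$ composed with the equalities would yield $\tup{\pi(u_0),\pi(u_0)}\in\co$, contradicting irreflexivity of $\co$. Hence every edge of the cycle must be an intra-transaction edge $R_\tr\to W_\tr$. But the source of such an edge is a read vertex and its target a write vertex, so every vertex of the cycle would have to be simultaneously a read (as a source) and a write (as a target) — impossible, since the read and write vertices of $\hist_{R|W}$ are disjoint. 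This contradiction establishes that $\co'_1$ is acyclic.

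The whole argument is a potential/ranking argument, so the only real care needed is, first, checking that the $W\to R$ edges genuinely originate from $\wro\cup\so$ and therefore sit inside $\co$ — this is exactly where the defining property $\wro\cup\so\subseteq\co$ of a commit order is used, and I would emphasize that the $\mathsf{Prefix}$ axiom itself is \emph{not} needed for acyclicity (it will be needed only later, to show that a linearization of the strengthening $\co'_2$ witnesses $\mathsf{Serializability}$) — and, second, handling the ``flat'' edges $R_\tr\to W_\tr$ correctly so as to rule out a degenerate cycle composed only of them. An equivalent route, matching the phrasing in the surrounding text, is to note that every read vertex has $R_\tr\to W_\tr$ as its unique outgoing edge while the $W\to W$ edges totally order the write vertices by $\co$, and then show that a minimal cycle must pass through some read $R_a$ and return to it via a $\co$-increasing chain of writes ending in a $\wro\cup\so$ edge, forcing $\tup{a,a}\in\co$; I would prefer the rank-function version, since it dispenses with the case analysis on the number of write transactions in a minimal cycle.
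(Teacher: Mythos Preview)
Your proof is correct and takes a genuinely different, cleaner route than the paper. The paper argues via a structural analysis of minimal cycles: it first shows that a minimal cycle in $\co'_1$ cannot consist only of write transactions (else $\co$ would be cyclic), then that all write transactions in a minimal cycle are consecutive (using that $\co$ totally orders the writes), then bounds the cycle to one read and at most two writes, and finally dispatches the size-$2$ and size-$3$ cases by hand. Your rank-function argument collapses all of this: projecting each vertex back to its originating transaction via $\pi$, every $W\to W$ and $W\to R$ edge strictly increases rank along $\co$ (using only that $\wro\cup\so\subseteq\co$), while the only flat edges are $R_\tr\to W_\tr$; a cycle with any strict edge forces a $\co$-self-loop, and a cycle of only flat edges is impossible by the read/write vertex typing. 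The paper's approach has the side benefit of establishing the shape of minimal cycles, which it reuses verbatim in the proof of the next lemma about $\co'_2$; your approach is more economical for the present lemma but would need that structural fact to be recovered separately when moving on to $\co'_2$. Your remark that the $\mathsf{Prefix}$ axiom is not needed here is correct and worth keeping.
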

 \textsc{Proof.} We first show that if $\co'_1$ were to be cyclic, then it contains a minimal cycle with one read transaction, and at least one but at most two write transactions. Then, we show that such cycles cannot exist. 
Therefore, let us assume that $\co'_1$ is cyclic. Then,
  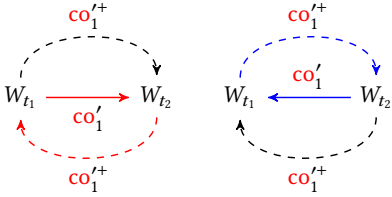
\begin{wrapfigure}{l}{0.4\textwidth} 
  \begin{subfigure}{.19\textwidth}
   \resizebox{\textwidth}{!}{
    \begin{tikzpicture}[->,>=stealth',shorten >=1pt,auto,node distance=4cm,
      semithick, transform shape]
     \node[transaction state] at (0,0)       (t_1)           {$W_{\tr_1}$};
     \node[transaction state] at (2,0)       (t_2)           {$W_{\tr_2}$};
     \path (t_1) edge[dashed, bend left=90] node {$\co'^+_1$} (t_2);
     \path (t_2) edge[dashed, color=red, bend left=90] node {$\co'^+_1$} (t_1);
     \path (t_1) edge[color=red] node[below] {$\co'_1$} (t_2);
    \end{tikzpicture}  
   }
   \caption{$\tup{W_{\tr_1}, W_{\tr_2}} \in \co'_1$}
   \label{ww_consecutive:a}
  \end{subfigure}
  \hspace{1mm}
  \begin{subfigure}{.19\textwidth}
   \resizebox{\textwidth}{!}{
    \begin{tikzpicture}[->,>=stealth',shorten >=1pt,auto,node distance=4cm,
      semithick, transform shape]
     \node[transaction state] at (0,0)       (t_1)           {$W_{\tr_1}$};
     \node[transaction state] at (2,0)       (t_2)           {$W_{\tr_2}$};
     \path (t_1) edge[dashed, color=blue, bend left=90] node {$\co'^+_1$} (t_2);
     \path (t_2) edge[dashed, bend left=90] node {$\co'^+_1$} (t_1);
     \path (t_2) edge[color=blue] node[above] {$\co'_1$} (t_1);
    \end{tikzpicture}  
   }
   \caption{$\tup{W_{\tr_2}, W_{\tr_1}} \in \co'_1$}
   \label{ww_consecutive:b}
  \end{subfigure}
  \vspace{-3mm}
  \caption{Cycles with non-consecutive write transactions.}
  \label{ww_consecutive}
  \vspace{-5mm}
 \end{wrapfigure} 
 \begin{itemize}
  \item Since $\tup{W_{\tr_1}, W_{\tr_2}} \in \co'_1$ implies $\tup{\tr_1, \tr_2} \in \co$, for every $\tr_1$ and $\tr_2$, a cycle in $\co'_1$ cannot contain only write transactions. Otherwise, it will imply a cycle in the original commit order $\co$. Therefore, a cycle in $\co'_1$ must contain at least one read transaction. 
  \item Assume that a cycle in $\co'_1$ contains two write transactions $W_{\tr_1}$ and $W_{\tr_2}$ which are not consecutive, like in Figure~\ref{ww_consecutive}.
Since either $\tup{W_{\tr_1}, W_{\tr_2}}\in \co'_1$ or $\tup{W_{\tr_1}, W_{\tr_2}}\in \co'_1$, there exists a smaller cycle in $\co'_1$ where these two write transactions are consecutive. If $\tup{W_{\tr_1}, W_{\tr_2}}\in \co'_1$, then $\co'_1$ contains the smaller cycle on the lower part of the original cycle (Figure~\ref{ww_consecutive:a}), and if $\tup{W_{\tr_2}, W_{\tr_1}}\in \co'_1$, then $\co'_1$ contains the cycle on the upper part of the original cycle (Figure~\ref{ww_consecutive:b}). Thus, all the write transactions in a minimal cycle of $\co'_1$ must be consecutive. 
\end{itemize}

\begin{itemize}
  \item If a minimal cycle were to contain three write transactions, then all of them cannot be consecutive unless they all three form a cycle, which is not possible. So a minimal cycle contains at most two write transactions.
  \item Since $\co'_1$ contains no direct relation between read transactions, it cannot contain a cycle with two consecutive read transactions, or only read transactions.
 \end{itemize}
This shows that a minimal cycle of $\co'_1$ would include a read transaction and a write transaction, and at most one more write transaction. We prove that such cycles are however impossible:
 \begin{itemize}
  \item if the cycle is of size 2, then it contains two transactions $W_{\tr_1}$ and $R_{\tr_2}$ such that $\tup{W_{\tr_1}, R_{\tr_2}}\in\co'_1$ and $\tup{R_{\tr_2}, W_{\tr_1}}\in \co'_1$. Since all the $\tup{R_{\_}, W_{\_}}$ dependencies in $\co'_1$ are of the form $\tup{R_\tr, W_\tr}$, it follows that $\tr_1 = \tr_2$. Then, we have $\tup{W_{\tr_1}, R_{\tr_1}} \in \co'_1$ which implies $\tup{\tr_1, \tr_1} \in \wro \cup \so$, a contradiction.
  \item if the cycle is of size 3, then it contains three transactions $W_{\tr_1}$, $W_{\tr_2}$, and $R_{\tr_3}$ such that $\tup{W_{\tr_1}, W_{\tr_2}}\in \co'_1$,  $\tup{W_{\tr_2}, R_{\tr_3}}\in \co'_1$, and $\tup{R_{\tr_3}, W_{\tr_1}} \in \co'_1$. Using a similar argument as in the previous case, $\tup{R_{\tr_3}, W_{\tr_1}} \in \co'_1$ implies $\tr_3 = \tr_1$. Therefore, $\tup{\tr_1, \tr_2} \in \co$ and $\tup{\tr_2, \tr_1} \in \wro \cup \so$, which contradicts the fact that $\wro \cup \so\subseteq \co$. \hfill $\Box$
 \end{itemize}

 We define a strengthening of $\co'_1$ where intuitively, we add all the dependencies from read transactions $\tr_3$ to write transactions $\tr_2$ that ``overwrite'' values read by $\tr_3$. Formally, $\co'_2= \co'_1\cup\rwo(\co'_1)$ where 
 \begin{align*}
  \rwo(\co'_1) = \{\tup{\tr_3, \tr_2}| \exists \xvar \in \vars{h}.\ \exists \tr_1\in T'.\ \tup{\tr_1,\tr_3} \in \wro[\xvar]', \tup{\tr_1, \tr_2} \in \co'_1, \writeVar{\tr_2}{\xvar} \} 
 \end{align*}
 
 It can be shown that any cycle in $\co'_2$ would correspond to a $\mathsf{Prefix}$ violation in the original history. Therefore,
 
 \vspace{-1mm}
 \begin{lemma}\label{lem:co2}
 The relation $\co'_2$ is acyclic.
 \end{lemma}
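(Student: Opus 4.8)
The plan is to argue by contradiction: assuming $\co'_2$ has a cycle, I would fix a cycle $\mathcal{C}$ of \emph{minimal length} and show that it either can be shortened or forces a violation of $\mathsf{Prefix}$ by the original commit order $\co$ (which we may assume satisfies $\tup{\hist,\co}\models\mathsf{Prefix}$, since this is the setting of the reduction). Since $\co'_1$ is acyclic by Lemma~\ref{lem:co1} and every edge of $\rwo(\co'_1)$ emanates from a read transaction, $\mathcal{C}$ must traverse at least one read transaction $R_b$. The first step is to read off the local shape of $\mathcal{C}$ around $R_b$ from the definitions: the only $\co'_2$-edges \emph{entering} a read transaction are the $\co'_1$-edges $\tup{W_p,R_b}$ arising from $\tup{p,b}\in\wro\cup\so$, and the only edges \emph{leaving} it are the ``own-write'' edge $\tup{R_b,W_b}\in\co'_1$ or an anti-dependency edge $\tup{R_b,W_q}\in\rwo(\co'_1)$. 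Hence in $\mathcal{C}$ the predecessor of $R_b$ is some $W_p$ with $\tup{p,b}\in\wro\cup\so$, and its successor is either $W_b$ or some $W_q$.

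Next I would eliminate the own-write successor using minimality: if the successor were $W_b$, then since $\tup{p,b}\in\wro\cup\so\subseteq\co$ there is a direct $\co'_1$-edge $\tup{W_p,W_b}$, so replacing the detour $W_p\to R_b\to W_b$ by this single edge yields a strictly shorter cycle, contradicting minimality. Thus the successor is an anti-dependency edge $\tup{R_b,W_q}\in\rwo(\co'_1)$, which by definition supplies a variable $\xvar$ and a transaction $a$ with $\tup{a,b}\in\wro[\xvar]$, $\tup{a,q}\in\co$, and $q$ writing $\xvar$ (note $a\neq q$ since $\co$ is strict). I would then compare $p$ and $q$ in the total order $\co$. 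If $\tup{p,q}\in\co$, the $\co'_1$-edge $\tup{W_p,W_q}$ again lets me shortcut $W_p\to R_b\to W_q$ to $W_p\to W_q$, contradicting minimality. Otherwise $\tup{q,p}\in\co^*$ (i.e. $q=p$ or $\tup{q,p}\in\co$), and composing with $\tup{p,b}\in\wro\cup\so$ gives $\tup{q,b}\in\co^*\circ(\wro\cup\so)$.

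At this point the $\mathsf{Prefix}$ axiom closes the argument: from $\tup{a,b}\in\wro[\xvar]$, $a\neq q$, $q$ writing $\xvar$, and $\tup{q,b}\in\co^*\circ(\wro\cup\so)$, the hypothesis $\tup{\hist,\co}\models\mathsf{Prefix}$ yields $\tup{q,a}\in\co$, contradicting $\tup{a,q}\in\co$. The key insight that makes this work is a \emph{localization}: a single read transaction of the cycle already carries everything one instantiation of $\mathsf{Prefix}$ needs — the writer $a$ it reads from and an overwriter $q$ of the same variable — so I never have to track the whole cycle globally. I expect the main obstacle to be exactly this localization, namely organizing the two shortcut cases (the own-write successor and the ``$\tup{p,q}\in\co$'' case, where minimality and the acyclicity of $\co'_1$ are used) so that the one surviving case produces precisely the $\co^*\circ(\wro\cup\so)$ chain from the overwriter $q$ to the read $b$ demanded by the axiom; getting the indices to line up across the $\wro\cup\so$ predecessor $p$ and the anti-dependency target $q$ is the delicate point.
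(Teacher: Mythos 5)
Your proof is correct, and it shares the paper's overall strategy: take a minimal cycle in $\co'_2$, locate the $\rwo(\co'_1)$ edge it must contain, and turn that edge's witness into an instantiation of $\mathsf{Prefix}$ contradicting the original commit order $\co$. Where you genuinely diverge is in how the cycle is pinned down. The paper first argues that a minimal cycle of $\co'_2$ inherits the structural properties proved for $\co'_1$ in Lemma~\ref{lem:co1} (one read transaction, at most two consecutive write transactions, hence exactly the shape in Figure~\ref{pc_p_proof:2a}), and it extracts the $\co^*$ segment of the $\axpre$ violation from the $\co'^*_1$ portion of the cycle lying between the two write transactions. You never invoke that shape analysis: you only classify the two edges incident to one read transaction $R_b$ on the cycle, eliminate the own-write successor $W_b$ by a shortcut, and then use the \emph{totality} of $\co$ to compare the predecessor $p$ with the anti-dependency target $q$ --- either $\tup{p,q}\in\co$ yields another shortcut contradicting minimality, or $\tup{q,p}\in\co^*$ supplies exactly the $\co^*\circ(\wro\cup\so)$ path that $\mathsf{Prefix}$ requires. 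The paper's route buys an explicit, diagram-friendly description of the offending cycle (length two or three); yours buys self-containedness, since you need not justify that the minimal-cycle analysis of Lemma~\ref{lem:co1} carries over to $\co'_2$ --- a step the paper dispatches in a single parenthetical remark --- because the order-theoretic dichotomy on $\co$ does that work locally. Both arguments cover the degenerate cases (the own-write successor, and $q=p$, i.e., a two-cycle) soundly.
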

 \vspace{-3mm}
  \begin{wrapfigure}{l}{0.5\textwidth} 
  \centering
  \begin{subfigure}{.22\textwidth}
   \resizebox{\textwidth}{!}{
    \begin{tikzpicture}[->,>=stealth',shorten >=1pt,auto,node distance=4cm,
      semithick, transform shape]
     \node[transaction state] at (0,0)       (t_1)           {$W_{\tr_1}$};
     \node[transaction state] at (2,0)       (t_3)           {$R_{\tr_3}$};
     \node[transaction state, label={above:$\writeVar{ }{\xvar}$}] at (-0.5,1.5) (t_2) {$W_{\tr_2}$};
     \node[transaction state] at (1.5,1.5) (t_4) {$W_{\tr_4}$};
     \path (t_1) edge node {$\wro[\xvar]$} (t_3);
     \path (t_2) edge[red] node {$\co'^*_1$} (t_4);
     \path (t_4) edge[red] node {$\co'_1$} (t_3);
     \path (t_1) edge[left] node {$\co'_1$} (t_2);
     \path (t_3) edge[red, right] node[pos=.8,rotate=-30,yshift=2.5mm] {$\rwo(\co'_1)$} (t_2);
    \end{tikzpicture}
   }
   \caption{Minimal cycle in $\co'_2$.}
   \label{pc_p_proof:2a}
  \end{subfigure}
  \hspace{1mm}
  \begin{subfigure}{0.26\textwidth}
   \resizebox{\textwidth}{!}{
    \begin{tikzpicture}[->,>=stealth',shorten >=1pt,auto,node distance=4cm,
      semithick, transform shape]
     \node[transaction state, text=red] at (0,0)       (t_1)           {$\tr_1$};
     \node[transaction state] at (2,0)       (t_3)           {$\tr_3$};
     \node[transaction state, text=red,label={above:\textcolor{red}{$\writeVar{ }{\xvar}$}}] at (-0.5,1.5) (t_2) {$\tr_2$};
     \node[transaction state] at (1.5,1.5) (t_4) {$\tr_4$};
     \path (t_1) edge[red] node {$\wro[\xvar]$} (t_3);
     \path (t_2) edge node {$\co^*$} (t_4);
     \path (t_4) edge node {$\wro \cup \so$} (t_3);
     \path (t_1) edge[left] node {$\co$} (t_2);
    \end{tikzpicture}
   }
   \caption{$\axpre$ violation in $\tup{\hist, \co}$.}
   \label{pc_p_proof:2b}
  \end{subfigure}
   \vspace{-3mm}
  \caption{Cycles in $\co'_2$ correspond to $\axpre$ violations.}
  \label{pc_p_proof:2}
   \vspace{-2.5mm}
 \end{wrapfigure}
 \textsc{Proof.}
 Assume that $\co'_2$ is cyclic. Any minimal cycle in $\co'_2$ still satisfies the properties of minimal cycles of $\co'_1$ proved in Lemma~\ref{lem:co1} (because all write transactions are still totally ordered and $\co'_2$ doesn't relate directly read transactions). 
 So, a minimal cycle in $\co'_2$ contains a read transaction and a write transaction, and at most one more write transaction.
 
 Since $\co'_1$ is acyclic, a cycle in $\co'_2$, and in particular a minimal one, must  necessarily contain a dependency from $\rwo(\co'_1)$. Note that a minimal cycle cannot contain two such dependencies since this would imply that it contains two non-consecutive write transactions. 
%
The red edges in Figure~\ref{pc_p_proof:2a} show a minimal cycle of $\co'_2$ satisfying all the properties mentioned above. This cycle contains a dependency $\tup{R_{\tr_3}, W_{\tr_2}}\in \rwo(\co'_1)$ which implies the existence of a write transaction $W_{\tr_1}$ in $\hist_{R|W}$ s.t. $\tup{W_{\tr_1}, R_{\tr_3}} \in \wro[\xvar]'$ and $\tup{W_{\tr_1}, W_{\tr_2}} \in \co'_1$ and $W_{\tr_1}, W_{\tr_2}$ write on $\xvar$ (these dependencies are represented by the black edges in Figure~\ref{pc_p_proof:2a}). The relations between these transactions of $\hist_{R|W}$ imply that the corresponding transactions of $\hist$ are related as shown in Figure~\ref{pc_p_proof:2b}: $\tup{W_{\tr_1}, W_{\tr_2}} \in \co'_1$ and $\tup{W_{\tr_2}, W_{\tr_4}} \in \co'^*_1$ imply $\tup{\tr_1, \tr_2} \in \co$ and $\tup{\tr_2, \tr_4} \in \co^*$, respectively, $\tup{W_{\tr_1}, W_{\tr_3}} \in \wro[\xvar]'$ implies $\tup{\tr_1, \tr_3} \in \wro[\xvar]$, and $\tup{W_{\tr_4}, R_{\tr_3}} \in \co'_1$ implies $\tup{\tr_4, \tr_3} \in \wro \cup \so$. This implies that $\tup{\hist,\co}$ doesn't satisfy the $\axpre$ axiom, a contradiction. \hfill $\Box$

 \begin{lemma}\label{lem:pc1:app}
If a history $\hist$ satisfies prefix consistency, then $\hist_{R|W}$ is serializable.
\end{lemma}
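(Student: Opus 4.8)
The plan is to glue together the two acyclic partial orders already constructed. Since $\hist$ satisfies prefix consistency, I would fix a strict total order $\co$ with $\wro\cup\so\subseteq\co$ and $\tup{\hist,\co}\models\axpre$, and form $\co'_1\subseteq\co'_2$ exactly as in the text. By Lemma~\ref{lem:co2}, $\co'_2$ is acyclic, so it admits a linearization $\co'$, i.e.\ a strict total order on the transactions of $\hist_{R|W}$ with $\co'_2\subseteq\co'$. The claim is that $\co'$ is a witness for serializability of $\hist_{R|W}$ in the sense of Definition~\ref{axiom-criterion}, and the proof reduces to two checks: that $\co'$ is a legal commit order, and that $\tup{\hist_{R|W},\co'}\models\axser$.

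First I would verify $\wro'\cup\so'\subseteq\co'$, which is routine bookkeeping. The pairs $\tup{W_{\tr_1},R_{\tr_2}}$ of $\wro'$ lie in $\co'_1$ already, since $\wro\subseteq\wro\cup\so$. For $\so'$, the pairs $\tup{R_\tr,W_\tr}$ and the $\so$-induced $\tup{W_{\tr_1},R_{\tr_2}}$ are in $\co'_1$ directly, the $\tup{W_{\tr_1},W_{\tr_2}}$ pairs follow from $\so\subseteq\co$, and the remaining read-to-read and read-to-write pairs are obtained transitively by routing through $W_{\tr_1}$ (using $\tup{R_{\tr_1},W_{\tr_1}}\in\co'_1$). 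Hence $\so'\subseteq\co'^+_1\subseteq\co'$.

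The substance is the axiom check. Suppose the antecedent of $\axser$ holds in $\hist_{R|W}$ for some $\xvar$ and transactions $A\neq B,C$: $\tup{A,C}\in\wro[\xvar]'$, $B$ writes $\xvar$, and $\tup{B,C}\in\co'$. Because $\wro[\xvar]'$ only links a write transaction to a read transaction and only write transactions write variables, this forces $A=W_{\tr_1}$, $C=R_{\tr_3}$ with $\tup{\tr_1,\tr_3}\in\wro[\xvar]$, and $B=W_{\tr_2}$ with $\tr_2\neq\tr_1$ writing $\xvar$. Since all write transactions are already totally ordered in $\co'_1$ (inheriting the total order $\co$), I split on their relative order. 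If $\tup{W_{\tr_2},W_{\tr_1}}\in\co'$, this is exactly the required conclusion $\tup{B,A}\in\co'$ and we are done. Otherwise $\tup{W_{\tr_1},W_{\tr_2}}\in\co'_1$, so the data $\tup{W_{\tr_1},R_{\tr_3}}\in\wro[\xvar]'$, $\tup{W_{\tr_1},W_{\tr_2}}\in\co'_1$, with $W_{\tr_2}$ writing $\xvar$, matches the defining pattern of $\rwo(\co'_1)$, yielding $\tup{R_{\tr_3},W_{\tr_2}}\in\co'_2\subseteq\co'$; together with the hypothesis $\tup{W_{\tr_2},R_{\tr_3}}\in\co'$ this is a cycle in the strict total order $\co'$, a contradiction.

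The main obstacle is conceptual rather than computational: recognizing that the anti-dependency edges $\rwo(\co'_1)$ added when forming $\co'_2$ are precisely what enforces the serializability axiom, since they forbid any overwriting write $W_{\tr_2}$ of $\xvar$ from being committed between $W_{\tr_1}$ (the source of $R_{\tr_3}$'s read) and the read $R_{\tr_3}$ itself. Once this is seen, the argument is a short two-case split, and the genuinely hard work—ensuring a linearization $\co'$ exists at all—has already been discharged by Lemmas~\ref{lem:co1} and~\ref{lem:co2}.
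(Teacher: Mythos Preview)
Your proposal is correct and follows essentially the same approach as the paper: linearize $\co'_2$ (which is acyclic by Lemma~\ref{lem:co2}) and argue that any such linearization satisfies $\axser$ because the $\rwo(\co'_1)$ edges already force the required ordering, leading to a contradiction otherwise. Your argument is in fact slightly more complete than the paper's, since you explicitly verify that $\wro'\cup\so'\subseteq\co'$ (needed for Definition~\ref{axiom-criterion}), a point the paper leaves implicit.
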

 \begin{proof}
 Let $\co'$ be any total order consistent with $\co'_2$. Assume by contradiction that $\tup{\hist_{R|W},\co'}$ doesn't satisfy $\axser$. Then, there exist $\tr'_1, \tr'_2, \tr'_3 \in T'$ such that $\tup{\tr'_1, \tr'_2}, \tup{\tr'_2, \tr'_3} \in \co'$ and $\tr'_1, \tr'_2$ write on some variable $\xvar$ and $\tup{\tr'_1, \tr'_3} \in \wro[\xvar]'$. But then $\tr'_1, \tr'_2$ are write transactions and $\co'_1$ must contain $\tup{\tr'_1, \tr'_2}$. Therefore, $\rwo(\co'_1)$ and $\co'_2$ should contain $\tup{\tr'_3, \tr'_2}$, a contradiction with $\co'$ being consistent with $\co'_2$.
 \end{proof}

 Finally, it can be proved that any linearization $\co'$ of $\co'_2$ satisfies $\mathsf{Serializability}$ (together with $\hist_{R|W}$). Moreover, it can also be shown that the serializability of $\hist_{R|W}$ implies that $\hist$ satisfies PC. Therefore,

\begin{theorem}\label{th:pc}
A history $\hist$ satisfies prefix consistency iff $\hist_{R|W}$ is serializable.
\end{theorem}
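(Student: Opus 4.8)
The theorem asserts an equivalence, and the ``only if'' direction (PC $\Rightarrow$ serializability of $\hist_{R|W}$) is essentially already discharged by the machinery built above. Starting from a $\axpre$ witness $\co$ for $\hist$, one constructs the partial orders $\co'_1\subseteq\co'_2$ on $\hist_{R|W}$, establishes their acyclicity in Lemmas~\ref{lem:co1} and~\ref{lem:co2}, and Lemma~\ref{lem:pc1:app} shows that any linearization of $\co'_2$ is a serializability witness. So in the proof I would merely cite this chain for the forward direction, and the only genuinely new argument is the reverse, serializability $\Rightarrow$ PC.

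For that direction, the plan is to take a total commit order $\co'$ witnessing serializability of $\hist_{R|W}$ (so $\wro'\cup\so'\subseteq\co'$ and $\tup{\hist_{R|W},\co'}\models\axser$) and \emph{project it onto the write transactions}: define $\co$ on $T$ by $\tup{\tr_1,\tr_2}\in\co$ iff $\tup{W_{\tr_1},W_{\tr_2}}\in\co'$. Since $\co'$ totally orders $T'$, its restriction to $\{W_\tr\mid\tr\in T\}$ is a strict total order, so $\co$ is a strict total order on $T$. The first routine step is to check $\wro\cup\so\subseteq\co$: an $\so$-edge $\tup{\tr_1,\tr_2}$ gives $\tup{W_{\tr_1},W_{\tr_2}}\in\so'\subseteq\co'$, while a $\wro$-edge $\tup{\tr_1,\tr_2}$ gives $\tup{W_{\tr_1},R_{\tr_2}}\in\wro'$ together with $\tup{R_{\tr_2},W_{\tr_2}}\in\so'$, so $W_{\tr_1}$ precedes $W_{\tr_2}$ in $\co'$ through $R_{\tr_2}$; in both cases $\tup{\tr_1,\tr_2}\in\co$.

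The core step is verifying $\tup{\hist,\co}\models\axpre$ by pulling each premise instance back to a single instance of $\axser$ on $\hist_{R|W}$. Suppose $\tr_1\neq\tr_2$, $\tup{\tr_1,\tr_3}\in\wro[\xvar]$, $\writeVar{\tr_2}{\xvar}$, and $\tup{\tr_2,\tr_3}\in\co^*\circ(\wro\cup\so)$; take the witness $\tr_4$ with $\tup{\tr_2,\tr_4}\in\co^*$ and $\tup{\tr_4,\tr_3}\in\wro\cup\so$. The relation $\tup{\tr_2,\tr_4}\in\co^*$ yields $\tup{W_{\tr_2},W_{\tr_4}}\in(\co')^*$ (allowing $W_{\tr_2}=W_{\tr_4}$), and the $\wro\cup\so$ edge into $\tr_3$ yields $\tup{W_{\tr_4},R_{\tr_3}}\in\co'$ (routed through $\wro'$ or $\so'$ exactly as in the previous paragraph). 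Hence $\tup{W_{\tr_2},R_{\tr_3}}\in\co'$. Combining this with $\tup{W_{\tr_1},R_{\tr_3}}\in\wro[\xvar]'$, with $\writeVar{W_{\tr_2}}{\xvar}$, and with $W_{\tr_1}\neq W_{\tr_2}$, the serializability axiom applied to $\tup{\hist_{R|W},\co'}$ on the witnesses $W_{\tr_1},W_{\tr_2},R_{\tr_3}$ forces $\tup{W_{\tr_2},W_{\tr_1}}\in\co'$, i.e.\ $\tup{\tr_2,\tr_1}\in\co$, which is precisely the conclusion of $\axpre$.

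I expect the main obstacle to be bookkeeping rather than conceptual: one must carefully match the shape of $\axpre$, whose premise uses the composite $\co^*\circ(\wro\cup\so)$, against the simpler premise of $\axser$, whose composite is just $\co'$. The key observation that makes this go through is that splitting each transaction into its $R$ and $W$ parts collapses the one-step ``observed-by'' relation $(\wro\cup\so)$ and the commit-prefix relation $\co^*$ of $\hist$ into a \emph{single} $\co'$-ordering $W_{\tr_2}<_{\co'}R_{\tr_3}$ landing before the read transaction, so that one lone serializability instance recovers the prefix constraint. Care is needed with the reflexive case $\tr_2=\tr_4$ and with ensuring that the read transaction $R_{\tr_3}$, rather than a write transaction, is the correct third witness supplied to $\axser$.
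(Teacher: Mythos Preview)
Your proposal is correct and follows essentially the same approach as the paper: the forward direction is exactly Lemma~\ref{lem:pc1:app}, and for the reverse you project $\co'$ onto the write transactions to define $\co$, check $\wro\cup\so\subseteq\co$, and verify $\axpre$ via one instance of $\axser$ on the triple $W_{\tr_1},W_{\tr_2},R_{\tr_3}$. The only cosmetic difference is that the paper phrases the $\axpre$ verification by contradiction (assuming $\tup{\tr_1,\tr_2}\in\co$ and exhibiting a cycle $R_{\tr_3}\to W_{\tr_2}\to W_{\tr_4}\to R_{\tr_3}$ in $\co'$), whereas you argue it directly; the underlying chain of implications is identical.
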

 \begin{wrapfigure}{l}{0.53\textwidth} 
  \centering
  \begin{subfigure}{.26\textwidth}
   \resizebox{\textwidth}{!}{
    \begin{tikzpicture}[->,>=stealth',shorten >=1pt,auto,node distance=4cm,
      semithick, transform shape]
     \node[transaction state, text=red] at (0,0)       (t_1)           {$\tr_1$};
     \node[transaction state] at (2,0)       (t_3)           {$\tr_3$};
     \node[transaction state, text=red,label={above:\textcolor{red}{$\writeVar{ }{\xvar}$}}] at (-0.5,1.5) (t_2) {$\tr_2$};
     \node[transaction state] at (1.5,1.5) (t_4) {$\tr_4$};
     \path (t_1) edge[red] node {$\wro[\xvar]$} (t_3);
     \path (t_2) edge node {$\co^*$} (t_4);
     \path (t_4) edge node {$\wro \cup \so$} (t_3);
     \path (t_1) edge[left] node {$\co$} (t_2);
    \end{tikzpicture}
   }
   \caption{$\axpre$ violation in $\tup{\hist, \co}$}
   \label{pc_p_proof:3a}
  \end{subfigure}
  \begin{subfigure}{0.26\textwidth}
   \resizebox{\textwidth}{!}{
    \begin{tikzpicture}[->,>=stealth',shorten >=1pt,auto,node distance=4cm,
      semithick, transform shape]
     \node[transaction state] at (0,0)       (t_1)           {$W_{\tr_1}$};
     \node[transaction state] at (2,0)       (t_3)           {$R_{\tr_3}$};
     \node[transaction state, label={above:\textcolor{red}{$\writeVar{ }{\xvar}$}}] at (-0.5,1.5) (t_2) {$W_{\tr_2}$};
     \node[transaction state] at (1.5,1.5) (t_4) {$W{\tr_4}$};
     \path (t_1) edge node[near start] {$\wro[\xvar]'$} (t_3);
     \path (t_2) edge[red] node {$\co'^*$} (t_4);
     \path (t_4) edge[red] node {$\wro' \cup \so'$} (t_3);
     \path (t_1) edge[left] node {$\co'$} (t_2);
     \path (t_3) edge[red,above right] node {$\co'$} (t_2);
    \end{tikzpicture}
   }
   \caption{Cycle in $\co'$.}
   \label{pc_p_proof:3b}
  \end{subfigure}
  
  \caption{$\axpre$ violations correspond to cycles in $\co'$.}
  \label{pc_p_proof:3}
 \end{wrapfigure}
 \textsc{Proof.}
The ``only-if'' direction is proven by Lemma~\ref{lem:pc1:app}. For the reverse, we show that 

\noindent
$\forall \co'.\ \exists \co.\ \tup{\hist_{R|W}, \co'} \models \axser $

\hspace{3cm}
$\Rightarrow \tup{\hist, \co} \models \axpre$ 
 
 Thus, let $\co'$ be a commit (total) order on transactions of $\hist_{R|W}$ which together with $\hist_{R|W}$ satisfies the serializability axiom.
 Let $\co$ be a commit order on transactions of $\hist$ defined by 
 $\co = \{\tup{\tr_1, \tr_2} | \tup{W_{\tr_1}, W_{\tr_2}} \in \co'\}$ ($\co$ is clearly a total order). If $\co$ were not to be consistent with $\wro \cup \so$, then there would exist transactions $\tr_1$ and $\tr_2$ such that  $\tup{\tr_1, \tr_2} \in \wro \cup \so$ and $\tup{\tr_2, \tr_1} \in \co$, which would imply that $\tup{W_{\tr_1}, R_{\tr_2}} , \tup{R_{\tr_2}, W_{\tr_2}} \in \wro \cup \so$ and $\tup{W_{\tr_2}, W_{\tr_1}} \in \co'$, which violates the acylicity of $\co'$. We show that $\tup{\hist, \co}$ satisfies $\axpre$. Assume by contradiction that there exists a $\axpre$ violation between $\tr_1$, $\tr_2$, $\tr_3$, $\tr_4$ (shown in Figure \ref{pc_p_proof:3a}), i.e., for some $\xvar \in \vars{\hist}$, $\tup{\tr_1, \tr_3} \in \wro[\xvar]$ and $\writeVar{\tr_2}{\xvar}$, $\tup{\tr_1, \tr_2} \in \co$, $\tup{\tr_2, \tr_4} \in \co^*$ and $\tup{\tr_4, \tr_3} \in \wro \cup \so$. Then, the corresponding transactions $W_{\tr_1}, W_{\tr_2}, W_{\tr_4}, R_{\tr_3}$ in $\hist_{R|W}$ would be related as follows: 
$\tup{W_{\tr_1}, W_{\tr_2}} \in \co'$ and $\tup{W_{\tr_1}, R_{\tr_3}} \in \wro[\xvar]'$ because $\tup{\tr_1, \tr_3} \in \wro[\xvar]$ and $\tup{\tr_1, \tr_2} \in \co$.
        Since $\co'$ satisfies $\axser$, then $\tup{R_{\tr_3}, W_{\tr_2}} \in \co'$.
        But $\tup{\tr_2, \tr_4} \in \co^*$ and $\tup{\tr_4, \tr_3} \in \wro \cup \so$ imply that $\tup{W_{\tr_2}, W_{\tr_4}} \in \co'^*$ and $\tup{W_{\tr_4}, R_{\tr_3}} \in \wro' \cup \so'$, which show that $\co'$ is cyclic (the red cycle in Figure \ref{pc_p_proof:3b}), a contradiction. \hfill $\Box$

Since the history $\hist_{R|W}$ can be constructed in linear time, Lemma~\ref{lem:pc_width}, Theorem~\ref{th:pc}, and Corollary~\ref{cor:ser} imply the following result.
 
 \begin{corollary}\label{cor:pc}
 
For an arbitrary but fixed constant $k\in\mathbb{N}$, the problem of checking prefix consistency for histories of width at most $k$ is polynomial time.
 \end{corollary}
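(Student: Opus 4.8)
The plan is to obtain Corollary~\ref{cor:pc} by composing the reduction of Theorem~\ref{th:pc} with the bounded-width serializability algorithm of Corollary~\ref{cor:ser}, the key point being that the transformation $\hist\mapsto\hist_{R|W}$ is cheap and preserves width exactly. Concretely, to decide whether a given history $\hist$ of width at most $k$ satisfies prefix consistency, I would first construct $\hist_{R|W}$ from $\hist$ and then run $\mathsf{checkSER}$ on $\hist_{R|W}$, returning its answer unchanged. Correctness of this procedure is immediate from Theorem~\ref{th:pc}, which states that $\hist$ satisfies prefix consistency iff $\hist_{R|W}$ is serializable, so the reduction is both sound and complete.

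The first technical step is to observe that the reduction is efficient. The construction of $\hist_{R|W}$ only splits each transaction $\tr$ into its read part $R_\tr$ and its write part $W_\tr$, and relabels the endpoints of the $\wro$ and $\so$ edges according to the fixed pattern in the definition of $\hist_{R|W}$; this can be carried out in time linear in the size of $\hist$, and the resulting $\hist_{R|W}$ has size linear in $\mathsf{size}(\hist)$. The second technical step is to transfer the width bound: by Lemma~\ref{lem:pc_width}, $\hist$ and $\hist_{R|W}$ have the same width, so $\mathsf{width}(\hist_{R|W})\le k$. This guarantees that the \emph{same} fixed constant $k$ governs the cost of the serializability check on the transformed instance, so that the bound $O(\mathsf{size}(h)^{\mathsf{width}(h)}\cdot\mathsf{size}(h)^3)$ from Corollary~\ref{cor:ser} applies with an unchanged exponent.

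Combining these observations, Corollary~\ref{cor:ser} gives that deciding serializability of $\hist_{R|W}$ takes time polynomial in $\mathsf{size}(\hist_{R|W})$, hence polynomial in $\mathsf{size}(\hist)$, since $\hist_{R|W}$ has width at most $k$. Composing the linear-time reduction with this polynomial-time decision procedure yields an overall polynomial-time algorithm for checking prefix consistency on histories of width at most $k$, which is exactly the claim. Because every ingredient is a previously established result, there is no genuinely hard obstacle here; the only point requiring care is that width is preserved \emph{exactly} (Lemma~\ref{lem:pc_width}), since it is this equality — rather than merely a polynomial blow-up — that keeps the fixed-$k$ bound of Corollary~\ref{cor:ser} applicable with the same $k$.
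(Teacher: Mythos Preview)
Your proposal is correct and follows exactly the paper's own argument: the paper derives the corollary in one line by combining the linear-time constructibility of $\hist_{R|W}$, Lemma~\ref{lem:pc_width} (width preservation), Theorem~\ref{th:pc} (the PC-to-SER reduction), and Corollary~\ref{cor:ser}. Your write-up is somewhat more detailed but uses the same ingredients in the same way.
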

 


\subsection{Reducing Snapshot Isolation to Serializability}\label{ssec:si}

We extend the reduction of prefix consistency to serializability to the case of snapshot isolation. Compared to prefix consistency, snapshot isolation disallows transactions that read the same snapshot of the database to commit together if they write on a common variable (stated by the $\mathsf{Conflict}$ axiom). More precisely, for any pair of transactions $\tr_1$ and $\tr_2$ writing to a common variable, $\tr_1$ must observe the effects of $\tr_2$ or vice-versa. 
We refine the definition of $\hist_{R|W}$ such that any ``serialization'' (i.e.., commit order satisfying $\mathsf{Serializability}$) disallows that the read transactions corresponding to two such transactions are ordered both before their write counterparts. We do this by introducing auxiliary variables that are read or written by these transactions. For instance, 
Figure~\ref{si_red_example} shows this transformation on the two histories in Figure~\ref{si_red_example:1} and Figure~\ref{si_red_example:3}, which represent the anomalies known as ``lost update'' and ``write skew'', respectively. The former is not admitted by SI while the latter is admitted. Concerning ``lost update'', the read counterpart of the transaction on the left writes to a variable {\tt x12} which is read by its write counterpart, but also written by the write counterpart of the other transaction. This forbids that the latter is serialized in between the read and write counterparts of the transaction on the left. A similar scenario is imposed on the transaction on the right, which makes that the transformed history is not serializable. Concerning the ``write skew'' anomaly, the transformed history is exactly as for the PC reduction since the two transactions don't write on a common variable. It is clearly serializable.


For a history $\hist = \tup{T, \wro, \so}$, the history $\hist_{R|W}^c = \tup{T', \wro', \so'}$ is defined as $\hist_{R|W}$ with the following additional construction: for every two transactions $\tr_1$ and $\tr_2 \in T$ that write on a common variable,
\begin{itemize}
\item $R_{\tr_1}$ and $W_{\tr_2}$ (resp., $R_{\tr_2}$ and $W_{\tr_1}$) write on a variable $\xvar_{1,2}$ (resp., $\xvar_{2,1}$),
\item the write transaction of $\tr_i$ reads $\xvar_{i,j}$ from the read transaction of $\tr_i$, for all $i\neq j\in\{1,2\}$, i.e., $\wro[\xvar_{1,2}]= \{\tup{R_{\tr_1}, W_{\tr_1}}\}$ and $\wro[\xvar_{2,1}]= \{\tup{R_{\tr_2}, W_{\tr_2}}\}$.
\end{itemize}

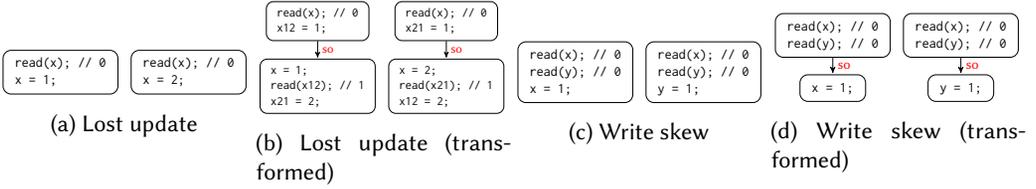
\begin{figure}

    \begin{subfigure}{.24\textwidth}
    \resizebox{\textwidth}{!}{
    \begin{tikzpicture}[->,>=stealth',shorten >=1pt,auto,node distance=3cm,
     semithick, transform shape]
     \node[draw, rounded corners=2mm] (t2) at (-1.6, 0) {\begin{tabular}{l} \texttt{read(x); // 0} \\ \texttt{x = 1;} \end{tabular}};
     \node[draw, rounded corners=2mm] (t3) at (1.6, 0) {\begin{tabular}{l} \texttt{read(x); // 0} \\ \texttt{x = 2;} \end{tabular}};
    \end{tikzpicture}  
    }
     \caption{Lost update}
     \label{si_red_example:1}
    \end{subfigure}
    \begin{subfigure}{.24\textwidth}
    \resizebox{\textwidth}{!}{
    \begin{tikzpicture}[->,>=stealth',shorten >=1pt,auto,node distance=3cm,
     semithick, transform shape]
     \node[draw, rounded corners=2mm] (t2r) at (-1.8, .9) {\begin{tabular}{l} \texttt{read(x); // 0} \\ \texttt{x12 = 1;} \end{tabular}};
     \node[draw, rounded corners=2mm] (t2w) at (-1.8, -.9) {\begin{tabular}{l}  \texttt{x = 1;}  \\ \texttt{read(x12); // 1} \\ \texttt{x21 = 2;} \end{tabular}};
     \node[draw, rounded corners=2mm] (t3r) at (1.8, .9) {\begin{tabular}{l} \texttt{read(x); // 0} \\ \texttt{x21 = 1;} \end{tabular}};
     \node[draw, rounded corners=2mm] (t3w) at (1.8, -.9) {\begin{tabular}{l} \texttt{x = 2;}  \\ \texttt{read(x21); // 1} \\ \texttt{x12 = 2;}\end{tabular}};
    \path (t2r) edge node {$\so$} (t2w);
    \path (t3r) edge node {$\so$} (t3w);
    \end{tikzpicture}  
    }
     \caption{Lost update (transformed)}
     \label{si_red_example:2}
    \end{subfigure}
  \begin{subfigure}{.24\textwidth}
  \resizebox{\textwidth}{!}{
  \begin{tikzpicture}[->,>=stealth',shorten >=1pt,auto,node distance=3cm,
   semithick, transform shape]
   \node[draw, rounded corners=2mm] (t2) at (-1.6, -1.5) {\begin{tabular}{l} \texttt{read(x); // 0} \\ \texttt{read(y); // 0} \\ \texttt{x = 1;} \end{tabular}};
   \node[draw, rounded corners=2mm] (t3) at (1.6, -1.5) {\begin{tabular}{l} \texttt{read(x); // 0} \\ \texttt{read(y); // 0} \\ \texttt{y = 1;} \end{tabular}};
  \end{tikzpicture}  
  }
   \caption{Write skew}
   \label{si_red_example:3}
  \end{subfigure}
  \begin{subfigure}{.24\textwidth}
  \resizebox{\textwidth}{!}{
  \begin{tikzpicture}[->,>=stealth',shorten >=1pt,auto,node distance=3cm,
   semithick, transform shape]
   \node[draw, rounded corners=2mm] (t2r) at (-1.6, 0) {\begin{tabular}{l} \texttt{read(x); // 0} \\ \texttt{read(y); // 0} \end{tabular}};
   \node[draw, rounded corners=2mm] (t2w) at (-1.6, -1.3) {\begin{tabular}{l} \texttt{x = 1;} \end{tabular}};
   \node[draw, rounded corners=2mm] (t3r) at (1.6, 0) {\begin{tabular}{l} \texttt{read(x); // 0} \\ \texttt{read(y); // 0} \end{tabular}};
   \node[draw, rounded corners=2mm] (t3w) at (1.6, -1.3) {\begin{tabular}{l} \texttt{y = 1;} \end{tabular}};
  \path (t2r) edge node {$\so$} (t2w);
  \path (t3r) edge node {$\so$} (t3w);
  \end{tikzpicture}  
  }
   \caption{Write skew (transformed)}
   \label{si_red_example:4}
  \end{subfigure}
   \vspace{-3mm}
  \caption{Reducing SI to SER.}
  \label{si_red_example}
   \vspace{-3mm}
\end{figure}

%
%
%
%

Note that $\hist_{R|W}$ and $\hist_{R|W}^c$ have the same width (the session order is defined exactly in the same way), which implies, by Lemma~\ref{lem:pc_width}, that $\hist$ and $\hist_{R|W}^c$ have the same width.

The following result can be proved using similar reasoning as in the case of prefix consistency (see Appendix~\ref{app:si_red}).

\begin{theorem}\label{th:si}
A history $\hist$ satisfies snapshot isolation iff $\hist_{R|W}^c$ is serializable.
\end{theorem}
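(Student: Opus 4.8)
The plan is to mirror the proof of Theorem~\ref{th:pc}, since $\hist_{R|W}^c$ differs from $\hist_{R|W}$ only by the auxiliary variables $\xvar_{1,2},\xvar_{2,1}$ and since snapshot isolation is the conjunction $\axpre\wedge\axconf$; the sole purpose of these variables is to force every serialization to additionally respect $\axconf$. The first step is to isolate the following \emph{gadget property}: for any two conflicting transactions $\tr_1,\tr_2$ and any total order $\co'$ with $\tup{\hist_{R|W}^c,\co'}\models\axser$, $\tup{W_{\tr_1},W_{\tr_2}}\in\co'$ implies $\tup{W_{\tr_1},R_{\tr_2}}\in\co'$ (and symmetrically). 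This follows by instantiating $\axser$ on the variable $\xvar_{2,1}$, which is written by both $R_{\tr_2}$ and $W_{\tr_1}$ and read by $W_{\tr_2}$ from $R_{\tr_2}$: ordering $W_{\tr_1}$ before the reader $W_{\tr_2}$ forces $W_{\tr_1}$ before the written-from transaction $R_{\tr_2}$. Intuitively, whenever $\tr_1$ is committed before a conflicting $\tr_2$, the read snapshot $R_{\tr_2}$ must already contain $\tr_1$.

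For the ``if'' direction (serializable $\Rightarrow$ SI), let $\co'$ satisfy $\tup{\hist_{R|W}^c,\co'}\models\axser$ and set $\co=\{\tup{\tr_1,\tr_2}\mid\tup{W_{\tr_1},W_{\tr_2}}\in\co'\}$, exactly as in Theorem~\ref{th:pc}. Since $\hist_{R|W}^c$ and $\hist_{R|W}$ have the same transactions and restricting the $\axser$ instances to the original variables only drops constraints, $\co'$ is also a serializability witness for $\hist_{R|W}$; hence Theorem~\ref{th:pc} already yields $\tup{\hist,\co}\models\axpre$ and that $\co$ is a total order extending $\wro\cup\so$. It remains to prove $\tup{\hist,\co}\models\axconf$. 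Given an instance with $\tup{\tr_1,\tr_3}\in\wro[\xvar]$, $\writeVar{\tr_2}{\xvar}$, $\writeVar{\tr_3}{\yvar}$, $\writeVar{\tr_4}{\yvar}$, $\tup{\tr_2,\tr_4}\in\co^{*}$ and $\tup{\tr_4,\tr_3}\in\co$, I apply the gadget property to the conflicting pair $\tr_4,\tr_3$ to obtain $\tup{W_{\tr_4},R_{\tr_3}}\in\co'$; composing with $\tup{W_{\tr_2},W_{\tr_4}}\in\co'^{*}$ and using transitivity of $\co'$ yields $\tup{W_{\tr_2},R_{\tr_3}}\in\co'$. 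As $W_{\tr_2}$ writes $\xvar$ while $R_{\tr_3}$ reads $\xvar$ from $W_{\tr_1}$, a final instantiation of $\axser$ on $\xvar$ forces $\tup{W_{\tr_2},W_{\tr_1}}\in\co'$, i.e.\ $\tup{\tr_2,\tr_1}\in\co$, as required.

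For the ``only-if'' direction (SI $\Rightarrow$ serializable), let $\co$ satisfy $\tup{\hist,\co}\models\axpre\wedge\axconf$. I follow the template of Lemmas~\ref{lem:co1}--\ref{lem:pc1:app}: build $\co'_1$ on $\hist_{R|W}^c$ as in the prefix-consistency proof (write transactions ordered by $\co$, each $R_{\tr}$ before $W_{\tr}$, and $\tup{W_{\tr_1},R_{\tr_2}}$ whenever $\tup{\tr_1,\tr_2}\in\wro\cup\so$), and additionally insert an edge $\tup{W_{\tr_1},R_{\tr_2}}$ for every conflicting pair with $\tup{\tr_1,\tr_2}\in\co$. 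After closing under the overwrite constraints $\rwo(\co'_1)$ to obtain $\co'_2$, any linearization $\co'$ should satisfy $\axser$: on the original variables this is the argument of Lemma~\ref{lem:pc1:app}, and on the auxiliary variables it holds because the inserted edges enforce precisely the ordering $\tup{W_{\tr_1},R_{\tr_2}}$ required by the gadget property whenever $\tup{W_{\tr_1},W_{\tr_2}}\in\co'$ (and the writers $R_{\tr}$ read from are excluded by $\tr_1\neq\tr_2$).

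The main obstacle is the acyclicity of $\co'_2$, the analogue of Lemma~\ref{lem:co2}. The structural facts from Lemma~\ref{lem:co1} still apply, so a minimal cycle contains exactly one read transaction and one or two consecutive write transactions; as in Figure~\ref{pc_p_proof:2} it has the shape $W_{\tr_2}\to^{*}W_{\tr_4}\to R_{\tr_3}\to W_{\tr_2}$, where $W_{\tr_2}\to^{*}W_{\tr_4}$ lies in $\co'^{*}_1$ and $R_{\tr_3}\to W_{\tr_2}$ is an overwrite edge of $\rwo(\co'_1)$ induced by $\tup{W_{\tr_1},R_{\tr_3}}\in\wro[\xvar]'$ and $\tup{W_{\tr_1},W_{\tr_2}}\in\co'_1$. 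The new case, absent in the prefix-consistency proof, is that $W_{\tr_4}\to R_{\tr_3}$ may now be one of the inserted conflict edges rather than a $\wro\cup\so$ edge. Reading off such a cycle then gives $\tup{\tr_1,\tr_3}\in\wro[\xvar]$, $\writeVar{\tr_2}{\xvar}$, $\tup{\tr_1,\tr_2}\in\co$, $\tup{\tr_2,\tr_4}\in\co^{*}$, together with $\tr_3$ and $\tr_4$ conflicting on some $\yvar$ and $\tup{\tr_4,\tr_3}\in\co$ --- exactly the premise of an $\axconf$ violation, whose conclusion $\tup{\tr_2,\tr_1}\in\co$ contradicts $\tup{\tr_1,\tr_2}\in\co$; the case where $W_{\tr_4}\to R_{\tr_3}$ is a genuine $\wro\cup\so$ edge reproduces the $\axpre$ violation of Lemma~\ref{lem:co2}. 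Keeping track of which cycle edges originate from $\co$, from $\wro\cup\so$, and from the gadget, and verifying that no other minimal cycle shape survives, is the delicate part of the argument.
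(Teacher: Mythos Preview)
Your proposal is correct and follows essentially the same route the paper takes: it mirrors the prefix-consistency argument (Lemmas~\ref{lem:co1}--\ref{lem:pc1:app} and Theorem~\ref{th:pc}), with the auxiliary variables $\xvar_{1,2},\xvar_{2,1}$ supplying exactly the extra $\tup{W_{\tr_1},R_{\tr_2}}$ constraints needed so that minimal cycles of $\co'_2$ now match either a $\axpre$ or a $\axconf$ violation. One small point you gloss over: when you augment $\co'_1$ with the conflict edges $\tup{W_{\tr_1},R_{\tr_2}}$ for $\tup{\tr_1,\tr_2}\in\co$, you should re-run the size-$2$/size-$3$ case analysis of Lemma~\ref{lem:co1} to confirm that $\co'_1$ itself stays acyclic; the argument goes through unchanged because every new $\tup{W,R}$ edge still projects to a $\co$ (hence $\wro\cup\so\subseteq\co$) dependency in $\hist$, so both forbidden cycles still collapse to a $\co$-cycle.
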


Note that $\hist_{R|W}^c$ and $\hist$ have the same width, and that $\hist_{R|W}^c$ can be constructed in linear time. Therefore, Theorem~\ref{th:si}, and Corollary~\ref{cor:ser} imply the following result.
 
 \begin{corollary}\label{cor:si}
For an arbitrary but fixed constant $k\in\mathbb{N}$, the problem of checking snapshot isolation for histories of width at most $k$ is polynomial time.
 
 \end{corollary}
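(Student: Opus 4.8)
The plan is to obtain the corollary by composing two results already in hand, so that no new algorithmic machinery is required: the correctness of the snapshot-isolation-to-serializability reduction (Theorem~\ref{th:si}) and the bounded-width serializability algorithm (Corollary~\ref{cor:ser}). Given an input history $\hist$ of width at most $k$, I would first construct the transformed history $\hist_{R|W}^c$, then run the serializability checker $\mathsf{checkSER}$ of Section~\ref{ssec:ser_checking} on $\hist_{R|W}^c$, and report that $\hist$ satisfies snapshot isolation exactly when the checker accepts. Soundness and completeness of this decision procedure are immediate from Theorem~\ref{th:si}, which states that $\hist$ satisfies snapshot isolation iff $\hist_{R|W}^c$ is serializable.

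The first thing to verify is that $\hist_{R|W}^c$ still has width at most $k$, so that the bounded-width serializability result actually applies. This follows transitively: the construction of $\hist_{R|W}^c$ augments $\hist_{R|W}$ only by adding operations on fresh auxiliary variables $\xvar_{1,2},\xvar_{2,1}$ to the already-present read and write transactions, without introducing any new transaction and without changing the session order. Hence $\hist_{R|W}^c$ and $\hist_{R|W}$ have identical session orders, identical antichains, and the same width; and by Lemma~\ref{lem:pc_width}, $\hist_{R|W}$ has the same width as $\hist$. Consequently $\mathsf{width}(\hist_{R|W}^c)=\mathsf{width}(\hist)\le k$.

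Next I would bound the cost. The number of transactions in $\hist_{R|W}^c$ is $2\cdot\mathsf{size}(\hist)$, and the transformation can be carried out in time polynomial in $\mathsf{size}(\hist)$ (linear when measured against the output size). Since $\mathsf{width}(\hist_{R|W}^c)\le k$ is a fixed constant, Corollary~\ref{cor:ser} guarantees that $\mathsf{checkSER}$ runs in time polynomial in $\mathsf{size}(\hist_{R|W}^c)$, namely $O(\mathsf{size}(\hist_{R|W}^c)^{k+3})$ by the bound stated just before Corollary~\ref{cor:ser}. Composing the polynomial-time construction with the polynomial-time serializability check yields an overall polynomial-time procedure, which establishes the corollary.

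The argument carries essentially no mathematical risk, since the substantive content is packaged inside Theorem~\ref{th:si} and Corollary~\ref{cor:ser}. The only point demanding care — and hence the ``main obstacle'' — is the width-preservation claim: one must confirm that the conflict-resolving gadget of $\hist_{R|W}^c$ is realized purely by extra operations inside the existing read and write transactions, and never by extra transactions or extra session-order edges. Because that is exactly how the construction is defined, the width is left untouched, and the bounded-width hypothesis transfers cleanly from $\hist$ to the instance $\hist_{R|W}^c$ that is fed to the serializability checker.
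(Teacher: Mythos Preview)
Your proposal is correct and follows essentially the same route as the paper: the paper derives the corollary directly from Theorem~\ref{th:si} and Corollary~\ref{cor:ser}, after observing (just as you do, via Lemma~\ref{lem:pc_width} and the fact that $\hist_{R|W}^c$ and $\hist_{R|W}$ share the same session order) that $\hist_{R|W}^c$ has the same width as $\hist$ and can be built in polynomial time. Your elaboration of the width-preservation step and the explicit complexity bound matches the paper's reasoning.
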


\section{Communication graphs}\label{sec:communication}


In this section, we present an extension of the polynomial time results for PC, SI, and SER, which allows to handle histories where the sharing of variables between different sessions is \emph{sparse}. For the results in this section, we take the simplifying assumption that the session order is a union of transaction sequences (modulo the fictitious transaction writing the initial values), i.e., each transaction sequence corresponding to the standard notion of \emph{session}~\footnote{The results can be extended to arbitrary session orders by considering maximal transaction sequences in session order instead of sessions.}.
We represent the sharing of variables between different sessions using an undirected graph called a \emph{communication graph}. For instance, the communication graph of the history in Figure~\ref{comm_graph_example:1} is given in Figure~\ref{comm_graph_example:2}. For readability, the edges are marked with the variables accessed by the two sessions.

We show that the problem of checking PC, SI, or SER is polynomial time when the size of every \emph{biconnected} component of the communication graph is bounded by a fixed constant. This is stronger than the results in Section~\ref{sec:bounded_width} because the number of biconnected components can be arbitrarily large which means that the total number of sessions is  unbounded. In general, we prove that the time complexity of these consistency criteria is exponential only in the maximum size of such a biconnected component, and not the whole number of sessions.

An undirected graph is biconnected if it is connected and if any one vertex were to be removed, the graph will remain connected, and a biconnected component of a graph $G$ is a maximal biconnected subgraph of $G$. Figure~\ref{comm_graph_example:2} shows the decomposition in biconnected components of a communication graph. This graph contains 5 sessions while every biconnected component is of size at most 3. Intuitively, any potential cycle in the commit order associated to a history will contain a cycle that passes only through sessions in the same biconnected component. Therefore, checking any of these criteria can be done in isolation for each biconnected component (more precisely, on sub-histories that contain only sessions in the same biconnected component). Actually, this decomposition argument works even for RC, RA, and CC. For instance, in the case of the history in Figure~\ref{comm_graph_example:1}, any consistency criterion can be checked looking in isolation at three sub-histories: a sub-history with $S_1$ and $S_2$, a sub-history with $S_2$, $S_3$, and $S_4$, and a sub-history with $S_4$ and $S_5$.

Formally, a \emph{communication graph} of a history $\hist$ is an undirected graph $\mathsf{Comm}(\hist)=(V,E)$ where the set of vertices $V$ is the set of sessions in $\hist$~\footnote{The transaction writing the initial values is considered as a distinguished session.}, and $(v,v')\in E$ iff the sessions $v$ and $v'$ contain two transactions $\tr_1$ and $\tr_2$, respectively, such that $\tr_1$ and $\tr_2$ read or write a common variable $\xvar$. 

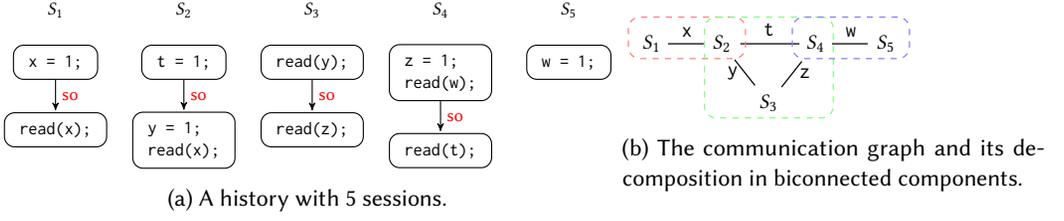
\begin{figure}
  \begin{subfigure}{.59\textwidth}
  \resizebox{\textwidth}{!}{
  \begin{tikzpicture}[->,>=stealth',shorten >=1pt,auto,node distance=3cm,
   semithick, transform shape]
   \node[draw=black!0] (s1) at (0, 0) {$S_1$};
   \node[draw, rounded corners=2mm] (t11) at (0, -1) {\begin{tabular}{l} \texttt{x = 1;} \end{tabular}};
  \node[draw, rounded corners=2mm] (t12) at (0, -2.3) {\begin{tabular}{l} \texttt{read(x);} \end{tabular}};
  \path (t11) edge node {$\so$} (t12);
  
  \node[draw=black!0] (s2) at (2.5, 0) {$S_2$};
  \node[draw, rounded corners=2mm] (t21) at (2.5, -1) {\begin{tabular}{l} \texttt{t = 1;} \end{tabular}};
  \node[draw, rounded corners=2mm] (t22) at (2.5, -2.5) {\begin{tabular}{l} \texttt{y = 1;} \\ \texttt{read(x);} \end{tabular}};
  \path (t21) edge node {$\so$} (t22);
 
 \node[draw=black!0] (s3) at (5, 0) {$S_3$};
 \node[draw, rounded corners=2mm] (t31) at (5, -1) {\begin{tabular}{l} \texttt{read(y);} \end{tabular}};
\node[draw, rounded corners=2mm] (t32) at (5, -2.3) {\begin{tabular}{l} \texttt{read(z);} \end{tabular}};
\path (t31) edge node {$\so$} (t32);

\node[draw=black!0] (s4) at (7.5, 0) {$S_4$};
\node[draw, rounded corners=2mm] (t41) at (7.5, -1.2) {\begin{tabular}{l} \texttt{z = 1;} \\ \texttt{read(w);} \end{tabular}};
\node[draw, rounded corners=2mm] (t42) at (7.5, -2.7) {\begin{tabular}{l} \texttt{read(t);} \end{tabular}};
\path (t41) edge node {$\so$} (t42);

\node[draw=black!0] (s5) at (10, 0) {$S_5$};
\node[draw, rounded corners=2mm] (t51) at (10, -1) {\begin{tabular}{l} \texttt{w = 1;} \end{tabular}};
  \end{tikzpicture}  
  }
   \caption{A history with 5 sessions.}
   \label{comm_graph_example:1}
  \end{subfigure}
    \begin{subfigure}{.4\textwidth}
    \resizebox{.7\textwidth}{!}{
    \begin{tikzpicture}[->,>=stealth',shorten >=1pt,auto,node distance=3cm,
     semithick, transform shape]
     \node[draw=black!0] (s1) at (0, 0) {$S_1$};
  
    \node[draw=black!0] (s2) at (1.2, 0) {$S_2$};

   \node[draw=black!0] (s3) at (2, -1) {$S_3$};
  
  \node[draw=black!0] (s4) at (2.8, 0) {$S_4$};
  
  \node[draw=black!0] (s5) at (4, 0) {$S_5$};
  \path (s1) edge[-] node {\texttt{x}} (s2);
  \path (s2) edge[-, left] node {\texttt{y}} (s3);
  \path (s3) edge[-, right] node {\texttt{z}} (s4);
  \path (s4) edge[-, above] node {\texttt{t}} (s2);
  \path (s4) edge[-] node {\texttt{w}} (s5);

  \node[draw=red!50, dashed, rounded corners=2mm, minimum width=2cm, minimum height=.7cm] () at (.6, .1) {};
  \node[draw=green!50, dashed, rounded corners=2mm, minimum width=2.2cm, minimum height=1.7cm] () at (2, -.4) {};
  \node[draw=blue!50, dashed, rounded corners=2mm, minimum width=2cm, minimum height=.7cm] () at (3.4, .1) {};

  \end{tikzpicture}  
    }
     \caption{The communication graph and its decomposition in biconnected components.}
     \label{comm_graph_example:2}
    \end{subfigure}
    
     \vspace{-3mm}
  \caption{A history and its communication graph.}
  \label{comm_graph_example}
   \vspace{-3mm}
\end{figure}



\begin{lemma}\label{lem:comm_graph}
Let $C_1$,$\ldots$,$C_n$ be the biconnected components of $\mathsf{Comm}(\hist)$ for a history $\hist = \tup{T, \wro, \so}$. Let $P_A$ be a path of the form of type $A$ connecting two transactions of $C_i$~\footnote{That is, transactions that are included in the sessions in $C_i$.} Then, there is a path $P_B$ of the form of type $B$ connecting the same two transactions and $P_B$ never leaves $C_i$.
\end{lemma}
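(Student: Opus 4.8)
The plan is to reason entirely at the level of the \emph{block--cut tree} of $\mathsf{Comm}(\hist)$ and then transport the argument back to transactions. First I would recall the standard structural fact that the biconnected components $C_1,\ldots,C_n$ and the cut (articulation) vertices of $\mathsf{Comm}(\hist)$ form a tree, and that deleting a cut vertex $w$ of $C_i$ disconnects the sessions of $C_i\setminus\{w\}$ from every session lying in a block hanging off $C_i$ through $w$. The bridge between the relational path $P_A$ and this graph is the observation that every edge occurring in an admissible path either stays inside one session (an $\so$ step) or relates two transactions that access a common variable, and therefore projects to an edge of $\mathsf{Comm}(\hist)$: a $\wro$ step is a write--read pair on some $\xvar$, and the $\co$ steps used in the type-$A$/type-$B$ forms are the conflict-induced ones, joining two transactions that both write a common $\xvar$ (as dictated by the shape of the axioms in Figure~\ref{consistency_defs}). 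Hence $P_A$ projects onto a walk in $\mathsf{Comm}(\hist)$ whose two endpoints are sessions of $C_i$.

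Next I would cut this walk into maximal segments that stay inside $C_i$ and maximal \emph{excursions} that leave $C_i$, and show that each excursion departs and re-enters $C_i$ through one and the same cut vertex $w$. Indeed, suppose an excursion left $C_i$ at a session $w_1\in C_i$ and first returned at a session $w_2\in C_i$ with $w_1\neq w_2$; since $C_i$ is connected there is a path inside $C_i$ from $w_2$ back to $w_1$, and concatenating it with the outside portion of the excursion yields a cycle through vertices outside $C_i$. That cycle would force those outside vertices into the same biconnected component as $w_1,w_2$, contradicting the maximality of $C_i$. Thus every excursion is a ``lollipop'' that returns to its entry cut vertex $w$.

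Finally I would delete each excursion and reconnect its two endpoints inside $C_i$. At the transaction level an excursion begins at some transaction $\tr$ of session $w$ and ends at some transaction $\tr'$ of the same session $w$; since the transactions of a single session are totally ordered by $\so$, I would bridge $\tr$ and $\tr'$ using this $\so$-chain together with the reflexive/closure operators permitted by the type-$B$ form (e.g.\ $\co^{*}$, $(\wro\cup\so)^{+}$), so that the replacement never leaves $C_i$. Carrying out this contraction on every excursion produces the desired path $P_B$, entirely contained in $C_i$ and connecting the same two endpoints as $P_A$.

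I expect this last step to be the crux. Two things must be checked simultaneously: that the contracted connection lies in the relation allowed by type $B$ (so that $P_B$ really is a path \emph{of type $B$}, not merely a graph walk inside $C_i$), and that the $\so$ orientation forced between $\tr$ and $\tr'$ is compatible with the direction the path must take, since an excursion could in principle return ``earlier'' in session order than it left. Resolving this requires exploiting the specific closure properties of $\wro$, $\so$, and $\co$ appearing in the definitions of type $A$ and type $B$: concretely, that a segment of the relevant relation which enters and exits the same session can be absorbed into a single intra-session step without breaking the alternation the form prescribes. Establishing this absorption property, by a short case analysis on the edges immediately preceding and following each excursion, is where the real work lies; the graph-theoretic rerouting above is comparatively routine.
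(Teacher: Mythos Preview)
Your plan matches the paper's almost exactly: project to sessions, use the block--cut tree to see that every excursion outside $C_i$ must leave and re-enter through a single cut vertex, and splice in an intra-session $\so$ step in its place. The paper presents this as a minimality argument (assume a minimal path, derive a contradiction if it leaves $C_i$) rather than an explicit rewriting, but the underlying mechanism is identical.

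Two places where your proposal drifts from the paper are worth flagging. First, your reason for why inter-session steps project to edges of $\mathsf{Comm}(\hist)$---that the $\co$ steps in the axioms are ``conflict-induced''---is not correct: the $\co$ appearing in $\axpre$ and $\axser$ is an unconstrained total order, not restricted to pairs writing a common variable. The paper avoids this issue because, in the context of Theorem~\ref{th:comm_graph}, the relevant commit order is $\bigcup_i \co_i$, where each $\co_i$ is by construction confined to the sessions of $C_i$; so every step of the path already lives inside some block, and the block--cut tree argument goes through without needing each individual step to be a communication-graph edge.

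Second, you identify the direction of the bridging $\so$ step as the crux and plan a case analysis on the surrounding edges. The paper's resolution is simpler and you should adopt it: the ``wrong-direction'' case is \emph{impossible}. The excursion itself is a directed path from the departure transaction $\tr$ to the return transaction $\tr'$ in a relation contained in $\co$ (be it $\co$, $\wro\cup\so$, or a closure thereof); if $\tr'$ preceded $\tr$ in $\so$, adding that backward $\so$ edge would close a cycle in $\co$, contradicting acyclicity. Hence every excursion is bridged by a \emph{forward} $\so$ step, and since $\so$ is already a constituent of all the type-$B$ relations, the rewritten path has the required form with no further case analysis needed.
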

 \textsc{Proof.} Type $A$ and $B$ are both of the form $\CO^+$. Consider a minimal path $\pi=\tr_0,\ldots,\tr_n$ in $\bigcup_i \co_i$ between two transactions $\tr_0$ and $\tr_n$ of the same biconnected component $C$ of $\mathsf{Comm}(\hist)$ (i.e., from sessions in $C$). We define a path $\pi_s=v_0,\ldots,v_m$ between sessions, i.e., vertices of $\mathsf{Comm}(\hist)$, which contains an edge $(v_j,v_{j+1})$ iff $\pi$ contains an edge $(\tr_i,\tr_{i+1})$ with $\tr_i$ a transaction of session $v_j$ and $\tr_{i+1}$ a transaction of session $v_{j+1}\neq v_j$. Since any graph decomposes to a forest of biconnected components, this path must necessarily leave and enter some biconnected component $C_1$ to and from the same biconnected component $C_2$, i.e., $\pi_s$ must contain two vertices $v_{j_1}$ and $v_{j_2}$ in $C_1$ such that the successor $v_{j_1+1}$ of $v_{j_1}$ and the predecessor $v_{j_2-1}$ of $v_{j_2}$ are from $C_2$. Let $\tr_1$, $\tr_2$, $\tr_3$, $\tr_4$ be the transactions in the path $\pi$ corresponding to $v_{j_1}$, $v_{j_2}$, $v_{j_1+1}$, and $v_{j_2-1}$, respectively. Now, since any two biconnected components share at most one vertex, it follows that $t_3$ and $t_4$ are from the same session and
  \vspace{-1mm}
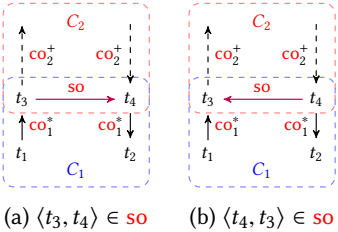
\begin{wrapfigure}{l}{0.4\textwidth} 
  \centering
  \begin{subfigure}{.15\textwidth}
   \resizebox{\textwidth}{!}{
    \begin{tikzpicture}[->,>=stealth',shorten >=1pt,auto,node distance=4cm,
      semithick, transform shape]
     \node[transaction state] at (0,0)       (t_3)           {$\tr_3$};
     \node[transaction state] at (0,1.5)       (t_5)           {};
     \node[transaction state] at (2,0)       (t_4)           {$\tr_4$};
     \node[transaction state] at (2,1.5)       (t_6)           {};
     \node[transaction state] at (0,-1)       (t_1)           {$\tr_1$};
     \node[transaction state] at (2,-1)       (t_2)           {$\tr_2$};
     \node[transaction state,text=red] at (1, 1.4)       ()           {$C_2$};
     \node[transaction state,text=blue] at (1, -1.3)       ()           {$C_1$};
     \node[draw=red!50, dashed, rounded corners=2mm, minimum width=2.7cm, minimum height=2cm] () at (1, .75) {};
     \node[draw=blue!50, dashed, rounded corners=2mm, minimum width=2.7cm, minimum height=2cm] () at (1, -.6) {};
     \path (t_3) edge[dashed,right] node {$\co^+_2$} (t_5); 
     \path (t_1) edge node[right] {$\co^*_1$} (t_3);
     \path (t_4) edge node[left] {$\co^*_1$} (t_2);
     \path (t_6) edge[dashed,left] node {$\co^+_2$} (t_4); 
     \path (t_3) edge[color=purple] node[above] {$\so$} (t_4);
    \end{tikzpicture}  
   }
   \caption{$\tup{\tr_3,\tr_4} \in \so$}
   \label{comm_graph_proof:1a}
  \end{subfigure}\hspace{3mm}
  \begin{subfigure}{.15\textwidth}
   \resizebox{\textwidth}{!}{
    \begin{tikzpicture}[->,>=stealth',shorten >=1pt,auto,node distance=4cm,
      semithick, transform shape]
     \node[transaction state] at (0,0)       (t_3)           {$\tr_3$};
     \node[transaction state] at (0,1.5)       (t_5)           {};
     \node[transaction state] at (2,0)       (t_4)           {$\tr_4$};
     \node[transaction state] at (2,1.5)       (t_6)           {};
     \node[transaction state] at (0,-1)       (t_1)           {$\tr_1$};
     \node[transaction state] at (2,-1)       (t_2)           {$\tr_2$};
     \node[transaction state,text=red] at (1, 1.4)       ()           {$C_2$};
     \node[transaction state,text=blue] at (1, -1.3)       ()           {$C_1$};
     \node[draw=red!50, dashed, rounded corners=2mm, minimum width=2.7cm, minimum height=2cm] () at (1, .75) {};
     \node[draw=blue!50, dashed, rounded corners=2mm, minimum width=2.7cm, minimum height=2cm] () at (1, -.6) {};
     \path (t_3) edge[dashed,right] node {$\co^+_2$} (t_5); 
     \path (t_1) edge node[right] {$\co^*_1$} (t_3);
     \path (t_4) edge node[left] {$\co^*_1$} (t_2);
     \path (t_6) edge[dashed,left] node {$\co^+_2$} (t_4); 
     \path (t_4) edge[color=purple] node[above] {$\so$} (t_3);
    \end{tikzpicture}  
   }
   \caption{$\tup{\tr_4,\tr_3} \in \so$}
   \label{comm_graph_proof:1b}
  \end{subfigure}
   \vspace{-2mm}
  \caption{Minimal paths between transactions in the same biconnected component.}
  \label{comm_graph_proof:1}
   \vspace{-2mm}
 \end{wrapfigure}
 \begin{itemize}
  \item if $\tup{\tr_3, \tr_4} \in \so$, then there exists a smaller path between $\tr_0$ and $\tr_1$ that uses the $\so$ relation between $\tup{\tr_3, \tr_4}$ (we recall that $\so\subseteq \bigcup_i \co_i$) instead of the transactions in $C_2$, pictured in Figure~\ref{comm_graph_proof:1a}, which is a contradiction to the minimality of $\pi$,
  \item if $\tup{\tr_4, \tr_3} \in \so$, then, we have a cycle in $\bigcup_i \co_i\cup\so$, pictured in Figure~\ref{comm_graph_proof:1b}, which is also a contradiction.
 \end{itemize} 
 
  Type $A$ and $B$ are of the form $(\wro \cup \so)^+$. ``shortening'' a bigger $(\wro \cup \so)^+$ path (they are also a path in $\CO^+$ since $(\wro \cup \so)^+ \subseteq \CO$) will introduce only $\so$ dependencies. So a minimal $(\wro \cup \so)^+$ never leaves a bicomponent.
  
  Type $A$ and $B$ are of the form $\CO^* \circ (\wro \cup \so)$. Similar to last the case, ``shortening'' a bigger path will introduce only $\so$ dependencies. If the new $\so$ is at the end of the path then the new path is still of the form $\CO^* \circ (\wro \cup \so)$. Else, we can replace $\so$ with $\CO$ to make a path of the form $\CO^* \circ (\wro \cup \so)$. So a minimal $\CO^* \circ (\wro \cup \so)$ never leaves a bicomponent.
  
  Type $A$ is of the form $\CO^* \circ \CO$ where the last $\CO$ dependency is between two transactions writing on same variable and type $B$ is of the same form of type $A$ or $\CO^* \circ \so$. Similar to the previous cases, ``shortening'' a bigger path will introduce only $\so$ dependencies. If the new $\so$ is at the end, then it becomes of the form of $\CO^* \circ \so$. Else, we can replace $\so$ with $\CO$ to make a path of the form $\CO^* \circ \CO$ where the last $\CO$ dependency is the one from the original path. So a minimal path of type $A$ never leaves a bicomponent or it can be ``shortened'' to a minimal path of the form $\CO^* \circ \so$ which never leaves a bicomponent. \hfill $\Box$

For a history $\hist = (T, \so, \wro)$ and biconnected component $C$ of $\mathsf{Comm}(\hist)$, the projection of $\hist$ over transactions in sessions of $C$ is denoted by $h\downarrow C$, i.e., $h\downarrow C=(T', \so', \wro')$ where $T'$ is the set of transactions in sessions of $C$, $\so'$ and $\wro'$ are the projections of $\so$ and $\wro$, respectively, on $T'$.

\begin{theorem}\label{th:comm_graph}
For any criterion $X\in\{\text{RA},\text{RC},\text{CC},\text{PC},\text{SI},\text{SER}\}$, a history $h$ satisfies $X$ iff for every biconnected component $C$ of $\mathsf{Comm}(\hist)$, $h\downarrow C$ satisfies $X$.
\end{theorem}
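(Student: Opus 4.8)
The plan is to prove the two directions of the equivalence separately, with essentially all the work in the ``if'' direction. Write $T_C$ for the set of transactions occurring in sessions of a biconnected component $C$, so that $\hist\downarrow C$ has transaction set $T_C$ together with the restrictions of $\wro$ and $\so$. Two elementary facts about $\mathsf{Comm}(\hist)$ will drive everything: (i) any two distinct biconnected components share at most one vertex (a cut session); and (ii) for each variable $\xvar$, all sessions accessing $\xvar$ form a clique in $\mathsf{Comm}(\hist)$ by the definition of the edge relation, and a clique is biconnected, hence these sessions all lie inside a single component, which I would call $C_\xvar$. Fact (ii) is the crucial observation: in every axiom the transactions $\tr_1,\tr_2,\tr_3$ all access $\xvar$ (indeed $\tr_1,\tr_2$ write $\xvar$ and $\tr_3$ reads it), so they all live in the one component $C_\xvar$.

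For the forward direction I would take a witness $\tup{\hist,\co}$ for $X$ and set $\co_C=\co|_{T_C}$, a total order on $T_C$ extending the projected $\wro$ and $\so$. Every premise relation appearing in the axioms is monotone in $\wro$, $\so$, $\co$ (e.g. $(\wro\cup\so)^+$ restricted to $T_C$ is contained in $(\wro\cup\so)^+$, and likewise $\co_C^*\circ(\wro\cup\so)\subseteq\co^*\circ(\wro\cup\so)$), so any instantiation of a premise over $\hist\downarrow C$ is also one over $\hist$; since for $\tr_1,\tr_2\in T_C$ the consequent $\tup{\tr_2,\tr_1}\in\co_C$ is equivalent to $\tup{\tr_2,\tr_1}\in\co$, a violation over $\hist\downarrow C$ would transport to a violation over $\hist$. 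Hence each $\hist\downarrow C$ satisfies $X$.

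For the backward direction I would fix, for each component $C$, a witness $\co_C$ for $\hist\downarrow C$, and assemble the global order from $\bigcup_C\co_C$. First I would note $\wro\cup\so\subseteq\bigcup_C\co_C$: a $\wro[\xvar]$ edge lies in $\co_{C_\xvar}$ by fact (ii), and every $\so$ edge is internal to a single session and therefore lies in $\co_C$ for a component $C$ containing that session. Next I would show $\bigcup_C\co_C$ is acyclic: a minimal cycle has its endpoints in a common component, so by Lemma~\ref{lem:comm_graph} (applied to $\bigcup_i\co_i$ exactly as in its proof) it can be taken to stay inside one component $C$; each of its edges $\tup{\tr,\tr'}$ comes from some $\co_{C'}$ with $\tr,\tr'\in T_C$, and by fact (i) either $C'=C$, so the edge is in $\co_C$, or $\tr,\tr'$ share the single common session, so the edge agrees with $\so$ and is again in $\co_C$. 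This yields a cycle inside the strict total order $\co_C$, a contradiction. I would then let $\co$ be any linearization of $\bigcup_C\co_C$; since $\co_C$ is already total on $T_C$, necessarily $\co|_{T_C}=\co_C$ for every $C$.

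Finally I would verify $\tup{\hist,\co}\models X$ by contradiction. Given a violation with witnesses $\tr_1,\tr_2,\tr_3$ (plus $\tr_4$ for $\axpre$ and $\axconf$) over some $\xvar$, fact (ii) places $\tr_1,\tr_2,\tr_3$ in $T_{C_\xvar}$, while the premise supplies a path of the appropriate ``type $A$'' from $\tr_2$ to $\tr_3$; applying Lemma~\ref{lem:comm_graph} with $C_i=C_\xvar$ replaces it by a ``type $B$'' path never leaving $C_\xvar$, all of whose relations then live in $T_{C_\xvar}$, where $\co$ coincides with $\co_{C_\xvar}$, producing a violation of $X$ for $\tup{\hist\downarrow C_\xvar,\co_{C_\xvar}}$ and contradicting the choice of witness. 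I expect the \emph{main obstacle} to be precisely this last localization step for the $\axconf$ axiom of SI: one must match each shortened ``type $B$'' path back to a genuine premise of the axiom being checked, and the lemma's case analysis shows that shortening a conflicting $\co^*\circ\co$ path can turn its final conflict edge into a session-order edge, yielding a $\co^*\circ\so$ shape. The proof must then read off the localized violation from the $\axpre$ conjunct that SI also demands rather than from $\axconf$ itself, and I would argue that this is exactly what Lemma~\ref{lem:comm_graph} guarantees, with facts (i), (ii), and the identity $\co|_{T_C}=\co_C$ ensuring soundness of the per-component reasoning.
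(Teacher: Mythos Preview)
Your proposal is correct and follows essentially the same approach as the paper's proof: assemble the per-component witnesses $\co_C$ into a global order via Lemma~\ref{lem:comm_graph} for acyclicity, then localize any global axiom violation back into a single component using the same lemma, with the $\axconf$-to-$\axpre$ fallback for SI handled exactly as the paper does. Your version is in fact more explicit than the paper's in several places, notably the clique argument placing all $\xvar$-accessing sessions in one component and the observation that $\co|_{T_C}=\co_C$.
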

\begin{proof}
The ``only-if'' direction is obvious. For the ``if'' direction, let $C_1$,$\ldots$,$C_n$ be the biconnected components of $\mathsf{Comm}(\hist)$. Also, let $\co_i$ be the commit order that witnesses that $h\downarrow C_i$ satisfies $X$, for each $1\leq i\leq n$. The union $\bigcup_i\co_i$ is acyclic since otherwise, any minimal cycle would be a minimal path between transactions of the same biconnected component $C_j$, and, by Lemma~\ref{lem:comm_graph}, it will include only transactions of $C_j$ which is a contradiction to $\co_j$ being a total order. We show that any linearization $\co$ of $\bigcup_i\co_i$ along with $h$ satisfies the axioms of $X$, for every consistency model $X$. 

The axioms defining RA, RC, CC, PC, and SER involve transactions that write or read a common variable, which implies that they belong to the same biconnected component. For CC, resp., PC, SER using the result from Lemma~\ref{lem:comm_graph}, a minimal $(\wro \cup \so)^+$, resp., $\CO^* \circ (\wro \cup \so)^+$, $\co$ path from $\tr_2$ to $\tr_3$ in Figure~\ref{cc_def}, resp., Figure~\ref{pre_def}, Figure~\ref{ser_def} will include transactions in the same biconnected component as $\tr_2$ and $\tr_3$, since ``shortening'' a bigger path will introduce only $\so$ dependencies. Therefore, they must be satisfied by $\co$. 


For SI, it's only necessary to discuss the axiom $\mathsf{Conflict}$ (the satisfaction of $\mathsf{Prefix}$ is proved as for PC). Following Figure~\ref{confl_def} and Lemma~\ref{lem:comm_graph}, any $\CO^* \circ \CO$ path connecting $\tr_2$, and $\tr_3$ where the last $\CO$ dependency is between two transactions writing on same variables, has either a minimal path of the same form that never leaves the same bicomponent as $\tr_2, \tr_3$, or a minimal path of the form $\CO^* \circ \so$. Since the bicomponent satisfies SI or particularly $\mathsf{Prefix}$ and $\mathsf{Conflict}$, for all possible $\tr_1$, $\tr_1, \tr_2, \tr_3$, it must satisfy $\mathsf{Conflict}$.
Note that $\tr_3$ cannot be the first transaction in its session because a path from $\tr_2$ to $\tr_3$ passing through $\tr_4$ (which belongs to a different biconnected component) will necessarily have to pass twice through $\tr_3$ which would imply that $\co$ is cyclic. Thus, $\mathsf{Conflict}$ must be satisfied by $\co$.
\end{proof}

Since the decomposition of a graph into biconnected components can be done in linear time, Theorem~\ref{th:comm_graph} implies that any of the criteria \text{PC}, \text{SI}, or \text{SER} can be checked in time $O(\mathsf{size}(h)^{\mathsf{bi\text{-}size}(h)}\cdot \mathsf{size}(h)^3\cdot \mathsf{bi\text{-}nb}(h))$ where $\mathsf{bi\text{-}size}(h)$ and $\mathsf{bi\text{-}nb}(h)$ are the maximum size of a biconnected component in $\mathsf{Comm}(\hist)$ and the number of biconnected components of $\mathsf{Comm}(\hist)$, respectively. The following corollary is a direct consequence of this observation.

\begin{corollary}
For an arbitrary but fixed constant $k\in\mathbb{N}$ and any criterion $X\in\{\text{PC},\text{SI},\text{SER}\}$, the problem of checking if a history $\hist$ satisfies $X$ is polynomial time, provided that the size of every biconnected component of $\mathsf{Comm}(\hist)$ is bounded by $k$.
\end{corollary}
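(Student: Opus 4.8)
The plan is to derive this corollary directly from Theorem~\ref{th:comm_graph} together with the bounded-width results of Section~\ref{sec:bounded_width}. First I would compute the decomposition of $\mathsf{Comm}(\hist)$ into biconnected components $C_1,\ldots,C_n$; this is a standard depth-first-search computation that runs in linear time. By Theorem~\ref{th:comm_graph}, the history $\hist$ satisfies $X$ if and only if $\hist\downarrow C_i$ satisfies $X$ for every $i$, so the whole check reduces to $n$ independent checks on the projected sub-histories.

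The key step is to bound the width of each $\hist\downarrow C_i$. Since the vertices of $\mathsf{Comm}(\hist)$ are exactly the sessions of $\hist$, a component $C_i$ with at most $k$ vertices contains at most $k$ sessions. Under the standing assumption of this section that the session order is a union of per-session transaction sequences, the width of a history equals its number of sessions; hence each $\hist\downarrow C_i$ has width at most $k$.

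It then suffices to invoke the appropriate bounded-width corollary on each sub-history: Corollary~\ref{cor:ser} for $X=\text{SER}$, Corollary~\ref{cor:pc} for $X=\text{PC}$, and Corollary~\ref{cor:si} for $X=\text{SI}$. Each of these checks runs in time $O(\mathsf{size}(\hist\downarrow C_i)^{k}\cdot \mathsf{size}(\hist\downarrow C_i)^3)$, and summing over the at most $\mathsf{bi\text{-}nb}(\hist)\le \mathsf{size}(\hist)$ components yields the overall bound $O(\mathsf{size}(\hist)^{k}\cdot \mathsf{size}(\hist)^3\cdot \mathsf{bi\text{-}nb}(\hist))$ stated just before the corollary, which is polynomial for fixed $k$.

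Conceptually there is no real obstacle: the genuine content lies in Theorem~\ref{th:comm_graph} and its supporting Lemma~\ref{lem:comm_graph}, which guarantee that any potential commit-order cycle stays within a single biconnected component and hence that per-component consistency entails global consistency. The only point that needs care is the exponent: the running time is exponential only in $\mathsf{bi\text{-}size}(\hist)\le k$, the maximum component size, and not in the total number of sessions $n$, which may be unbounded. This is precisely what the biconnected decomposition buys---each component contributes an independent polynomial factor $\mathsf{size}(\hist)^{k+3}$, and there are only linearly many of them.
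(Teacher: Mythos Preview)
Your proposal is correct and follows essentially the same route as the paper: the paper derives the corollary as a direct consequence of Theorem~\ref{th:comm_graph} together with the time bound $O(\mathsf{size}(h)^{\mathsf{bi\text{-}size}(h)}\cdot \mathsf{size}(h)^3\cdot \mathsf{bi\text{-}nb}(h))$ obtained by applying the bounded-width algorithms of Section~\ref{sec:bounded_width} to each projected sub-history after a linear-time biconnected decomposition. Your write-up simply spells out these steps in more detail.
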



\section{Experimental Evaluation}\label{sec:exp}

\begin{figure}
\centering
 \begin{subfigure}{.33\textwidth}
  \resizebox{\textwidth}{!}{
   \includegraphics{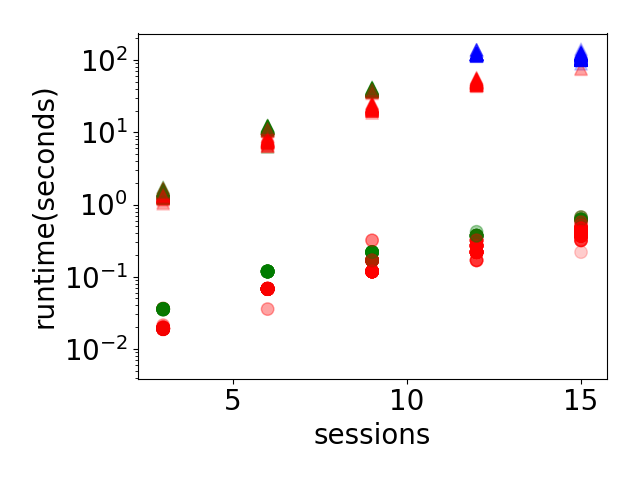}
  }
  \caption{Sessions.}
  \label{ser_node_scale}
 \end{subfigure}
 \begin{subfigure}{.33\textwidth}
  \resizebox{\textwidth}{!}{
   \includegraphics{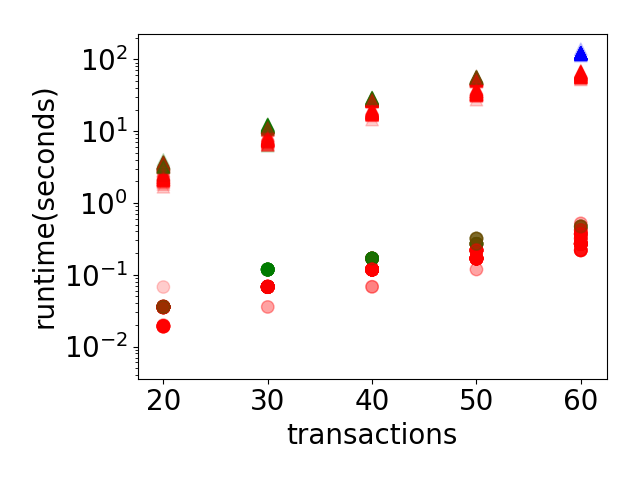}
  }
  \caption{Transactions per session.}
  \label{ser_transaction_scale}
 \end{subfigure}
 \hspace{-2mm}
 \begin{subfigure}{.33\textwidth}
  \resizebox{\textwidth}{!}{
   \includegraphics{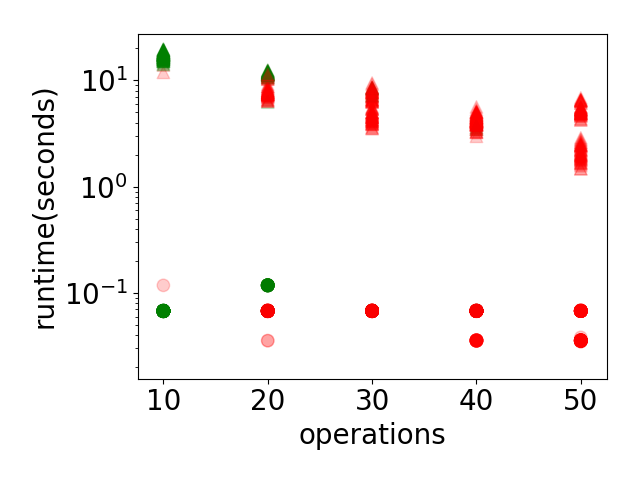}
  }
  \caption{Operations per transaction.}
  \label{ser_operation_scale}
 \end{subfigure}
 \hspace{-3mm}
 \begin{subfigure}{.33\textwidth}
  \resizebox{\textwidth}{!}{
   \includegraphics{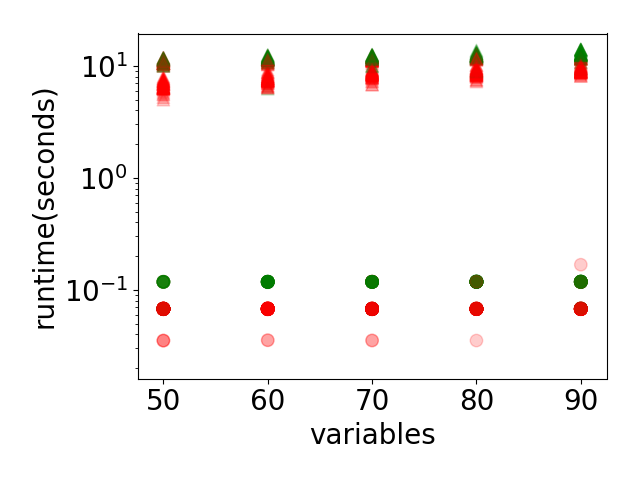}
  }
  \caption{Variables.}
  \label{ser_variable_scale}
 \end{subfigure}
 \vspace{-3mm}
 \caption{Scalability of our Serializability checking algorithm in Section~\ref{ssec:ser_checking} and a comparison to a SAT encoding. The circular, resp., triangle, dots represent wall clock times of our algorithm, resp., the SAT encoding. The x-axis represents the varying parameter while the y-axis represents the wall clock time in logarithmic scale. The red, green, and blue dots represent invalid, valid and resource exhausted instances, respectively.}
 \label{ser_performace_scale}
 \vspace{-3mm}
\end{figure}

To demonstrate the practical value of the theory developed in the previous sections, we argue that our algorithms:
\begin{itemize}
 \item are efficient and scalable, and they outperform a SAT encoding of the axioms in Section~\ref{sec:def},
 \item enable an effective testing framework allowing to expose consistency violations in production databases.
\end{itemize}

The SAT encoding we considered is based on the axioms we presented on this paper. We represent each binary relation with a propositional variable and encode the axioms using clauses to create a 3-SAT formula, which is passed to a SAT solver.

For each ordered pair of transactions $\tr_1$, $\tr_2$ we add two propositional variables representing $\tup{\tr_1,\tr_2} \in (\wro \cup \so)^+$ and $\tup{\tr_1,\tr_2} \in \CO$, respectively. Then we generate clauses corresponding to:
\begin{itemize}
\item Singleton clauses defining the relation $\wro \cup \so$ (extracted from the input history),
\item $\tup{\tr_1, \tr_2} \in \wro \cup \so$ implies $\tup{\tr_1, \tr_2} \in \CO$.
\item $\CO$ being a total order.
\item The axioms corresponding to the considered consistency model.
\end{itemize}

This is an optimization that does not encode $\wro$ and $\so$ separately, which is sound because of the shape of our axioms (and because these relations are fixed apriori).

As for the implementation of our algorithms, we used standard programming optimizations, \eg efficient data structures, to avoid unnecessary runtime and memory usage. Few examples include:
\begin{itemize}
\item using efficient hashsets for searching in a set,
\item grouping transactions which access the same variable (because our algorithms usually iterate over transactions accessing the same variable).
\item reducing PC\/SI to SER on-the-fly while traversing the history.
\end{itemize}

We focus on three of the criteria introduced in Section~\ref{sec:def}: \emph{serializability} which is NP-complete in general and polynomial time when the number of sessions is considered to be a constant, \emph{snapshot isolation} which can be reduced in linear time on-the-fly to serializability, and \emph{causal consistency} which is polynomial-time in general~\footnote{Our implementation is publicly available. URL omitted to maintain anonymity.}. As benchmark, we consider histories extracted from three distributed databases: CockroachDB~\cite{cockroach}, Galera~\cite{galera}, and AntidoteDB~\cite{antidote}. 
%
Following the approach in Jepsen~\cite{jepsen},  histories are generated with random clients. For the experiments described hereafter, the randomization process is parametrized by: (1) the number of sessions ({\bf \#sess}), (2) the number of transactions per session ({\bf \#trs}), (3) the number of operations per transaction ({\bf \#ops}), and (4) an upper bound on the number of used variables ({\bf \#vars})~\footnote{We ensure that every value is written at most once.}. For any valuation of these parameters, half of the histories generated with CockroachDB and Galera are restricted such that the sets of variables written by any two sessions are disjoint (the sets of read variables are not constrained). This restriction is used to increase the frequency of valid histories. 


In a first experiment, we investigated the efficiency of our serializability checking algorithm (Section \ref{ssec:ser_checking}) and we compared its performance with a direct SAT encoding of the serializability definition in Section~\ref{sec:def} (we used MiniSAT~\cite{DBLP:conf/sat/EenS03} to solve the SAT queries). We used histories extracted from CockroachDB which claims to implement serializability, acknowledging however the possibility of anomalies~\cite{cockroach-claim}. The sessions of a history are uniformly distributed among 3 nodes of a single cluster. To evaluate scalability, we fix a reference set of parameter values: {\bf \#sess}=6, {\bf \#trs}=30, {\bf \#ops}=20, and {\bf \#vars} = 60 $\times$ {\bf \#sess}, and vary only one parameter at a time. For instance, the number of sessions varies from 3 to 15 in increments of 3. 
 We consider 100 histories for each combination of parameter values. The experimental data is reported in Figure~\ref{ser_performace_scale}. Our algorithm scales well even when increasing the number of sessions, which is not guaranteed by its worst-case complexity (in general, this is exponential in the number of sessions). Also, our algorithm is at least two orders of magnitude more efficient than the SAT encoding. We have fixed a 10 minutes timeout, a limit of 10GB of memory, and a limit of 10GB on the files containing the formulas to be passed to the SAT solver. The blue dots represent resource exhausted instances. The SAT encoding reaches the file limit for 148 out of 200 histories with at least 12 sessions (Figure~\ref{ser_node_scale}) and for 50 out of 100 histories with 60 transactions per session (Figure~\ref{ser_transaction_scale}), the other parameters being fixed as explained above. 

We have found a large number of violations, whose frequency increases with the number of sessions, transactions per session, or operations per transaction, and decreases when allowing more variables. This is expected since increasing any of the former parameters increases the chance of interference between different transactions while increasing the latter has the opposite effect. The second and third column of Table~\ref{violation_stat} give a more precise account of the kind of violations we found by identifying for each criterion X, the number of histories which violate X but no other criterion weaker than X, e.g., there is only one violation to SI which satisfies PC.

\begin{figure}
 \begin{subfigure}{.33\textwidth}
  \resizebox{\textwidth}{!}{
  \includegraphics{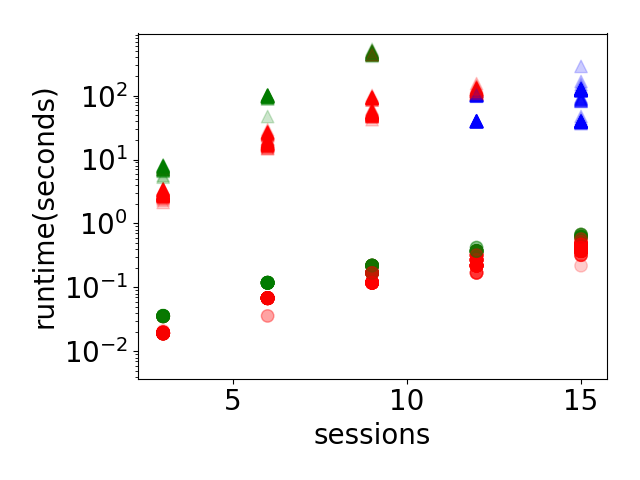}
  }
  \caption{CockroachDB (SI)}
  \label{roach_si_node_scale}
 \end{subfigure}
 \hspace{-2mm}
 \begin{subfigure}{.33\textwidth}
  \resizebox{\textwidth}{!}{
   \includegraphics{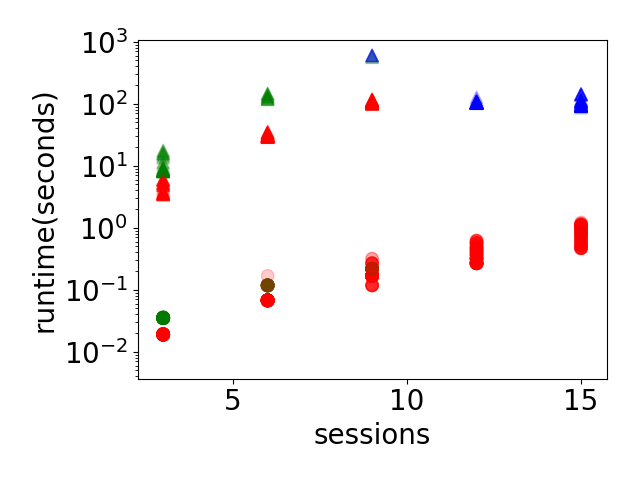}
  }
  \caption{Galera (SI)}
  \label{galera_si_node_scale}
 \end{subfigure}
  \hspace{-2mm}
 \begin{subfigure}{.33\textwidth}
  \resizebox{\textwidth}{!}{
   \includegraphics{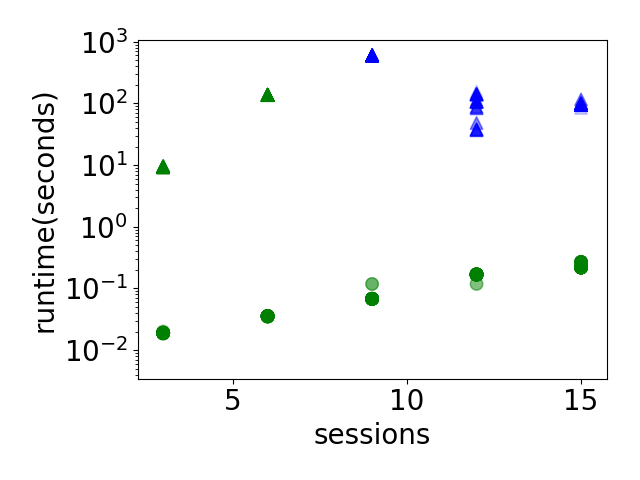}
  }
  \caption{AntidoteDB (CC)}
  \label{cc_session_scale}
 \end{subfigure}
 \vspace{-2mm}
 \caption{Scalability of SI and CC checking, and a comparison to a SAT encoding.}
 \label{si_cc_performace_scale}
 \vspace{-3mm}
\end{figure}

The second experiment measures the scalability of the SI checking algorithm obtained by applying the reduction to SER described in Section~\ref{ssec:si} followed by the SER checking algorithm in Section~\ref{ssec:ser_checking}, and its performance compared to a SAT encoding of SI. We focus on its behavior when increasing the number of sessions (varying the other parameters leads to similar results). As benchmark, we used the same CockroachDB histories as in Figure~\ref{ser_node_scale} and a number of histories extracted from Galera\footnote{In order to increase the frequency of valid histories, all sessions are executed on a single node.} whose documentation contains contradicting claims about whether it implements snapshot isolation~\cite{galera-claim,galera-notclaim}. We use 100 histories per combination of parameter values as in the previous experiment. The results are reported in Figure~\ref{roach_si_node_scale} and Figure~\ref{galera_si_node_scale}. We observe the same behavior as in the case of SER. In particular, the SAT encoding reaches the file limit for 150 out of 200 histories with at least 12 sessions in the case of the CockroachDB histories, and for 162 out of 300 histories with at least 9 sessions in the case of the Galera histories. The last two columns in Table~\ref{violation_stat} classify the set of violations depending on the weakest criterion that they violate.

We also evaluated the performance of the CC checking algorithm in Section~\ref{sec:general} when increasing the number of sessions, on histories extracted from AntidoteDB, which claims to implement causal consistency~\cite{antidote-claim}. The results are reported in Figure~\ref{cc_session_scale}. In this case, the SAT encoding reaches the file limit for 150 out of 300 histories with at least 9 sessions. All the histories considered in this experiment are valid. However, when experimenting with other parameter values, we have found several violations. The smallest parameter values for which we found violations were 3 sessions, 14 transactions per session, 14 operations per transaction, and 5 variables. The violations we found are also violations of Read Atomic. For instance, one of the violations contains two transactions $\tr_1$ and $\tr_2$, each of them writing to two variables $x_1$ and $x_2$, and another transaction $\tr_3$ which reads $x_1$ from $\tr_1$ and $x_2$ from $\tr_2$ ($\tr_1$ and $\tr_2$ are from different sessions while $\tr_3$ is an $\so$ successor of $\tr_1$ in the same session). These violations are novel and they were confirmed by the developers of AntidoteDB.

The refinement of the algorithms above based on communication graphs, described in Section~\ref{sec:communication}, did not have a significant impact on their performance. The histories we generated contained few biconnected components (many histories contained just a single biconnected component) which we believe is due to our proof of concept deployment of these databases on a single machine that did not allow to experiment with very large number of sessions and variables. 


\begin{table}
\small{
 \begin{tabular}{|l|c|c|c|c|c|}
  \hline
  & \multicolumn{2}{c|}{Serializability checking} & \multicolumn{2}{c|}{Snapshot Isolation checking} \\
  \hline
  Weakest                  & CockroachDB     & CockroachDB     & Galera      & Galera                \\
  criterion violated         & (disjoint writes) & (no constraints) & (disjoint writes) & (no constraints)  \\
  \hline
  Read Committed     &             &             & 19          & 50              \\
  Read Atomic        & 180         & 547         & 91          & 139                \\
  Causal Consistency            & 339         & 382         & 88          & 43                  \\
  Prefix Consistency            & 2           & 7           &             &                    \\
  Snapshot Isolation &             & 1           &             & 1                   \\
  Serializability      & 25          &             &            &                     \\
  \hline
  \hline
  Total number of violations           & 546/1000        & 937/1000        & 198/250         & 233/250              \\
  \hline
 \end{tabular}
 }
 \vspace{1mm}
 \caption{Violation statistics. The ``disjoint writes'' columns refer to histories where the set of variables written by any two sessions are disjoint.}
 \label{violation_stat}
\end{table}

\section{Related Work}

\citet{DBLP:conf/concur/Cerone0G15} give the first formalization of the criteria we consider in this paper, using the specification methodology of \citet{DBLP:conf/popl/BurckhardtGYZ14}. This formalization uses two auxiliary relations, a \emph{visibility} relation which represents the fact that a transaction ``observes'' the effects of another transaction and a \emph{commit order}, also called arbitration order, like in our case. Executions are abstracted using a notion of history that includes only a session order and the adherence to some consistency criterion is defined as the existence of a visibility relation and a commit order satisfying certain axioms. Motivated by practical goals, our histories include a write-read relation, which enables more uniform and in our opinion, more intuitive, axioms to characterize consistency criteria. Moreover, \citet{DBLP:conf/concur/Cerone0G15} do not investigate algorithmic issues as in our paper.

\citet{DBLP:journals/jacm/Papadimitriou79b} showed that checking serializability of an execution is NP-complete. Moreover, it identifies a stronger criterion called \emph{conflict serializability} which is polynomial time checkable. Conflict serializability assumes that histories are given as sequences of operations and requires that the commit order be consistent with a \emph{conflict-order} between transactions defined based on this sequence (roughly, a transaction $\tr_1$ is before a transaction $\tr_2$ in the conflict order if it accesses some variable $\xvar$ before $\tr_2$ does). This result is not applicable to distributed databases where deriving such a sequence between operations submitted to different nodes in a network is impossible.

\citet{DBLP:conf/popl/BouajjaniEGH17} showed that checking several variations of causal consistency on executions of a \emph{non-transactional} distributed database is polynomial time (they also assume that every value is written at most once). 
Assuming singleton transactions, our notion of CC corresponds to the causal convergence criterion in~\citet{DBLP:conf/popl/BouajjaniEGH17}. Therefore, our result concerning CC can be seen as an extension of this result concerning causal convergence to transactions.

There are some works that investigated the problem of checking consistency criteria like sequential consistency and linearizability in the case of shared-memory systems. \citet{DBLP:journals/siamcomp/GibbonsK97} showed that checking linearizability of the single-value register type is NP-complete in general, but polynomial time for executions where every value is written at most once. Using a reduction from serializabilty, they showed that checking sequential consistency is NP-complete even when every value is written at most once. \citet{DBLP:journals/pacmpl/EmmiE18} extended the result concerning linearizability to a series of abstract data types called collections, that includes stacks, queues, key-value maps, etc.  sequential consistency is g serializability 

The notion of \emph{communication graph} is inspired by the work of \citet{DBLP:journals/pacmpl/ChalupaCPSV18} which investigates partial-order reduction (POR) techniques for multi-threaded programs. In general, the goal of partial-order reduction~\cite{DBLP:conf/popl/FlanaganG05} is to avoid exploring executions which are equivalent w.r.t. some suitable notion of equivalence, e.g., Mazurkiewicz trace equivalence~\cite{DBLP:conf/ac/Mazurkiewicz86}. They use the acyclicity of communication graphs to define a class of programs for which their POR technique is optimal. The algorithmic issues they explore are different than ours and they don't investigate biconnected components of this graph as in our results.

\section{Conclusions}

Our results provide an effective means of checking the correctness of transactional databases with respect to a wide range of consistency criteria, in an efficient way. We devise a new specification framework for these criteria, which besides enabling efficient verification algorithms, provide a novel understanding of the differences between them in terms of set of transactions that \emph{must} be committed before a transaction which is read during the execution. These algorithms are shown to be scalable and orders of magnitude more efficient than standard SAT encodings of these criteria (as defined in our framework). While the algorithms are quite simple to understand and implement, the proof of their correctness is non-trivial and benefits heavily from the new specification framework. One important venue for future work is identifying root causes for a given violation. The fact that we are able to deal with a wide range of criteria is already helpful in identifying the weakest criterion that is violated in a given execution.  Then, in the case of RC, RA, and CC, where inconsistencies correspond to cycles in the commit order, the root cause could be attributed to a minimal cycle in this relation. We did this in our communication with the Antidote developers to simplify the violation we found which contained 42 transactions. In the case of PC, SI, and SER, it could be possible to implement a search procedure similar to CDCL in SAT solvers, in order to compute the root-cause as a SAT solver would compute an unsatisfiability core. 




\clearpage
\bibliography{dblp,misc}


\begin{thebibliography}{25}


\ifx \showCODEN    \undefined \def \showCODEN     #1{\unskip}     \fi
\ifx \showDOI      \undefined \def \showDOI       #1{#1}\fi
\ifx \showISBNx    \undefined \def \showISBNx     #1{\unskip}     \fi
\ifx \showISBNxiii \undefined \def \showISBNxiii  #1{\unskip}     \fi
\ifx \showISSN     \undefined \def \showISSN      #1{\unskip}     \fi
\ifx \showLCCN     \undefined \def \showLCCN      #1{\unskip}     \fi
\ifx \shownote     \undefined \def \shownote      #1{#1}          \fi
\ifx \showarticletitle \undefined \def \showarticletitle #1{#1}   \fi
\ifx \showURL      \undefined \def \showURL       {\relax}        \fi
\providecommand\bibfield[2]{#2}
\providecommand\bibinfo[2]{#2}
\providecommand\natexlab[1]{#1}
\providecommand\showeprint[2][]{arXiv:#2}

\bibitem[\protect\citeauthoryear{??}{jep}{[n. d.]a}]%
        {jepsen}
 \bibinfo{year}{[n. d.]}\natexlab{a}.
\newblock
\newblock
\urldef\tempurl%
\url{http://jepsen.io}
\showURL{%
\tempurl}
\newblock
\shownote{Retrieved March 28th, 2019.}


\bibitem[\protect\citeauthoryear{??}{jep}{[n. d.]b}]%
        {jepsen-galera}
 \bibinfo{year}{[n. d.]}\natexlab{b}.
\newblock
\newblock
\urldef\tempurl%
\url{https://github.com/jepsen-io/jepsen/blob/master/galera/src/jepsen/galera/dirty_reads.clj}
\showURL{%
\tempurl}
\newblock
\shownote{Retrieved March 28th, 2019.}


\bibitem[\protect\citeauthoryear{??}{coc}{[n. d.]a}]%
        {cockroach}
 \bibinfo{year}{[n. d.]}\natexlab{a}.
\newblock
\newblock
\urldef\tempurl%
\url{https://github.com/cockroachdb/cockroach}
\showURL{%
\tempurl}
\newblock
\shownote{Retrieved March 28th, 2019.}


\bibitem[\protect\citeauthoryear{??}{coc}{[n. d.]b}]%
        {cockroach-claim}
 \bibinfo{year}{[n. d.]}\natexlab{b}.
\newblock
\newblock
\urldef\tempurl%
\url{https://www.cockroachlabs.com/docs/v2.1/transactions.html#isolation-levels}
\showURL{%
\tempurl}
\newblock
\shownote{Retrieved March 28th, 2019.}


\bibitem[\protect\citeauthoryear{??}{gal}{[n. d.]a}]%
        {galera}
 \bibinfo{year}{[n. d.]}\natexlab{a}.
\newblock
\newblock
\urldef\tempurl%
\url{http://galeracluster.com}
\showURL{%
\tempurl}
\newblock
\shownote{Retrieved March 28th, 2019.}


\bibitem[\protect\citeauthoryear{??}{gal}{[n. d.]b}]%
        {galera-claim}
 \bibinfo{year}{[n. d.]}\natexlab{b}.
\newblock
\newblock
\urldef\tempurl%
\url{http://galeracluster.com/documentation-webpages/faq.html}
\showURL{%
\tempurl}
\newblock
\shownote{Retrieved March 28th, 2019.}


\bibitem[\protect\citeauthoryear{??}{gal}{[n. d.]c}]%
        {galera-notclaim}
 \bibinfo{year}{[n. d.]}\natexlab{c}.
\newblock
\newblock
\urldef\tempurl%
\url{http://galeracluster.com/documentation-webpages/isolationlevels.html#intra-node-vs-inter-node-isolation-in-galera-cluster}
\showURL{%
\tempurl}
\newblock
\shownote{Retrieved March 28th, 2019.}


\bibitem[\protect\citeauthoryear{??}{ant}{[n. d.]a}]%
        {antidote}
 \bibinfo{year}{[n. d.]}\natexlab{a}.
\newblock
\newblock
\urldef\tempurl%
\url{https://www.antidotedb.eu}
\showURL{%
\tempurl}
\newblock
\shownote{Retrieved March 28th, 2019.}


\bibitem[\protect\citeauthoryear{??}{ant}{[n. d.]b}]%
        {antidote-claim}
 \bibinfo{year}{[n. d.]}\natexlab{b}.
\newblock
\newblock
\urldef\tempurl%
\url{https://antidotedb.gitbook.io/documentation/overview/configuration}
\showURL{%
\tempurl}
\newblock
\shownote{Retrieved March 28th, 2019.}


\bibitem[\protect\citeauthoryear{Berenson, Bernstein, Gray, Melton, O'Neil, and
  O'Neil}{Berenson et~al\mbox{.}}{1995}]%
        {DBLP:conf/sigmod/BerensonBGMOO95}
\bibfield{author}{\bibinfo{person}{Hal Berenson}, \bibinfo{person}{Philip~A.
  Bernstein}, \bibinfo{person}{Jim Gray}, \bibinfo{person}{Jim Melton},
  \bibinfo{person}{Elizabeth~J. O'Neil}, {and} \bibinfo{person}{Patrick~E.
  O'Neil}.} \bibinfo{year}{1995}\natexlab{}.
\newblock \showarticletitle{A Critique of {ANSI} {SQL} Isolation Levels}. In
  \bibinfo{booktitle}{\emph{Proceedings of the 1995 {ACM} {SIGMOD}
  International Conference on Management of Data, San Jose, California, USA,
  May 22-25, 1995.}}, \bibfield{editor}{\bibinfo{person}{Michael~J. Carey}
  {and} \bibinfo{person}{Donovan~A. Schneider}} (Eds.).
  \bibinfo{publisher}{{ACM} Press}, \bibinfo{pages}{1--10}.
\newblock
\urldef\tempurl%
\url{https://doi.org/10.1145/223784.223785}
\showDOI{\tempurl}


\bibitem[\protect\citeauthoryear{Bouajjani, Enea, Guerraoui, and
  Hamza}{Bouajjani et~al\mbox{.}}{2017}]%
        {DBLP:conf/popl/BouajjaniEGH17}
\bibfield{author}{\bibinfo{person}{Ahmed Bouajjani},
  \bibinfo{person}{Constantin Enea}, \bibinfo{person}{Rachid Guerraoui}, {and}
  \bibinfo{person}{Jad Hamza}.} \bibinfo{year}{2017}\natexlab{}.
\newblock \showarticletitle{On verifying causal consistency}. In
  \bibinfo{booktitle}{\emph{Proceedings of the 44th {ACM} {SIGPLAN} Symposium
  on Principles of Programming Languages, {POPL} 2017, Paris, France, January
  18-20, 2017}}, \bibfield{editor}{\bibinfo{person}{Giuseppe Castagna} {and}
  \bibinfo{person}{Andrew~D. Gordon}} (Eds.). \bibinfo{publisher}{{ACM}},
  \bibinfo{pages}{626--638}.
\newblock
\urldef\tempurl%
\url{http://dl.acm.org/citation.cfm?id=3009888}
\showURL{%
\tempurl}


\bibitem[\protect\citeauthoryear{Burckhardt, Gotsman, Yang, and
  Zawirski}{Burckhardt et~al\mbox{.}}{2014}]%
        {DBLP:conf/popl/BurckhardtGYZ14}
\bibfield{author}{\bibinfo{person}{Sebastian Burckhardt},
  \bibinfo{person}{Alexey Gotsman}, \bibinfo{person}{Hongseok Yang}, {and}
  \bibinfo{person}{Marek Zawirski}.} \bibinfo{year}{2014}\natexlab{}.
\newblock \showarticletitle{Replicated data types: specification, verification,
  optimality}. In \bibinfo{booktitle}{\emph{The 41st Annual {ACM}
  {SIGPLAN-SIGACT} Symposium on Principles of Programming Languages, {POPL}
  '14, San Diego, CA, USA, January 20-21, 2014}},
  \bibfield{editor}{\bibinfo{person}{Suresh Jagannathan} {and}
  \bibinfo{person}{Peter Sewell}} (Eds.). \bibinfo{publisher}{{ACM}},
  \bibinfo{pages}{271--284}.
\newblock
\urldef\tempurl%
\url{https://doi.org/10.1145/2535838.2535848}
\showDOI{\tempurl}


\bibitem[\protect\citeauthoryear{Burckhardt, Leijen, Protzenko, and
  F{\"{a}}hndrich}{Burckhardt et~al\mbox{.}}{2015}]%
        {DBLP:conf/ecoop/BurckhardtLPF15}
\bibfield{author}{\bibinfo{person}{Sebastian Burckhardt}, \bibinfo{person}{Daan
  Leijen}, \bibinfo{person}{Jonathan Protzenko}, {and} \bibinfo{person}{Manuel
  F{\"{a}}hndrich}.} \bibinfo{year}{2015}\natexlab{}.
\newblock \showarticletitle{Global Sequence Protocol: {A} Robust Abstraction
  for Replicated Shared State}. In \bibinfo{booktitle}{\emph{29th European
  Conference on Object-Oriented Programming, {ECOOP} 2015, July 5-10, 2015,
  Prague, Czech Republic}} \emph{(\bibinfo{series}{LIPIcs})},
  \bibfield{editor}{\bibinfo{person}{John~Tang Boyland}} (Ed.),
  Vol.~\bibinfo{volume}{37}. \bibinfo{publisher}{Schloss Dagstuhl -
  Leibniz-Zentrum fuer Informatik}, \bibinfo{pages}{568--590}.
\newblock
\urldef\tempurl%
\url{https://doi.org/10.4230/LIPIcs.ECOOP.2015.568}
\showDOI{\tempurl}


\bibitem[\protect\citeauthoryear{Cerone, Bernardi, and Gotsman}{Cerone
  et~al\mbox{.}}{2015}]%
        {DBLP:conf/concur/Cerone0G15}
\bibfield{author}{\bibinfo{person}{Andrea Cerone}, \bibinfo{person}{Giovanni
  Bernardi}, {and} \bibinfo{person}{Alexey Gotsman}.}
  \bibinfo{year}{2015}\natexlab{}.
\newblock \showarticletitle{A Framework for Transactional Consistency Models
  with Atomic Visibility}. In \bibinfo{booktitle}{\emph{26th International
  Conference on Concurrency Theory, {CONCUR} 2015, Madrid, Spain, September
  1.4, 2015}} \emph{(\bibinfo{series}{LIPIcs})},
  \bibfield{editor}{\bibinfo{person}{Luca Aceto} {and} \bibinfo{person}{David
  de~Frutos{-}Escrig}} (Eds.), Vol.~\bibinfo{volume}{42}.
  \bibinfo{publisher}{Schloss Dagstuhl - Leibniz-Zentrum fuer Informatik},
  \bibinfo{pages}{58--71}.
\newblock
\urldef\tempurl%
\url{https://doi.org/10.4230/LIPIcs.CONCUR.2015.58}
\showDOI{\tempurl}


\bibitem[\protect\citeauthoryear{Chalupa, Chatterjee, Pavlogiannis, Sinha, and
  Vaidya}{Chalupa et~al\mbox{.}}{2018}]%
        {DBLP:journals/pacmpl/ChalupaCPSV18}
\bibfield{author}{\bibinfo{person}{Marek Chalupa}, \bibinfo{person}{Krishnendu
  Chatterjee}, \bibinfo{person}{Andreas Pavlogiannis}, \bibinfo{person}{Nishant
  Sinha}, {and} \bibinfo{person}{Kapil Vaidya}.}
  \bibinfo{year}{2018}\natexlab{}.
\newblock \showarticletitle{Data-centric dynamic partial order reduction}.
\newblock \bibinfo{journal}{\emph{{PACMPL}}} \bibinfo{volume}{2},
  \bibinfo{number}{{POPL}} (\bibinfo{year}{2018}),
  \bibinfo{pages}{31:1--31:30}.
\newblock
\urldef\tempurl%
\url{https://doi.org/10.1145/3158119}
\showDOI{\tempurl}


\bibitem[\protect\citeauthoryear{E{\'{e}}n and S{\"{o}}rensson}{E{\'{e}}n and
  S{\"{o}}rensson}{2003}]%
        {DBLP:conf/sat/EenS03}
\bibfield{author}{\bibinfo{person}{Niklas E{\'{e}}n} {and}
  \bibinfo{person}{Niklas S{\"{o}}rensson}.} \bibinfo{year}{2003}\natexlab{}.
\newblock \showarticletitle{An Extensible SAT-solver}. In
  \bibinfo{booktitle}{\emph{Theory and Applications of Satisfiability Testing,
  6th International Conference, {SAT} 2003. Santa Margherita Ligure, Italy, May
  5-8, 2003 Selected Revised Papers}} \emph{(\bibinfo{series}{Lecture Notes in
  Computer Science})}, \bibfield{editor}{\bibinfo{person}{Enrico Giunchiglia}
  {and} \bibinfo{person}{Armando Tacchella}} (Eds.),
  Vol.~\bibinfo{volume}{2919}. \bibinfo{publisher}{Springer},
  \bibinfo{pages}{502--518}.
\newblock
\urldef\tempurl%
\url{https://doi.org/10.1007/978-3-540-24605-3\_37}
\showDOI{\tempurl}


\bibitem[\protect\citeauthoryear{Emmi and Enea}{Emmi and Enea}{2018}]%
        {DBLP:journals/pacmpl/EmmiE18}
\bibfield{author}{\bibinfo{person}{Michael Emmi} {and}
  \bibinfo{person}{Constantin Enea}.} \bibinfo{year}{2018}\natexlab{}.
\newblock \showarticletitle{Sound, complete, and tractable linearizability
  monitoring for concurrent collections}.
\newblock \bibinfo{journal}{\emph{{PACMPL}}} \bibinfo{volume}{2},
  \bibinfo{number}{{POPL}} (\bibinfo{year}{2018}),
  \bibinfo{pages}{25:1--25:27}.
\newblock
\urldef\tempurl%
\url{https://doi.org/10.1145/3158113}
\showDOI{\tempurl}


\bibitem[\protect\citeauthoryear{Flanagan and Godefroid}{Flanagan and
  Godefroid}{2005}]%
        {DBLP:conf/popl/FlanaganG05}
\bibfield{author}{\bibinfo{person}{Cormac Flanagan} {and}
  \bibinfo{person}{Patrice Godefroid}.} \bibinfo{year}{2005}\natexlab{}.
\newblock \showarticletitle{Dynamic partial-order reduction for model checking
  software}. In \bibinfo{booktitle}{\emph{Proceedings of the 32nd {ACM}
  {SIGPLAN-SIGACT} Symposium on Principles of Programming Languages, {POPL}
  2005, Long Beach, California, USA, January 12-14, 2005}},
  \bibfield{editor}{\bibinfo{person}{Jens Palsberg} {and}
  \bibinfo{person}{Mart{\'{\i}}n Abadi}} (Eds.). \bibinfo{publisher}{{ACM}},
  \bibinfo{pages}{110--121}.
\newblock
\urldef\tempurl%
\url{https://doi.org/10.1145/1040305.1040315}
\showDOI{\tempurl}


\bibitem[\protect\citeauthoryear{Gibbons and Korach}{Gibbons and
  Korach}{1997}]%
        {DBLP:journals/siamcomp/GibbonsK97}
\bibfield{author}{\bibinfo{person}{Phillip~B. Gibbons} {and}
  \bibinfo{person}{Ephraim Korach}.} \bibinfo{year}{1997}\natexlab{}.
\newblock \showarticletitle{Testing Shared Memories}.
\newblock \bibinfo{journal}{\emph{{SIAM} J. Comput.}} \bibinfo{volume}{26},
  \bibinfo{number}{4} (\bibinfo{year}{1997}), \bibinfo{pages}{1208--1244}.
\newblock
\urldef\tempurl%
\url{https://doi.org/10.1137/S0097539794279614}
\showDOI{\tempurl}


\bibitem[\protect\citeauthoryear{Lamport}{Lamport}{1978}]%
        {DBLP:journals/cacm/Lamport78}
\bibfield{author}{\bibinfo{person}{Leslie Lamport}.}
  \bibinfo{year}{1978}\natexlab{}.
\newblock \showarticletitle{Time, Clocks, and the Ordering of Events in a
  Distributed System}.
\newblock \bibinfo{journal}{\emph{Commun. {ACM}}} \bibinfo{volume}{21},
  \bibinfo{number}{7} (\bibinfo{year}{1978}), \bibinfo{pages}{558--565}.
\newblock
\urldef\tempurl%
\url{https://doi.org/10.1145/359545.359563}
\showDOI{\tempurl}


\bibitem[\protect\citeauthoryear{Mazurkiewicz}{Mazurkiewicz}{1986}]%
        {DBLP:conf/ac/Mazurkiewicz86}
\bibfield{author}{\bibinfo{person}{Antoni~W. Mazurkiewicz}.}
  \bibinfo{year}{1986}\natexlab{}.
\newblock \showarticletitle{Trace Theory}. In \bibinfo{booktitle}{\emph{Petri
  Nets: Central Models and Their Properties, Advances in Petri Nets 1986, Part
  II, Proceedings of an Advanced Course, Bad Honnef, Germany, 8-19 September
  1986}} \emph{(\bibinfo{series}{Lecture Notes in Computer Science})},
  \bibfield{editor}{\bibinfo{person}{Wilfried Brauer},
  \bibinfo{person}{Wolfgang Reisig}, {and} \bibinfo{person}{Grzegorz
  Rozenberg}} (Eds.), Vol.~\bibinfo{volume}{255}.
  \bibinfo{publisher}{Springer}, \bibinfo{pages}{279--324}.
\newblock
\urldef\tempurl%
\url{https://doi.org/10.1007/3-540-17906-2\_30}
\showDOI{\tempurl}


\bibitem[\protect\citeauthoryear{Ozkan, Majumdar, Niksic, Befrouei, and
  Weissenbacher}{Ozkan et~al\mbox{.}}{2018}]%
        {DBLP:journals/pacmpl/OzkanMNBW18}
\bibfield{author}{\bibinfo{person}{Burcu~Kulahcioglu Ozkan},
  \bibinfo{person}{Rupak Majumdar}, \bibinfo{person}{Filip Niksic},
  \bibinfo{person}{Mitra~Tabaei Befrouei}, {and} \bibinfo{person}{Georg
  Weissenbacher}.} \bibinfo{year}{2018}\natexlab{}.
\newblock \showarticletitle{Randomized testing of distributed systems with
  probabilistic guarantees}.
\newblock \bibinfo{journal}{\emph{{PACMPL}}} \bibinfo{volume}{2},
  \bibinfo{number}{{OOPSLA}} (\bibinfo{year}{2018}),
  \bibinfo{pages}{160:1--160:28}.
\newblock
\urldef\tempurl%
\url{https://doi.org/10.1145/3276530}
\showDOI{\tempurl}


\bibitem[\protect\citeauthoryear{Papadimitriou}{Papadimitriou}{1979}]%
        {DBLP:journals/jacm/Papadimitriou79b}
\bibfield{author}{\bibinfo{person}{Christos~H. Papadimitriou}.}
  \bibinfo{year}{1979}\natexlab{}.
\newblock \showarticletitle{The serializability of concurrent database
  updates}.
\newblock \bibinfo{journal}{\emph{J. {ACM}}} \bibinfo{volume}{26},
  \bibinfo{number}{4} (\bibinfo{year}{1979}), \bibinfo{pages}{631--653}.
\newblock
\urldef\tempurl%
\url{https://doi.org/10.1145/322154.322158}
\showDOI{\tempurl}


\bibitem[\protect\citeauthoryear{Terry, Demers, Petersen, Spreitzer, Theimer,
  and Welch}{Terry et~al\mbox{.}}{1994}]%
        {DBLP:conf/pdis/TerryDPSTW94}
\bibfield{author}{\bibinfo{person}{Douglas~B. Terry}, \bibinfo{person}{Alan~J.
  Demers}, \bibinfo{person}{Karin Petersen}, \bibinfo{person}{Mike Spreitzer},
  \bibinfo{person}{Marvin Theimer}, {and} \bibinfo{person}{Brent~B. Welch}.}
  \bibinfo{year}{1994}\natexlab{}.
\newblock \showarticletitle{Session Guarantees for Weakly Consistent Replicated
  Data}. In \bibinfo{booktitle}{\emph{Proceedings of the Third International
  Conference on Parallel and Distributed Information Systems {(PDIS} 94),
  Austin, Texas, USA, September 28-30, 1994}}. \bibinfo{publisher}{{IEEE}
  Computer Society}, \bibinfo{pages}{140--149}.
\newblock
\urldef\tempurl%
\url{https://doi.org/10.1109/PDIS.1994.331722}
\showDOI{\tempurl}


\bibitem[\protect\citeauthoryear{Wolper}{Wolper}{1986}]%
        {DBLP:conf/popl/Wolper86}
\bibfield{author}{\bibinfo{person}{Pierre Wolper}.}
  \bibinfo{year}{1986}\natexlab{}.
\newblock \showarticletitle{Expressing Interesting Properties of Programs in
  Propositional Temporal Logic}. In \bibinfo{booktitle}{\emph{Conference Record
  of the Thirteenth Annual {ACM} Symposium on Principles of Programming
  Languages, St. Petersburg Beach, Florida, USA, January 1986}}.
  \bibinfo{publisher}{{ACM} Press}, \bibinfo{pages}{184--193}.
\newblock
\urldef\tempurl%
\url{https://doi.org/10.1145/512644.512661}
\showDOI{\tempurl}


\end{thebibliography}

\clearpage
\appendix
\section{Proofs of Section~\ref{sec:def}}\label{app:definitions}

\begin{lemma}
 Let $\hist=\tup{T, \so, \wro}$ be a history. 
 If $\tup{\hist,\co}$ satisfies $\mathsf{Read\ Atomic}$, then 
 for every transaction $\tr$ and two reads $\rd[\id_1]{\xvar}{\val_1},\rd[\id_2]{\xvar}{\val_2}\in \readOp{\tr}$, $\wro^{-1}(\rd[\id_1]{\xvar}{\val_1})=\wro^{-1}(\rd[\id_2]{\xvar}{\val_2})$ and $\val_1 = \val_2$.\end{lemma}
\begin{proof}
 Let $\tup{\tr_1, \rd[\id_1]{\xvar}{\val_1}}, \tup{\tr_2, \rd[\id_2]{\xvar}{\val_2}} \in \wro[\xvar]$. Then $\tr_1, \tr_2$ write to $\xvar$. Let us assume by contradiction, that $\tr_1\neq\tr_2$. By \textsf{Read Atomic}, $\tup{\tr_2, \tr_1} \in \co$ because $\tup{\tr_1, \rd[\id_1]{\xvar}{\val_1}} \in \wro[\xvar]$ and $\tr_2$ writes to $\xvar$. Similarly, we can also show that $\tup{\tr_1, \tr_2} \in \co$. This contradicts the fact that $\co$ is a strict total order. Therefore, $\tr_1 = \tr_2$. We also have that $\val_1 = \val_2$ because each transaction contains a single write to $\xvar$.
 \end{proof}

\begin{lemma}
 The following entailments hold:
 \begin{align*}
   & \mathsf{Causal} \implies \mathsf{Read\ Atomic}\implies \mathsf{Read\ Committed} \\
   & \mathsf{Prefix} \implies \mathsf{Causal}                                        \\
   & \mathsf{Serializability} \implies \mathsf{Prefix}\land \mathsf{Conflict}        
 \end{align*} 
\end{lemma}
\begin{proof}
  We will show the contrapositive of each implication:
 \begin{itemize}
   \item If $\tup{\hist, \co}$ does not satisfy \textsf{Read Committed},  then
\begin{align*}
\exists \xvar,\ \exists \tr_1,\tr_2,\ \exists \alpha,\beta.\ \tup{\tr_1,\alpha}\in \wro[\xvar] \land \writeVar{\tr_2}{\xvar}\ \land \tup{\tr_2,\beta}\in\wro \land \tup{\beta,\alpha} \in \po \land \tup{\tr_1,\tr_2}\in\co. 
\end{align*}
Let $\tr_3$ the transaction containing $\alpha$ and $\beta$. We have that $\tup{\tr_2, \tr_3} \in \wro$. But then we have $\tr_1, \tr_2, \tr_3$ such that $\tup{\tr_1, \tr_3} \in \wro[\xvar]$ and $\tup{\tr_2, \tr_3} \in \wro$ and $\writeVar{\tr_2}{\xvar}$. So by \textsf{Read Atomic}, $\tup{\tr_2, \tr_1} \in \co$. This contradicts the fact that $\co$ is a strict total order. Therefore, $\tup{\hist, \co}$ does not satisfy \textsf{Read\ Atomic}.
   \item If $\tup{\hist, \co}$ does not satisfy \textsf{Read Atomic}, then 
 \begin{align*}  
 \exists \xvar, \exists \tr_1,\tr_2,\tr_3.\ \tup{\tr_1,\tr_3}\in \wro[\xvar] \land \writeVar{\tr_2}{\xvar}\ \land \tup{\tr_2,\tr_3}\in\wro\cup\so \land \tup{\tr_1,\tr_2}\in\co.
 \end{align*}
  Then $\tup{\tr_2,\tr_3}\in(\wro\cup\so)^+$. Then, by \textsf{Causal}, we have $\tup{\tr_2, \tr_1} \in \co$, which contradicts the fact that $\co$ is a strict total order. Therefore, $\tup{\hist, \co}$ does not satisfy \textsf{Causal}.
   \item If $\tup{\hist, \co}$ does not satisfy \textsf{Causal}, then
\begin{align*}
\exists \xvar, \exists \tr_1,\tr_2,\tr_3.\ \tup{\tr_1,\tr_3}\in \wro[\xvar] \land \writeVar{\tr_2}{\xvar}\ \land \tup{\tr_2,\tr_3}\in(\wro\cup\so)^+ \land\tup{\tr_1,\tr_2}\in\co.
\end{align*}
   But, $(\wro\cup\so)^+ = (\wro\cup\so)^* \circ (\wro\cup\so) \subseteq \co^* \circ (\wro\cup\so)$. Therefore, $\tup{\tr_2,\tr_3}\in\co^* \circ (\wro\cup\so)$. Then, by \textsf{Prefix}, we have $\tup{\tr_2, \tr_1} \in \co$, which contradicts the fact that $\co$ is a strict total order. Therefore, $\tup{\hist, \co}$ does not satisfy \textsf{Prefix}.
   \item If $\tup{\hist, \co}$ does not satisfy \textsf{Prefix} or \textsf{Conflict}, then 
\begin{align*}
\exists \xvar, \exists \tr_1,\tr_2,\tr_3, \tr_4.\ \tup{\tr_1,\tr_3}\in \wro[\xvar] \land \writeVar{\tr_2}{\xvar}\ \land \tup{\tr_2,\tr_4}\in\co^* \land \tup{\tr_1,\tr_2}\in\co
\end{align*}
and 
       \begin{itemize}
         \item $\tup{\tr_4,\tr_3}\in \co \land \writeVar{\tr_3}{\yvar}\ \land \writeVar{\tr_3}{\yvar}$ if it violates \textsf{Conflict}.
         \item $\tup{\tr_4,\tr_3}\in (\wro \cup \so)$ if it violates \textsf{Prefix}.
       \end{itemize}
       
In both cases, we have that $\tup{\tr_4, \tr_3} \in \co$. Because $\co$ is transitive, $\tup{\tr_2, \tr_4} \in \co^*$ and $\tup{\tr_4, \tr_3} \in \co$ imply that $\tup{\tr_2, \tr_3} \in \co$. Then by \textsf{Serializability}, we have $\tup{\tr_2, \tr_1} \in \co$, which contradicts the fact that $\co$ is a strict total order. Therefore, $\tup{\hist, \co}$ does not satisfy \textsf{Serializability}.
 \end{itemize}
\end{proof}

\section{Proofs of Section~\ref{sec:general}}\label{app:sec:general}

\begin{algorithm}
 \SetKwInOut{KwInput}{Input}
 \SetKwInOut{KwOutput}{Output}
 \KwIn{A history $\hist = \tup{T, \so, \wro}$}
 \KwOut{$\mathit{true}$ iff $\hist$ satisfies \textsc{Causal consistency}}
 \BlankLine
 \If{$\so\cup\wro$ is cyclic} {
  \Return{false}\;
 }
 $\co \leftarrow \so\cup\wro$\;
 \ForEach{$\xvar \in \vars{\hist}$}{
  \ForEach{$\tr_1 \neq \tr_2 \in T$ s.t. $\tr_1$ and $\tr_2$ write on $\xvar$}{
   \If{$\exists \alpha, \beta.\ \tup{\tr_1,\alpha}\in \wro[\xvar]\land \tup{\tr_2,\beta}\in (\so\cup\wro) \land \tup{\alpha, \beta} \in \po$} { 
    $\co \leftarrow \co \cup \{\tup{\tr_2, \tr_1}\}$\;
   }
  }
 }
 \eIf{$\co$ is cyclic}{
  \Return{false}\;
 }{
  \Return{true}\;
 }
 \caption{Checking \textsc{Read Committed}}
 \label{rcalgo:1}
\end{algorithm}

\begin{algorithm}
 \SetKwInOut{KwInput}{Input}
 \SetKwInOut{KwOutput}{Output}
 \KwIn{A history $\hist = \tup{T, \so, \wro}$}
 \KwOut{$\mathit{true}$ iff $\hist$ satisfies \textsc{Causal consistency}}
 \BlankLine
 \If{$\so\cup\wro$ is cyclic} {
  \Return{false}\;
 }
 $\co \leftarrow \so\cup\wro$\;
 \ForEach{$\xvar \in \vars{\hist}$}{
  \ForEach{$\tr_1 \neq \tr_2 \in T$ s.t. $\tr_1$ and $\tr_2$ write on $\xvar$}{
   \If{$\exists \tr_3.\ \tup{\tr_1,\tr_3}\in \wro[\xvar]\land \tup{\tr_2,\tr_3}\in (\so\cup\wro)$} { 
    $\co \leftarrow \co \cup \{\tup{\tr_2, \tr_1}\}$\;
   }
  }
 }
 \eIf{$\co$ is cyclic}{
  \Return{false}\;
 }{
  \Return{true}\;
 }
 \caption{Checking \textsc{Read Atomic}}
 \label{raalgo:1}
\end{algorithm}

\begin{theorem}
The problem of checking whether a history satisfies \emph{\textsc{Read Committed}}, \emph{\textsc{Read Atomic}}, or \emph{\textsc{Causal consistency}} is polynomial time.
\end{theorem}
\begin{proof}
 %
 %
 
We first consider the case of \textsc{Read Committed}. Algorithm \ref{rcalgo:1} finds all the $\co$ relations that are implied by the $\mathsf{Read Committed}$ axiom (fig. \ref{lock_rc_def}) \ie for all $\tr_1, \tr_2$ and for all $\alpha, \beta$ if we have, $\tup{\tr_1, \alpha} \in \wro[\xvar]$ and $\tup{\tr_2, \beta} \in \wro$ and $\tup{\beta, \alpha} \in \po$ and $\writeVar{\tr_2}{\xvar}$(from figure \ref{lock_rc_def}), then we add $\tup{\tr_2, \tr_1} \in \co$. Quantification can be done in cubic iteration over the transactions for each variable. Now, we claim $\hist$ is \textsc{Read Committed} if and only if $\co$, the union of these found relations is acyclic. 
 
  First we prove that if $\co$ is acyclic, then $\hist$ satisfies \textsc{Read Committed}. $\co$ is acyclic, hence consider a topological order $\co'$ of $\co$. If $\tup{\hist, \co'}$ does not satisfies \textsc{Read Committed}, there exists $\tr_1, \tr_2$ and $\alpha, \beta$ such that, $\tup{\tr_1, \alpha} \in \wro[\xvar]$ and $\tup{\tr_2, \beta} \in \wro$ and $\tup{\alpha, \beta} \in \po$ and $\writeVar{\tr_2}{\xvar}$ and $\tup{\tr_1, \tr_2} \in \co'$. But with the same $\tr_1, \tr_2, \alpha, \beta$ and variable $\xvar$, we must have added $\tup{\tr_2, \tr_1}$ in $\co$ which implies any topological order of $\co$ can not have $\tup{\tr_1, \tr_2}$ which contradicts that $\co'$ contains $\tup{\tr_1, \tr_2}$.
 
 Now we prove, if $\co$ is not acyclic, then $\hist$ does not satisfie \textsc{Read Committed}. If the history is \textsf{Read Committed}, there must be a commit order $\co'$ for $\hist$, for which $\tup{\hist, \co'}$ satisfies \textsf{Read Committed}. $\co'$ must be acyclic. Now, take any cycle in $\co$, $\tr_1 \xrightarrow{\co} \tr_2 \xrightarrow{\co} \cdots \xrightarrow{\co} \tr_k \xrightarrow{\co} \tr_1$. Now along these, if for all $i, j$ $\tr_i \xrightarrow{\co'} \tr_j$ then, it becomes a cycle in $\co'$. Therefore, there must be atleast one pair, where $\tup{\tr_i, \tr_j} \in \co$, but $\tup{\tr_i, \tr_j} \not\in \co'$ or $\tup{\tr_j, \tr_i} \in \co'$ since $\co'$ is total.
 
 But $\tup{\tr_i, \tr_j} \in \co$ must have been added because there exists $\alpha, \beta$ such that $\tup{\tr_i, \beta} \in \wro$, $\tup{\beta, \alpha} \in \po$ and $\tup{\tr_j, \alpha} \in \wro[\xvar]$ and $\writeVar{\tr_i}{\xvar}$. But $\tup{\tr_j, \tr_i} \in \co'$, so $\co'$ violates \textsc{Read Commmitted} for $\tr_i, \tr_j, \alpha, \beta$. Therefore, if $\co$ has a cyclic, $\hist$ does not satisfy \textsc{Read Committed}.
 
The case of \textsc{Read Atomic} and \textsc{Causal Consistency} is similar. If the $\co$ at the end of the algorithm is acyclic, then if a topological order extending it does not satisfy resp. consistency model, then it contains $\tup{\tr_1, \tr_2}$ with the quantified transactions, variables and operations for resp. consistency model. But, then for same quantified transactions, variables and operations, we must have added $\tup{\tr_2, \tr_1} \in \co$ in the algorithm. It contradicts the topological order contains $\tup{\tr_1, \tr_2}$. For the other direction, if the $\co$ at the end of the algorithm has cycle, we show there exists $\tup{\tr_i, \tr_j} \in \co$, like the case of \textsf{Read Committed}, yet it is not in a commit order satisfying the resp. consistency models. Then the commit order must have $\tup{\tr_j, \tr_i}$ which violates the resp. consistency models for the exact same quantified transactions, variables and operations for which $\tup{\tr_i, \tr_j}$ was added in $\co$ at first.
 %
\end{proof}

\section{Proofs of Section~\ref{ssec:pc}}\label{app:pc_red}

\begin{lemma}\label{lem:pc_width:app}
The histories $\hist$ and $\hist_{R|W}$ have the same width.
\end{lemma}
\begin{proof}
We show that if $\hist$ is of width $k$, then the session order $\so'$ of $\hist_{R|W}$ cannot contain an antichain of size $k+1$.
Let $\{X^1_{\tr_1}, X^2_{\tr_2}, \ldots X^k_{\tr_k}, X^{k+1}_{\tr_{k+1}}\}$ with $X^i\in \{R,W\}$, for all $1\leq i\leq k+1$, be a set of $k+1$ transactions in $\hist_{R|W}$. Then,
\begin{itemize}
 \item if $\tr_i = \tr_j=\tr$ for some $i \neq j$, then $X^i_{\tr_i}=R_{\tr}$ and $X^j_{\tr_j}=W_\tr$ or vice-versa. Since $\tup{R_{\tr}, W_{\tr}} \in \so'$, this set cannot be an antichain of $\so'$.
 \item otherwise, by hypothesis, the set $\{\tr_1, \tr_2, \ldots, \tr_k, \tr_{k+1}\}$ is not an antichain of $\so$. Thus, there exists $i, j$ such that $\tup{\tr_i, \tr_j} \in \so$. By the definition of $\so'$, $\tup{X^i_{\tr_i}, X^j_{\tr_j}}\in\so'$, which implies that this set is not an antichain of $\so'$.
\end{itemize}
\end{proof}

\section{Proofs of Section~\ref{ssec:si}}\label{app:si_red}

 \begin{figure}[t]
  \centering
  \begin{subfigure}{.27\textwidth}
   \resizebox{\textwidth}{!}{
    \begin{tikzpicture}[->,>=stealth',shorten >=1pt,auto,node distance=4cm,
      semithick, transform shape]
     \node[transaction state] at (0,0)       (t_1)           {$W_{\tr_1}$};
     \node[transaction state, label={below:{$\writeVar{ }{\xvar_{3,4}}$}}] at (2,0)       (t_3)           {$R_{\tr_3}$};
     \node[transaction state, label={below:{$\writeVar{ }{\yvar}$}}] at (4,0)       (t_3_w)           {$W_{\tr_3}$};
     \node[transaction state,label={above:$\writeVar{ }{\xvar}$}] at (-0.5,1.5) (t_2) {$W_{\tr_2}$};
     \node[transaction state, label={above:{$\writeVar{ }{\yvar,\xvar_{3,4}}$}}] at (1.5,1.5) (t_4) {$W_{\tr_4}$};
     \path (t_1) edge node {$\wro[\xvar]$} (t_3);
     \path (t_2) edge[red] node {$\co'^*_1$} (t_4);
     \path (t_4) edge[right, red] node[pos=.9] {$\wrosi(\co'_1)$} (t_3);
     \path (t_4) edge node {$\co'_1$} (t_3_w);
     \path (t_3) edge[below] node {$\wro[\xvar_{3,4}]'$} (t_3_w);
     \path (t_1) edge[left] node {$\co'_1$} (t_2);
     \path (t_3) edge[left, red] node[pos=0.1, rotate=-30, yshift=2.5mm] {$\rwo(\co'_1)$} (t_2);
    \end{tikzpicture}
   }
   \caption{Minimal cycle in $\tup{\hist', \co'_2}$}
   \label{si_p_proof:2a}
  \end{subfigure}
  \begin{subfigure}{0.20\textwidth}
   \resizebox{\textwidth}{!}{
    \begin{tikzpicture}[->,>=stealth',shorten >=1pt,auto,node distance=4cm,
      semithick, transform shape]
     \node[transaction state, text=red] at (0,0)       (t_1)           {$\tr_1$};
     \node[transaction state, label={below:{$\writeVar{ }{\yvar}$}}] at (2,0)       (t_3)           {$\tr_3$};
     \node[transaction state, text=red,label={above:\textcolor{red}{$\writeVar{ }{\xvar}$}}] at (-0.5,1.5) (t_2) {$\tr_2$};
     \node[transaction state, label={above:{$\writeVar{ }{\yvar}$}}] at (1.5,1.5) (t_4) {$\tr_4$};
     \path (t_1) edge[red] node {$\wro[\xvar]$} (t_3);
     \path (t_2) edge node {$\co^*$} (t_4);
     \path (t_4) edge node {$\co$} (t_3);
     \path (t_1) edge[left] node {$\co$} (t_2);
    \end{tikzpicture}
   }
   \caption{$\mathsf{Conflict}$ violation in $\tup{\hist, \co}$}
   \label{si_p_proof:2b}
  \end{subfigure}
  \caption{Cycles in $\co_2'$ corresponding to $\mathsf{Conflict}$ violations.}
  \label{si_p_proof:2}
 \end{figure}
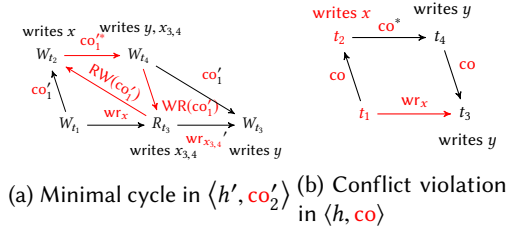

\begin{theorem}\label{th:si:app}
A history $\hist$ satisfies snapshot isolation iff $\hist_{R|W}^c$ is serializable.
\end{theorem}
\begin{proof}
For the ``only-if'' direction, we define partial commit orders $\co_1'$ and $\rwo(\co_1')$ as in the case of prefix consistency. Along with them, we define a partial commit order $\wrosi(\co_1')$ 
\begin{align*}
 \wrosi(\co'_1) = \{\tup{W_{\tr_1}, R_{\tr_2}}| & \exists \xvar_{2,1} \in \vars{\hist_{R|W}^c}.\                             \\
                                & \tup{R_{\tr_2},W_{\tr_2}} \in \wro[\xvar_{2,1}]', \\
                                & \tup{W_{\tr_1}, W_{\tr_2}} \in \co'_1, \writeVar{W_{\tr_1}}{\xvar_{2,1}} \} 
\end{align*}
which intuitively, enforces that the read part $R_{\tr_2}$ of a transaction $\tr_2$ observes the effects of the write part $W_{\tr_1}$ of a transaction $\tr_1$ when $\tr_1$ and $\tr_2$ write on a common variable and the commit order in $\hist$ orders $\tr_1$ before $\tr_2$ (which implies that the corresponding write transactions are ordered in the same way in $\co'_1$). We define $\co_2' = \co_1' \cup \rwo(\co_1') \cup \wrosi(\co'_1)$. 

The characterization of minimal cycles of $\co_1'$ and ultimately, the fact that it is acyclic can be proved as in Lemma~\ref{lem:co1}. The proof that $\co_2'$ is acyclic goes as follows. As for PC, since $\co'_1$ is acyclic, a cycle in $\co'_2$, and in particular a minimal one, must  necessarily contain a dependency from $\rwo(\co'_1)$ or $\wrosi(\co'_1)$. Note that a minimal cycle cannot contain two dependencies in either $\rwo(\co'_1)$ or $\wrosi(\co'_1)$ since this would imply that it contains two non-consecutive write transactions. Differently from the previous case, the cycle in $\co_2'$ here can also contains the dependencies in $\wrosi(\co'_1)$ which are from write transactions to read transactions. 
The case of minimal cycles in $\co'_2$ that contain only a dependency from $\rwo(\co'_1)$, and no dependencies from $\wrosi(\co'_1)$, can be dealt with as in the case of PC.

Consider a minimal cycle of $\co_2'$ that contains a dependency $\tup{W_{\tr_4}, R_{\tr_3}}$ in $\wrosi(\co'_1)$, which implies that $W_{\tr_4}, W_{\tr_3}$ must write on some common variable $y$. Because the minimal cycle contains at most two write transactions and one read transaction, it must also contain a dependency from read transactions to write transactions. Note that such a dependency can come only from $\rwo(\co'_1)$.
The red edges in Figure~\ref{si_p_proof:2a} show such a cycle. By the definition of $\hist_{R|W}^c$, we have that $W_{\tr_4}$ and $R_{\tr_3}$ write on a variable $\xvar_{3,4}$ and $\tup{R_{\tr_3}, W_{\tr_3}} \in \wro[\xvar_{3,4}]'$. Since $\tup{R_{\tr_3}, W_{\tr_2}}\in \rwo(\co'_1)$, we have that there exists a write transaction $W_{\tr_1}$ s.t. $\tup{W_{\tr_1}, R_{\tr_3}}\in \wro'[\xvar]$, for some $\xvar$, and $\tup{W_{\tr_1}, W_{\tr_2}}\in \co'_1$.
The relations between these transactions of $\hist_{R|W}^c$ imply that the corresponding transactions of $\hist$ are related as shown in Figure~\ref{si_p_proof:2b}, which implies a violation of $\mathsf{Conflict}$, a contradiction of the hypothesis.

 

For the ``if'' direction, let $\co'$ be a commit (total) order on transactions of $\hist_{R|W}^c$ which satisfies the serializability axiom. Let $\co$ be a commit order on transactions of $\hist$ defined by $\co = \{\tup{\tr_1, \tr_2} | \tup{W_{\tr_1}, W_{\tr_2}} \in \co'\}$ ($\co$ is clearly a total order). Showing that $\co$ is an extension of $\wro \cup \so$ and that it doesn't expose a $\axpre$ violation can be done as for prefix consistency. Now, assume by contradiction that there exists a $\axconf$ violation between $\tr_1$, $\tr_2$, $\tr_3$, $\tr_4$ (shown in Figure~\ref{si_p_proof:3a}). Then, the corresponding transactions $W_{\tr_1}, W_{\tr_2}, W_{\tr_4}, R_{\tr_3}, W_{\tr_3}$ in $\hist_{R|W}^c$, shown in Figure~\ref{si_p_proof:3b}, would be related as follows: (1) since $\tup{\tr_1, \tr_3} \in \wro[\xvar]$ and $\tup{\tr_1, \tr_2} \in \co$, we have that $\tup{W_{\tr_1}, R_{\tr_3}} \in \wro[\xvar]'$ and $\tup{W_{\tr_1}, W_{\tr_2}} \in \co'$, (2) since $\co'$ satisfies $\axser$, then $\tup{R_{\tr_3}, W_{\tr_2}} \in \co'$, (3) $\tup{\tr_2, \tr_4} \in \co^*$ implies $\tup{W_{\tr_2}, W_{\tr_4}} \in \co'^*$, (4) $\tup{\tr_4, \tr_3} \in \co$ and $\tr_4, \tr_3$ write on a common variable $y$ implies that $\tup{W_{\tr_4}, W_{\tr_3}} \in \co'$, $\tup{R_{\tr_3}, W_{\tr_3}} \in \wro'[\xvar_{3,4}]$, and $W_{\tr_4}$ writes the variable $\xvar_{3,4}$, which by the serializability axiom, implies $\tup{W_{\tr_4}, R_{\tr_3}} \in \co'$. Therefore, $\co'$ contains a cycle, a contradiction to the hypthesis.
\end{proof}

 \begin{figure}[t]
  \centering
  \begin{subfigure}{.20\textwidth}
   \resizebox{\textwidth}{!}{
    \begin{tikzpicture}[->,>=stealth',shorten >=1pt,auto,node distance=4cm,
      semithick, transform shape]
     \node[transaction state, text=red] at (0,0)       (t_1)           {$\tr_1$};
     \node[transaction state, label={below:{$\writeVar{ }{\yvar}$}}] at (2,0)       (t_3)           {$\tr_3$};
     \node[transaction state, text=red,label={above:\textcolor{red}{$\writeVar{ }{\xvar}$}}] at (-0.5,1.5) (t_2) {$\tr_2$};
     \node[transaction state, label={above:{$\writeVar{ }{\yvar}$}}] at (1.5,1.5) (t_4) {$\tr_4$};
     \path (t_1) edge[red] node {$\wro[\xvar]$} (t_3);
     \path (t_2) edge node {$\co^*$} (t_4);
     \path (t_4) edge node {$\co$} (t_3);
     \path (t_1) edge[left] node {$\co$} (t_2);
    \end{tikzpicture}
   }
   \caption{Prefix violation in $\tup{\hist, \co}$}
   \label{si_p_proof:3a}
  \end{subfigure}
  \begin{subfigure}{.20\textwidth}
   \resizebox{\textwidth}{!}{
    \begin{tikzpicture}[->,>=stealth',shorten >=1pt,auto,node distance=4cm,
      semithick, transform shape]
     \node[transaction state] at (0,0)       (t_1)           {$W_{\tr_1}$};
     \node[transaction state] at (2,0)       (t_3)           {$R_{\tr_3}$};
     \node[transaction state, label={below:{$\writeVar{ }{\yvar}$}}] at (3,0)       (t_3_w)           {$W_{\tr_3}$};
     \node[transaction state,label={above:$\writeVar{ }{\xvar}$}] at (-0.5,1.5) (t_2) {$W_{\tr_2}$};
     \node[transaction state, label={above:{$\writeVar{ }{\yvar}$}}] at (1.5,1.5) (t_4) {$W_{\tr_4}$};
     \path (t_1) edge node[below] {$\wro[\xvar]$} (t_3);
     \path (t_2) edge[red] node {$\co'^*$} (t_4);
     \path (t_4) edge[red, left] node {$\co'$} (t_3);
     \path (t_4) edge node {$\co'$} (t_3_w);
     \path (t_3) edge[below] node {$\co'$} (t_3_w);
     \path (t_1) edge[left] node {$\co'$} (t_2);
     \path (t_3) edge[red, below left] node[pos=0.4] {$\co'$} (t_2);
    \end{tikzpicture}
   }
   \caption{Cycle in $\tup{\hist', \co'_2}$}
   \label{si_p_proof:3b}
  \end{subfigure}
  
  \caption{$\forall \co'. \exists \co.\tup{\hist', \co'} \models \axser \Rightarrow \tup{\hist, \co} \models \axpre \land \axconf$}
  \label{si_p_proof:3}
 \end{figure}


\section{Equivalence between our definitions and the formalization in~\cite{DBLP:conf/concur/Cerone0G15}}\label{app:gotsman}


\citet{DBLP:conf/concur/Cerone0G15} define the criteria RA, CC, PC, SI, and SER using a notion of history that contains only the session order $\so$. Such a history satisfies one of these criteria in their formalization if there exists a visibility relation $\VIS$ between transactions, and a commit order $\co$ extending the visibility relation that satisfy certain axioms.

\begin{table}
 \centering
 \resizebox{.48\textwidth}{!}{
  \begin{tabular}{ |c c|c c|}
   \hline
   \multicolumn{3}{|c}{
    \shortstack{
     $\forall (O, \textsf{po}) \in \tr, \forall o \in O, o = read(x, n) \land$                    \\
     $\left\{o' \in \textsf{po}^{-1}(o) \mid o' = \_(x, \_) \right\} \neq \emptyset \Rightarrow$          \\
     $max_{\textsf{po}} \left(\left\{o' \in \textsf{po}^{-1}(o) \mid o' = \_(x, \_) \right\}\right) = \_(x, n)$
    }
   }                         & {\textsc{(Int)}}                                                           \\
   \hline
   \multicolumn{3}{|c}{
    \shortstack{
     $\forall \tr = (O, \textsf{po}) \in \tr, \forall x, \tr \models \texttt{read}(x, n) \Rightarrow$ \\ $\left(\left(\VIS^{-1}(\tr) \cap \textsf{Write}_x\right) = \emptyset \land n = 0 \right) \lor$ \\
     $\left(max_{\CO} \left(\VIS^{-1}(\tr) \cap \textsf{Write}_x\right) \models \texttt{write}(x, n)\right)$
    }
   }                         & {\textsc{(Ext)}}                                                           \\
   \hline
   $\SO \subseteq \VIS$      & \textsc{(Session)}      & \VIS is transitive & \textsc{(TransVis)}         \\
   \hline
   $\CO\circ \VIS \subseteq \VIS$ & \textsc{(Prefix)}       & $\VIS = \CO$       & \textsc{(TotalVis)}         \\
   \hline
   \multicolumn{3}{|c}{
    \shortstack{
     $\forall \tr, \tr' \in T, \forall x, (\tr, \tr' \in \textsf{Write}_x \land \tr \neq \tr') $               \\
     $\Rightarrow (\tr \xrightarrow{\VIS} \tr' \lor \tr' \xrightarrow{\VIS} \tr)$
   }  }                      & {\textsc{(NoConflict)}}                                                    \\
   \hline
  \end{tabular}
 }
 \caption{Consistency axioms for a history $\hist_{\so}=\tup{T,\so}$, visibility relation \textcolor{red}{\sf{vis}}, and commit order \textcolor{red}{\sf{co}}.}
 \label{weakconsistency_gotsman:1}
\end{table}

The axioms used by \citet{DBLP:conf/concur/Cerone0G15} are given in Table~\ref{weakconsistency_gotsman:1}. 

\textsc{Int} is an axiom which enforces that if there is a read operation $O$ on variable $\xvar$ in a transaction and there is a read or write operation on $\xvar$ before $O$ \ie $\left\{o' \in \textsf{po}^{-1}(o) \mid o' = \_(x, \_) \right\} \neq \emptyset$, then the latest operation on $\xvar$ before $O$ must read or write the value read by $O$ \ie $max_{\textsf{po}} \left(\left\{o' \in \textsf{po}^{-1}(o) \mid o' = \_(x, \_) \right\}\right) = \_(x, n)$.

\textsc{Ext} is an axiom which enforces that if a transaction $\tr$ has a operation $O$ which reads a variable $\xvar$ and which is not preceded by a write on $\xvar$, denoted by $\tr \models \texttt{read}(x, n)$, then either:
\begin{itemize}
	\item it read the initial value 0, and there is no transaction writing on $\xvar$ visible to $\tr$, i.e., $\VIS^{-1}(\tr) \cap \textsf{Write}_x=\emptyset$, or
	\item it read from a write of another transaction $\tr'$ which writes to variable $\xvar$ and $\tr'$ is the last one in the commit order in the visibility set of $\tr$, i.e., $\VIS^{-1}(\tr)$, that writes on $\xvar$. 
\end{itemize}
	Since, the writes in our history have unique values, this is equivalent to, for all $\tr_1, \tr_3$ if $\tr_1 \models write(\xvar, n)$ and $\tr_3 \models read(\xvar, n)$, then for any $\tr_2 \in \vis^{-1}(\tr_3)$ where $\tr_2 \neq \tr_1$ and $\tr_2 \models write(\xvar, \_)$, $\tr_2$ can not be after $\tr_1$ in $\co$ order (\ie $\tup{\tr_2, \tr_1} \in \co$ since $\co$ is total), because $\tr_1$ must be the maximal among the transactions that wrote $\xvar$. We illustrated the axiom in Figure~\ref{ext_fig}, which is very similar to our definition in Figure~\ref{consistency_defs}.

\begin{figure}
 \begin{subfigure}[t]{.3\textwidth}
  \centering
  \begin{tikzpicture}[->,>=stealth',shorten >=1pt,auto,node distance=4cm,
    semithick, transform shape]
   \node[transaction state, text=red] at (0,0)       (t_1)           {$\tr_1$};
   \node[transaction state] at (2,0)       (t_3)           {$\tr_3$};
   \node[transaction state, text=red,label={above:\textcolor{red}{$\writeVar{ }{\xvar}$}}] at (-.5,1.5) (t_2) {$\tr_2$};
   \path (t_1) edge[red] node {$\wro[\xvar]$} (t_3);
   \path (t_2) edge[bend left] node {$\vis$} (t_3);
   \path (t_2) edge[left,double] node {$\co$} (t_1);
  \end{tikzpicture}
  \parbox{\textwidth}{
   $\forall \xvar,\ \forall \tr_1,\tr_2,\ \forall \tr_3.$
   
   \hspace{4mm}$\tup{\tr_1,\tr_3}\in \wro[\xvar] \land \writeVar{\tr_2}{\xvar}\ \land$ 
   
   \hspace{9mm}$\tup{\tr_2,\tr_3}\in\vis$
   
   \hspace{14mm}$\implies \tup{\tr_2,\tr_1}\in\co$
  }
  
  \caption{$\mathsf{Int} \land \mathsf{Ext}$}
  \label{ext_fig}
 \end{subfigure}
 \caption{}
 \label{}
\end{figure}

The definitions for \textsc{Session}, \textsc{TransVis}, \textsc{Prefix}, \textsc{TotalVis} are straightforward. \textsc{NoConflict} enforces $\vis$ to totally order the transactions those write on same variable.

We will show in our axioms definition figures, the path between $\tr_2$ and $\tr_3$ is essentially a $\vis$ relation in $\hist_{\so}$.

\begin{table}
 \centering
 \resizebox{.44\textwidth}{!}{
  \begin{tabular}{|l|l|}
   \hline
   Consistency model  & Axioms                                                                                                          \\
   \hline
   Read atomic        & \textsc{Int} $\land$ \textsc{Ext} $\land$ \textsc{Session}                                                      \\
   \hline
   Causal consistency & \textsc{Int} $\land$ \textsc{Ext} $\land$ \textsc{Session} $\land$ \textsc{TransVis}                            \\
   \hline
   Prefix consistency & \textsc{Int} $\land$ \textsc{Ext} $\land$ \textsc{Session} $\land$  \textsc{Prefix}                             \\
   \hline
   Snapshot isolation & \textsc{Int} $\land$ \textsc{Ext} $\land$ \textsc{Session} $\land$  \textsc{Prefix} $\land$ \textsc{NoConflict} \\
   \hline
   Serializability    & \textsc{Int} $\land$ \textsc{Ext} $\land$ \textsc{TotalVis}                                                     \\
   \hline
  \end{tabular}
 }
 \caption{Consistency model definitions in \citet{DBLP:conf/concur/Cerone0G15}.}
 \label{weakconsistency_gotsman:2}
\end{table}

The definitions of RA, CC, PC, SI, and SER in \citet{DBLP:conf/concur/Cerone0G15} are given in Table~\ref{weakconsistency_gotsman:2}. Next, we show the equivalence between these definitions and our definitions in Figure~\ref{consistency_defs} on histories where every value is written at most once. For a history $\hist = \tup{T, \wro, \so}$ as in our framework, $\hist_{\so} = \tup{T, \so}$.

\begin{itemize}
 \item 
       We show that \textsc{Int} $\land$ \textsc{Ext} $\land$ \textsc{Session} $\equiv$ \textsc{Read Atomic}
       \begin{itemize}

        \item For a history $\hist = \tup{T, \wro, \so}$, if $\hist_{\so}$ satisfies \textsc{Int} $\land$ \textsc{Ext} $\land$ \textsc{Session} for some $\vis$ and $\co$, we show that $\hist$ satisfies \textsc{Read Atomic} for the same $\co$. If it does not, then, there exists $\tr_1, \tr_2, \tr_3$ such that $\tup{\tr_1, \tr_3} \in \wro[\xvar]$, $\tr_2$ writes on $\xvar$, $\tup{\tr_2,\tr_3} \in \wro \cup \so$ and $\tup{\tr_3, \tr_1} \in \co$. But, since $\vis$ and $\co$ satisfies \textsc{Int} $\land$ \textsc{Ext} $\land$ \textsc{Session}, $\vis \supseteq \wro \cup \so$. Hence, $\tup{\tr_2,\tr_3} \in \vis$. Therefore, $\tup{\hist, \vis, \so}$ violates the definition in figure \ref{ext_fig}, which contradicts the fact $\tup{\hist_{\so}, \vis, \so}$ satisfies \textsc{Int} $\land$ \textsc{Ext}.
              
        \item For the other direction, we have a commit order $\co$ for $\hist$ which satisfies \textsc{Read Atomic}. We show that there exists a visibility relation $\vis$ which together with the same $\co$ and $\hist_{\so}$ satisfies \textsc{Int} $\land$ \textsc{Ext} $\land$ \textsc{Session}. Let $\vis = \{ \tup{\tr_1, \tr_2} | \tup{\tr_1, \tr_2} \in \wro \cup \so \}$.
              
              \begin{itemize}
               \item First of all, by definition, the internal reads in our transactions are consistent to the last read or write before them. Only thing is left, to show that the first reads of a variable $\xvar$ before a write to $\xvar$ inside a variable is also reading from a unique transaction.
               \item If $o_1, o_1$ are two reads on $\xvar$ and $\tup{\tr_1, o_1}$, $\tup{\tr_2, o_2}$ $\in$ $\wro$, then by \textsc{Read Atomic} axiom, we have $\tup{\tr_1, \tr_2}$, $\tup{\tr_2, \tr_1}$ $\in \co$. Therefore, the reads to $\xvar$ in a transaction before the first write to $\xvar$ are from same transaction.
               \item We can not have a violation of \textsc{Int} and \textsc{Ext} because we defined $\vis$ as $\{ \tup{\tr_1, \tr_2} | \tup{\tr_1, \tr_2} \in \wro \cup \so \}$. So any violation of \textsc{Ext} will be a violation of \textsc{Read Atomic}
               \item $\vis$ satisfy \textsc{Session}, since $\so \subseteq \vis$
              \end{itemize}
              
              
       \end{itemize}
 \item We show that \textsc{Int} $\land$ \textsc{Ext} $\land$ \textsc{Session} $\land$ \textsc{TransVis} $\equiv$ \textsc{Causal Consistency}.
       \begin{itemize}
        \item If $\hist$ satisfies \textsc{Causal Consistency}, there exists a $\co$ for $\hist$. We define $\vis = (\wro \cup \so)^+$.
              $\vis \supseteq \wro \cup \so$, therefore as previous case, $\vis$ satisfies \textsc{Int} $\land$ \textsc{Ext} $\land$ \textsc{Session} and by construction of $\vis$ it is a transitive closure therefore it also satisfies \textsc{TransVis}. If there is any \textsc{Int} and \textsc{Ext} violation then there exist $\tr_1, \tr_2, \tr_3$ such that $\tup{\tr_1, \tr_3} \in \wro[\xvar]$, $\tr_2$ writes on $\xvar$, $\tup{\tr_1, \tr_2} \in \co$ and $\tup{\tr_2, \tr_3} \in \vis = (\wro \cup \so)^+$ which is a violation of \textsf{Casual Consistency}. This is a contradiction. 
              
        \item If there exists a $\co$ for $\hist_{\so}$ which satisfies \textsc{Int} $\land$ \textsc{Ext} $\land$ \textsc{Session} $\land$ \textsc{TransVis}, we show, the same $\co$ satisfies \textsc{Causal Consistency}. If it does not, we have $\tr_1, \tr_2, \tr_3$ such that $\tup{\tr_1, \tr_3} \in \wro[\xvar]$ $\tr_2$ writes on $\xvar$ and $\tup{\tr_2, \tr_3} \in (\wro \cup \so)^+$ and $\tup{\tr_1, \tr_2} \in \co$. But, by \textsc{Int} $\land$ \textsc{Ext}, $\wro \subseteq \vis$ and by \textsc{Session} $\so \subseteq \vis$ and by \textsc{TransVis} we have $\vis^+ \subseteq \vis$ which implies $(\wro \cup \so)^+ \subseteq \vis$. There fore $\tup{\tr_2,tr_3} \in \vis$. So $\tr_1, \tr_2, \tr_3$ violates \textsc{Int} and \textsc{Ext} axiom for $\co$, which is a contradiction.
       \end{itemize}
       
 \item We show that \textsc{Int} $\land$ \textsc{Ext} $\land$ \textsc{Session} $\land$ \textsc{Prefix} $\equiv$ \textsc{Prefix consistency}.
       \begin{itemize}
        \item If $\hist$ satisfies \textsc{Prefix consistency}, there exists a $\co$ for $\hist$. We define $\vis = \co^* \circ (\wro \cup \so)$.
              $\vis \supseteq \wro \cup \so$ therefore as previous case, $\vis$ satisfies \textsc{Int} $\land$ \textsc{Ext} $\land$ \textsc{Session}.
              Assume $\tup{\tr_1, \tr_2} \in \co \circ \vis$. 
              Then there exists $\tr_3$ such that $\tup{\tr_1, \tr_3} \in \co$ and $\tup{\tr_3, \tr_2} \in \vis = \co^* \circ (\wro \cup \so)$. Therefore, there exists $\tr_4$ such that either $\tr_3 = \tr_4$ or $\tup{\tr_3, \tr_4} \in \co$ and $\tup{\tr_4, \tr_2} \in (\wro \cup \so)$. Since, $\co$ is a total order, then $\tup{\tr_1, \tr_3} \in \co$ and either $\tr_3 = \tr_4$ or $\tup{\tr_3, \tr_4}$ imply $\tup{\tr_1, \tr_4} \in \co$. We have $\tup{\tr_4, \tr_2} \in (\wro \cup \so)$. Therefore, $\tup{\tr_1, \tr_2} \in \co \circ (\wro \cup \so) \subseteq \vis$. Therefore, $\co \circ \vis \subseteq \vis$ which is \textsc{Prefix} axiom.
              
        \item If $\hist_{\so}$ satisfies \textsc{Int} $\land$ \textsc{Ext} $\land$ \textsc{Session} $\land$ \textsc{Prefix}, then there exists a $\co$ for which the axioms satisfy. The same $\co$ will satisfy \textsc{Read Atomic} axiom for $\hist$. So if we have a violation in \textsc{Prefix consistency}, then there exist $\tr_1, \tr_2, \tr_3$ such that $\tup{\tr_1, \tr_3} \in \wro[\xvar]$, $\tr_2$ writes on $\xvar$ and $\tup{\tr_2, \tr_3} \in \co^* \circ (\wro \cup \so)$ and $\tup{\tr_1, \tr_2} \in \co$. If $\tup{\tr_2, \tr_3} \in (\wro \cup \so)$, then it is violation in \textsc{Read Atomic}, therefore, $\tup{\tr_2, \tr_3} \in \co^+ \circ (\wro \cup \so) = \co \circ (\wro \cup \so)$ because $\co$ is transitive. But by \textsc{Int} $\land$ \textsc{Ext} $\land$ \textsc{Session}, $(\wro \cup \so) \subseteq \vis$, therefore $\co \circ (\wro \cup \so) \subseteq \co \circ \vis \subseteq \vis$. Then $\tr_1, \tr_2, \tr_3$ violates \textsc{Ext}.

       \end{itemize}
       
       
 \item We have to show, \textsc{Int} $\land$ \textsc{Ext} $\land$ \textsc{Session} $\land$ \textsc{Prefix} $\land$ \textsc{NoConflict} $\equiv$ \textsc{Snapshot isolation}.
       
       \begin{itemize}
        \item If $\hist$ satisfies \textsc{Snapshot isolation}, it also satisfies \textsc{Prefix consistency}. We define $\vis = (\co^* \circ (\wro \cup \so)) \cup (\co^* \circ \{\tup{\tr_1, \tr_2} | \tup{\tr_1, \tr_2} \in \co, \exists \xvar. \tr_1, \tr_2 \text{write on } \xvar \}$. Clearly, $\vis$ contains the the relations for \textsc{Int} $\land$ \textsc{Ext} $\land$ \textsc{Session} $\land$ \textsc{Prefix} proof, therefore, $\vis$ satisfies them. Also, $\vis$ satisfies \textsc{NoConflict} by definition since $\co$ is a total order. Any violation in \textsc{Int} and \textsc{Ext} will imply there is a $\tr_1, \tr_2, \tr_3$ such that $\tup{\tr_1, \tr_3} \in \wro[\xvar]$, $\tr_2$ writes on $\xvar$ and $\tup{\tr_2, \tr_3} \in \vis$ and $\tup{\tr_1, \tr_2} \in \co$. But by definition of $\vis$ we will have violations in either \textsc{Prefix} or \textsf{Conflict} axioms of \textsc{Snapshot isolation} model.
              
        \item If $\hist_{\so}$ satisfies \textsc{Int} $\land$ \textsc{Ext} $\land$ \textsc{Session} $\land$ \textsc{Prefix} $\land$ \textsc{NoConflict}, then there exists a $\co$ for which the axioms satisfy. The same $\co$ will satisfy \textsc{Prefix Consistency} axiom for $\hist$. So if we have a violation in \textsc{Snapshot isolation}, it is a violation of \textsf{Conflict} axiom, \ie there exist $\tr_1, \tr_2, \tr_3$ such that $\tup{\tr_1, \tr_3} \in \wro[\xvar]$ $\tr_2$ writes on $\xvar$ and $\tup{\tr_2, \tr_3} \in \co^* \circ \{\tup{\tr_1, \tr_2} | \exists \xvar, \tr_1, \tr_2 \text{write on } \xvar\}$ and $\tup{\tr_1, \tr_2} \in \co$. But $\{\tup{\tr_1, \tr_2} | \exists \xvar, \tr_1, \tr_2 \text{write on } \xvar\} \subseteq \vis$ by \textsc{NoConflict} and $\co \circ \vis \subseteq \vis$ by \textsc{Prefix}. Hence, $\tup{\tr_2, \tr_3} \in \vis$. Then $\tr_1, \tr_2, \tr_3$ violates \textsc{Ext}.
       \end{itemize}
       
       
 \item We show that \textsc{Int} $\land$ \textsc{Ext} $\land$ \textsc{TotalVis} $\equiv$ \textsc{Serialization}.
       
       \begin{itemize}
        \item If $\hist$ satisfies \textsc{Serialization}, there exists $\co$ for $\hist$ that satisfy \textsc{Serialization}. We define $\vis = \co$. Clearly it satisfies \textsc{TotalVis} because $\co$ is total. We have any violation in \textsc{Int} and \textsc{Ext}, that will imply we have $\tr_1, \tr_2,\tr_3$ such that  $\tr_1, \tr_2, \tr_3$ such that $\tup{\tr_1, \tr_3} \in \vis$ $\tr_2$ writes on $\xvar$ and $\tup{\tr_1, \tr_2} \in \co$. But since $\vis = \co$, $\tr_1, \tr_2, \tr_3$ will violate \textsc{Serialization} axiom, which is a contradiction.
              
        \item If $\hist_{\so}$ satisfies \textsc{Int} $\land$ \textsc{Ext} $\land$ \textsc{TotalVis}, then there exists a $\co$ for which the axioms satisfy. If for same $\co$, we have a violation of \textsc{Serialization}, then there exist $\tr_1, \tr_2, \tr_3$ such that $\tup{\tr_1, \tr_3} \in \wro[\xvar]$ $\tr_2$ writes on $\xvar$ and $\tup{\tr_2, \tr_3} \in \co$ and $\tup{\tr_1, \tr_2} \in \co$. But $\co = \vis$, so then we have a \textsc{Int} and \textsc{Ext} violation in $\hist_{\so}$ for $\tr_1, \tr_2, \tr_3$
       \end{itemize}
       
\end{itemize}

\end{document}